\newtheorem{theorem}{Theorem}
\theoremstyle{plain}
\newtheorem{corollary}{Corollary}
\newtheorem{lemma}{Lemma}
\newtheorem{remark}{Remark}
\numberwithin{equation}{section}
\let\pdfoutput=\undefined\fi
\begin{document}
\title[Non-Archimedean Reaction-Ultradiffusion Equations]{Non-Archimedean Reaction-Ultradiffusion Equations and Complex Hierarchic Systems}
\author{W. A. Z\'{u}\~{n}iga-Galindo}
\address{Centro de Investigaci\'{o}n y de Estudios Avanzados del Instituto
Polit\'{e}cnico Nacional\\
Departamento de Matem\'{a}ticas, Unidad Quer\'{e}taro\\
Libramiento Norponiente \#2000, Fracc. Real de Juriquilla. Santiago de
Quer\'{e}taro, Qro. 76230\\
M\'{e}xico.}
\email{wazuniga@math.cinvestav.edu.mx}
\thanks{The author was partially supported by Conacyt Grant No. 250845.}
\subjclass[2000]{Primary 80A22, 45K05; Secondary 46S10}
\keywords{Phase separation, energy landscapes, free-energy functionals,
reaction-diffusion equations, ultrametricity, ultrametric spaces, complex
systems, $p$-adic analysis.}

\begin{abstract}
We initiate the study of non-Archimedean reaction-ultradiffusion equations and
their connections with models of complex hierarchic systems. From a
mathematical perspective, the equations studied here are the $p$-adic
counterpart of the integro-differential models for phase separation introduced
by Bates and Chmaj. Our equations are also generalizations of the
ultradiffusion equations on trees studied in the 80s by Ogielski, Stein,
Bachas, Huberman, among others, and also generalizations of the master
equations of the Avetisov et al. models, which describe certain complex
hierarchic systems. From a physical perspective, our equations are gradient
flows of non-Archimedean free energy functionals and their solutions describe
the macroscopic density profile of a bistable material whose space of states
has an ultrametric structure. Some of our results are $p$-adic analogs of some
well-known results in the Archimedean \ setting, however, the mechanism of
diffusion is completely different due to the fact that it occurs in an
ultrametric space.

\end{abstract}
\maketitle

\section{Introduction}

In the middle of the 80s the idea of using ultrametric spaces to describe the
states of complex biological systems, which naturally possess a hierarchical
structure, emerged in the works of Frauenfelder, Parisi, Stein, among others,
see e.g. \cite{Dra-Kh-K-V}, \cite{Fraunfelder et al}, \cite{M-P-V},
\cite{R-T-V}. A central paradigm in physics of complex systems (for instance
proteins) asserts that the dynamics of such systems can be modeled as a random
walk in the energy landscape of the system, see e.g. \cite{Fraunfelder et al},
\cite{Kozyrev SV}, and the references therein.\ In protein physics, it is
regarded as one of the most profound ideas put forward to explain the nature
of distinctive life attributes. Typically these landscapes have a huge number
of local minima. It is clear that a description of the dynamics on such
landscapes requires an adequate approximation. The interbasin kinetics method
offers an acceptable solution to this problem. The idea is to study the
kinetics generated \ by transitions between groups of states (basins). In this
framework, the minimal basins correspond to local minima of energy, and the
large basins (superbasins or union of basins) have a hierarchical structure.
By using this approach an energy landscape is approximated by an ultrametric
space (called a disconnectivity graph) and a function on this space describing
the distribution of the activation barriers, see e.g. \cite{Becker et al}. An
ultrametric space $(M,d)$ is a metric space $M$ with a distance satisfying
$d(A,B)\leq\max\left\{  d\left(  A,C\right)  ,d\left(  B,C\right)  \right\}  $
for any three points $A$, $B$, $C$ in $M$. Mezard, Parisi, Sourlas and
Virasoro discovered, in the context of the mean-field theory of spin glasses,
that the space of states of such systems has an ultrametric structure, see
e.g. \cite{M-P-V}, \cite{R-T-V}. The rooted trees have a natural structure of
ultrametric space, and \ the disconnectivity graph above mentioned is an
example of a such space.

After that, a model of hierarchical dynamics is constructed, and by using
\ the postulates of the interbasin kinetics, one gets that the transitions
between basins are described by the following equations:%
\begin{equation}
\frac{\partial u\left(  i,t\right)  }{\partial t}=\sum_{j\neq i}T\left(
j,i\right)  v(j)u\left(  j,t\right)  -\sum_{j\neq i}T\left(  i,j\right)
v\left(  i\right)  u\left(  i,t\right)  ,\text{ }i=1,\ldots,N, \label{Eq_0}%
\end{equation}
where the indices $i$, $j$\ number the states of the system (which correspond
to local minima of energy), $T\left(  i,j\right)  \geq0$\ is the probability
per unit time (or transition rate) of a transition from $i$\ to $j$, and the
$v(j)>0$\ are the basin volumes.\ At this point it is relevant to mention that
equations of type (\ref{Eq_0}) are a generalization of the ultradiffusion
equations on trees studied intensively in the 80s, see e.g. \cite{Bachas-Hu}
and the references therein, and that these equations appeared in models of
protein folding see e.g. \cite{Zwanzig}.

Along this article $p$ will denote a fixed prime number. The field of $p-$adic
numbers $\mathbb{Q}_{p}$\ is defined as the completion of the field of
rational numbers $\mathbb{Q}$\ with respect to the $p-$adic norm $|\cdot|_{p}%
$. See Section 2 for the essential ideas about $p$-adic analysis. Here, we
just mention that in the $p$-adic norm, the integers highly divisible by $p$
are small. For instance, if $p=2$, then $|2^{k}|_{2}=2^{-k}$, for a positive
integer $k$, while for $p=3$, $|2^{k}|_{3}=1$. A such norm satisfies
$|x+y|_{p}\leq\max\left\{  |x|_{p},|y|_{p}\right\}  $, and the metric space
$\left(  \mathbb{Q}_{p},|\cdot|_{p}\right)  $\ is a complete ultrametric
space. This space has a natural hierarchical structure, which is very useful
in physical models involving hierarchies. As a topological space
$\mathbb{Q}_{p}$\ is homeomorphic to a Cantor-like subset of the real line,
i.e. $\mathbb{Q}_{p}$\ is a fractal. The $p-$adic norm can be extended to
$\mathbb{Q}_{p}^{n}$\ by taking for $x=(x_{1},\cdots,x_{n})$, $||x||_{p}%
:=\max_{i}|x_{i}|_{p}$.

Around 2000, Avetisov et al. discovered, among several things, that under
suitable physical and mathematical hypotheses, the ultradiffusion equations on
trees studied by Ogielski, Stain, Bachas and Huberman, among several others,
see e.g. \cite{Bachas-Hu}, have a `continuous $p$-adic limit'. We explain
briefly these ideas following \cite{O-S} and \cite{Av-8}. The states of system
are labeled by the numbers \ $0$, $1$, \ldots,$2^{n-1}-1$, for some $n$. By
representing each number in base $2$, we get a set of $2^{n}$ binary vectors
of lenght $n$. This set forms a rooted tree with $n+1$ levels and $2^{n}$
branches (states). \ An ultrametric distance $d$ between two branches is given
by the number of levels that it is necessary to descend up to the branches
merge. Now, a stochastic dynamical system on the rooted tree is introduced by
using a random walk (on the top level) to model the transition between states.
\ We denote by $P_{i}(t)$ the probability of occupying the state $i$ at the
time $t$, and set $\boldsymbol{P}(t)=\left[  P_{0}(t),\ldots,P_{2^{n}%
-1}(t)\right]  ^{T}$. We now assume that the probability per unit of time of
jumping from state $i$ into state $j$ is a function \ of the ultrametric
distance, \ in this way we construct a matrix $Q=\left[  f\left(  d\left(
i,j\right)  \right)  \right]  $, the Parisi matrix of the system, and the
dynamics of the system is controlled by \ the master equation
\begin{equation}
\frac{d\boldsymbol{P}(t)}{dt}=Q\boldsymbol{P}(t).\label{master_eq_O_S}%
\end{equation}
See \cite{O-S} for further details. On the other hand, in the models of
spontaneous breaking of the replica symmetry are used for the investigation of
spin glasses, a replica matric $Q=\left[  Q_{ab}\right]  $ of size $n\times n$
\ occurs. This matrix is constructed as follows: consider the set of integers
$m_{i}$, with $i\in\left\{  1,\ldots,N\right\}  $, where $\frac{m_{i}}%
{m_{i-1}}$ are integers for $i>1$ and $\frac{n}{m_{i}}$ are also integers. The
entries of the replica matrix are defined as follows:%
\[
Q_{aa}=0\text{, and for }a\neq b\text{,\ }Q_{ab}=q_{i}\text{, with }\left[
\frac{a}{m_{i}}\right]  \neq\left[  \frac{b}{m_{i}}\right]  \text{
and\ }\left[  \frac{a}{m_{i+1}}\right]  =\left[  \frac{b}{m_{i+1}}\right]  ,
\]
here $\left[  \cdot\right]  $ denotes the integer part function. To obtain a
$p$-adic parametrization of the replica matrix (the Parisi matrix), we use the
set $\left\{  1,\ldots,p^{N}\right\}  $ (instead of $\left\{  1,\ldots
,N\right\}  $) and the mapping $l:\left\{  1,\ldots,p^{N}\right\}  \rightarrow
p^{-N}\mathbb{Z}/\mathbb{Z}$ defined as $l^{-1}\left(  \sum_{j=1}^{N}%
x_{j}p^{-j}\right)  \allowbreak=1+p^{-1}\sum_{j=1}^{N}x_{j}p^{j}$, where
$x_{j}$ are $p$-adic digits. In \cite{Av-8}, Avetisov et al. established that
the replica matrix can be parametrized as $Q_{ab}=f\left(  \left\vert
l(a)-l(b)\right\vert _{p}\right)  $ where $f$ \ is a function such that
$f(p^{i})=q_{i\text{ }}$ and $f(0)=0$, and that in `the limit $N$ tends to
infinity,' master equation (\ref{master_eq_O_S}) becomes the $p$-adic heat
equation:
\begin{equation}
\frac{\partial u\left(  x,t\right)  }{\partial t}+D_{x}^{\alpha}u\left(
x,t\right)  =0\text{, \ }x\in\mathbb{Q}_{p}\text{, }t\geq
0,\label{heat_equation}%
\end{equation}
where $D_{x}^{\alpha}$ is the Vladimirov operator, and the matrix $\left[
Q_{ab}\right]  $ represents this operator in a convenient finite dimensional
space, i.e. $\left[  Q_{ab}\right]  $ is a discretization of the Vladimirov
operator. For further details, the reader may consult \cite{Av-8},
\cite{KKZuniga}, \cite{Zuniga-LNM-2016}. \ The terminology `$p$-adic heat
equation' means that a Markov process is attached to the fundamental solution
of (\ref{heat_equation}), consequently, we can consider \ (\ref{heat_equation}%
) as a $p$-adic analogue of the classical heat equation. For an in-depth
discussion of this analogy, the reader may consult \cite{KKZuniga},
\cite{Zuniga-LNM-2016} and the references therein. On the other hand, many
different ultrametrics can be defined on a rooted tree, at first sight, the
above limit process works only for ultrametrics of the form $f\circ\left\vert
\cdot\right\vert _{p}$, however, this matter has not been investigated yet.

We now come back to equation (\ref{Eq_0}) to explain how the above discussed
ideas fit into it. We rewrite (\ref{Eq_0}) as follows:\
\begin{equation}
\frac{\partial}{\partial t}\left[
\begin{array}
[c]{c}%
u\left(  1,t\right) \\
\vdots\\
u\left(  j,t\right) \\
\vdots\\
u\left(  N,t\right)
\end{array}
\right]  =\left[  W(j,i)-\delta_{ii}W\left(  i,i\right)  \right]  _{N\times
N}\left[
\begin{array}
[c]{c}%
u\left(  1,t\right) \\
\vdots\\
u\left(  j,t\right) \\
\vdots\\
u\left(  N,t\right)
\end{array}
\right]  , \label{Eq_0_A}%
\end{equation}
where $W(j,i)=T(j,i)v(j)$, for $j\neq i$, and $W\left(  i,i\right)  =v\left(
i\right)  \sum_{j\neq i}T\left(  i,j\right)  $. We assume that the space of
states $\left\{  1,\ldots,N\right\}  $ have a hierarchical structure, then
$\left[  W(j,i)-\delta_{ii}W\left(  i,i\right)  \right]  $ is a Parisi-type
matrix. These matrices appear naturally in models of complex systems such as
spin glasses, see e.g. \cite{Bachas-Hu}, \cite{M-P-V}, \cite{O-S},
\cite{R-T-V}, \cite{KKZuniga} and the references therein. Under mild
hypotheses, these Parisi matrices can be parametrized by $p$-adic numbers, and
then the master equation (\ref{Eq_0_A}) becomes a discretization of $p$-adic
ultradiffusion equation of type (\ref{Eq_1}). Consequently, it is completely
natural to propose that in the limit when $N$ tends to infinity the master
equation (\ref{Eq_0_A}) becomes a $p$-adic ultradiffusion equation, see e.g.
\cite[Chapters 4 ,8]{KKZuniga}, \cite[Chapter 2, 3]{Zuniga-LNM-2016} and the
references therein.

The `$p$-adic limit' of master equations (\ref{Eq_0_A}) have the form:%
\begin{equation}
\frac{\partial u\left(  x,t\right)  }{\partial t}=\int_{\mathbb{Q}_{p}^{n}%
}J\left(  \left\Vert x-y\right\Vert _{p}\right)  \left[  u\left(  y,t\right)
-u\left(  x,t\right)  \right]  d^{n}y, \label{Eq_1}%
\end{equation}
$x\in\mathbb{Q}_{p}^{n},t\geq0$. The function $u(x,t):\mathbb{Q}_{p}^{n}%
\times\mathbb{R}_{+}\rightarrow\mathbb{R}_{+}$\ is a probability density
distribution, so that $\int_{B}u\left(  x,t\right)  d^{n}x$\ is the
probability of finding the system in a domain $B\subset\mathbb{Q}_{p}^{n}$\ at
the instant $t$. The function $J\left(  \left\Vert x-y\right\Vert _{p}\right)
:\mathbb{Q}_{p}^{n}\times\mathbb{Q}_{p}^{n}\rightarrow\mathbb{R}_{+}$\ is the
probability of the transition \ per unit of time (or transition rate) from
state $y$\ to state $x$.\ It is known that for many $J$s, equations of type
(\ref{Eq_1}) are ultradiffusion equations i.e. they are $p$-adic counterparts
of the classical heat equations. For instance, $J(\left\Vert x\right\Vert
_{p})=\left\Vert x\right\Vert _{p}^{\gamma}e^{-\left\Vert x\right\Vert _{p}}$,
with $\gamma>-n$, \ corresponds to an exponential landscape in the sense of
\cite{Av-4}, in this case, the fundamental solution of (\ref{Eq_1}) is the
transition density of a bounded right-continuous Markov process without second
kind discontinuities, see \cite{TB-zuniga}, \cite{Bendikov} and the references
therein. As a consequence of the work of many people, among them, Vlamimirov,
Volovich, Zelenov, Avetisov, Kozyrev, Kochubei, Khrennikov, Albeverio, and
Z\'{u}\~{n}iga-Galindo, we have now a good theory of $p$-adic `linear'
reaction-diffusion equations which has emerged motivated by connections
between $p$-adic analysis and models of complex systems. For further details
the reader may consult \cite{Zuniga-LNM-2016}, \cite{KKZuniga}, see also
\cite{TB-zuniga}, as well as the classics, \cite{V-V-Z}, \cite{Koch}.

In our opinion, the novelty and relevance of the `idealistic models' of
Avetisov et al. come from two facts: first, they codify, in a mathematical
language, the central physical paradigm asserting that the dynamics of (many)
complex systems can be described as a random walk on an ultrametric space;
second, these models give a description of the characteristic types of
relaxation of complex systems. The original models of Avetisov et al. were
formulated in dimension one, more precisely, these models were constructed by
using `exactly one' cross section of an energy landscape. In \cite[p. 98 and
figures 11.3 and 11.4]{Fraunfelder et al} Frauenfelder et al. have pointed out
that using `one' cross section of an energy landscape of a complex systems to
describe its dynamics is misleading, because it appears that the transition
from an initial state to a final state must follow a unique pathway, and
entropy does not play a role. By considering several trees and by using the
above mentioned limit process, one gets $n$-dimensional $p$-adic master
equations of type (\ref{Eq_1}). The Frauenfelder et al. observation and the
Avetisov et al. work provide a strongly motivation for developing a general
theory of $n-$dimensional $p$-adic reaction-ultradiffusion equations.

From the perspectives of mathematics and physics, a natural step in the
investigation of equations of types (\ref{Eq_0_A})-(\ref{Eq_1}) is to
introduce a non-linear reaction term. This article aims to initiate the theory
of $n-$dimensional $p$-adic reaction-ultradiffusion equations, and their
connections with models of complex hierarchic systems. The terminology
`reaction-diffusion equations' has been used in connection with the models of
Avetisov et al., see e.g. \cite{Bikulov}, to mean (linear) parabolic-type
equations with variable coefficients. A general theory for this type of
equations is given in \cite{Zuniga-LNM-2016}, see also \cite{KKZuniga},
\cite{Koch}. Here `reaction-diffusion equations' means nonlinear equations,
such as is commonly used in the Euclidean case, see e.g. \cite{Fife},
\cite{Gind}, \cite{smaller}. We use the term ultradiffusion instead of
diffusion, due to fact that in classical probability the term diffusion is
used only in connection with stochastic processes with continuous paths, and
in the $p$-adic setting, the paths cannot be continuous.

We study equations of the type%
\begin{equation}
\frac{\partial u\left(  x,t\right)  }{\partial t}=\int_{\mathbb{Q}_{p}^{n}%
}J\left(  \left\Vert x-y\right\Vert _{p}\right)  \left[  u\left(  y,t\right)
-u\left(  x,t\right)  \right]  d^{n}y-\lambda f\left(  u\left(  x,t\right)
\right)  , \label{Eq_2}%
\end{equation}
where $J\left(  \left\Vert x\right\Vert _{p}\right)  \geq0$, $\int
_{\mathbb{Q}_{p}^{n}}J\left(  \left\Vert x\right\Vert _{p}\right)  d^{n}x=1$,
$\lambda>0$ sufficiently large and $f$ is (for instance) a polynomial having
roots in $-1$, $0$, $1$. Formally, equation (\ref{Eq_2}) is the $L^{2}%
$-gradient flow of the following non-Archimedean Helmholtz free-energy
functional:%
\begin{equation}
E\left[  \varphi\right]  =\frac{1}{4}\int_{\mathbb{Q}_{p}^{n}}\int
_{\mathbb{Q}_{p}^{n}}J\left(  \left\Vert x-y\right\Vert _{p}\right)  \left\{
\varphi\left(  x\right)  -\varphi\left(  y\right)  \right\}  ^{2}d^{n}%
xd^{n}y+\lambda\int_{\mathbb{Q}_{p}^{n}}W\left(  \varphi\left(  x\right)
\right)  d^{n}x, \label{Eq_3}%
\end{equation}
where $\varphi$ is a function taking values in the interval $\left[
-1,1\right]  $ and $W$ is a double-well potential.

Equations of the form (\ref{Eq_2}) can be well-approximated in finite
dimensional real spaces by ODE's. In a suitable basis, where the unknown
function is identified with the column vector $\left[  u\left(  \boldsymbol{i}%
,t\right)  \right]  _{\boldsymbol{i}\in G_{N}^{n}}$, these equations have the
form
\begin{equation}
\frac{\partial}{\partial t}\left[  u\left(  \boldsymbol{i},t\right)  \right]
_{\boldsymbol{i}\in G_{N}^{n}}=-A^{\left(  N\right)  }\left[  u\left(
\boldsymbol{i},t\right)  \right]  _{\boldsymbol{i}\in G_{N}^{n}}%
-\lambda\left[  f\left(  u\left(  \boldsymbol{i},t\right)  \right)  \right]
_{\boldsymbol{i}\in G_{N}^{n}}, \label{Eq_4}%
\end{equation}
where $A^{\left(  N\right)  }$ is the matrix representation of a linear
operator that approximates, in a suitable finite dimensional vector space, the
integral operator involving the function $J$ in the right-side of
(\ref{Eq_2}). Equation (\ref{Eq_4}) is $L^{2}$-gradient flow of a `finite'
Helmholtz energy functional, i.e. a functional defined on the space $G_{N}%
^{n}$. In Section \ref{Sect_G_Landau},\ we present some results about the
convergence of finite Helmholtz functionals when $N$ tends to infinity.
Equations of type (\ref{Eq_4}) are generalizations of \ ultradiffusion
equations on trees considered in \cite{Bachas-Hu}. The set $G_{N}^{n}$ is a
finite ultrametric space, this class of spaces contains as \ particular case
the finite rooted trees.

This article is dedicated to study the interplay between all the above
mentioned objects and their physical significance. We determine the spaces and
conditions for which the Cauchy problems for equations (\ref{Eq_2}%
)-(\ref{Eq_4}) are well-posed, see Theorems \ref{Theorem3}, \ref{Theorem3A}.
We show that equations (\ref{Eq_2})-(\ref{Eq_4}) have stationary solutions
with `arbitrary interfaces', this means, in the case of equation (\ref{Eq_2}),
the following.\ Given a ball $B_{N_{0}}^{n}\left(  x_{0}\right)  $ of radius
$p^{N_{0}}$ centered at $x_{0}$, $\mathbb{Q}_{p}^{n}$ can be divided into
three disjoint sets $M$, $B_{N_{0}}^{n}\left(  x_{0}\right)  \smallsetminus
M$, and\ $\mathbb{Q}_{p}^{n}\smallsetminus B_{N_{0}}^{n}\left(  x_{0}\right)
$. Equation (\ref{Eq_2}) admits a stationary solution $\widetilde{u}\left(
x\right)  $ satisfying $\alpha^{+}\leq\widetilde{u}\left(  x\right)  \leq1$
for $x\in M$,\ $-1\leq\widetilde{u}\left(  x\right)  \leq\alpha^{-}$ for $x\in
B_{N_{0}}^{n}\left(  x_{0}\right)  \smallsetminus M$, and $\lim_{\left\Vert
x\right\Vert _{p}\rightarrow\infty}\widetilde{u}\left(  x\right)  =0$, for
some suitable constants $\alpha^{+}$, $\alpha^{-}$, see Theorems
\ref{Theorem1}, \ref{Theorem1A}.

We also show that the solution of Cauchy problem attached to (\ref{Eq_4})
converges to the solution of the Cauchy problem attached to (\ref{Eq_2}), when
$N$ tends to infinity, in the case in which the initial condition for equation
(\ref{Eq_4}) is a continuous function taking values in the interval $\left[
-1,1\right]  $, see Theorem \ref{Theorem4}. Roughly speaking, equation
(\ref{Eq_2}) is the `$p$-adic continuous limit' of the system of equations
(\ref{Eq_4}), when $N$ tends to infinity. The matrix $A^{\left(  N\right)  }$
in equation (\ref{Eq_4}) is the $Q$-matrix of a finite homogeneous Markov
chain with state space $G_{N}^{n}$, and equation (\ref{Eq_4}) with $f=0$ is
the Kolmogorov backward equation attached to this Markov chain, see Theorem
\ref{Theorem2B}.

From a physical perspective equations (\ref{Eq_2})-(\ref{Eq_4}) model phase
separation of bistable materials whose space of states have an ultrametric
structure. Our models are the $p$-adic counterparts of the
integro-differential models for phase separation due to Bates and Chmaj, see
\cite{Bates-Fife}-\cite{BatesChmdj}, and \cite{Alberti-B-1}-\cite{Alberti-B-2}%
. The function $u(x,t)$, respectively $\left[  u\left(  i,t\right)  \right]
_{\boldsymbol{i}\in G_{N}^{n}}$, the order parameter, represents the
macroscopic density profile of a material, which has two equilibrium states
$u(x,t)\equiv-1$, $u(x,t)\equiv1$, and $-1<u(x,t)<1$ represents the
`interface', and equations (\ref{Eq_2})-(\ref{Eq_4}) model a transition
between the equilibrium phases. Theorems \ref{Theorem1}, \ref{Theorem1A} show
that our models of bistable systems can develop arbitrary stable interfaces.

In the proofs of our results we have used freely techniques of abstract
evolution equations, for instance \cite{C-H}, \cite{Milan}, and adapted
techniques and ideas of the classical reaction-diffusion equations, for
instance \cite{Gind}, \cite{smaller}. However, the non-Archimedean theory is
far from being a straightforward consequence of the classical theory of
reaction-diffusion equations. For instance, the existence of traveling waves,
that usually emerge in the transformation between the pure phases, is an open
problem in the non-Archimedean case, since the classical ideas cannot be
applied directly. On the other hand, the non-Archimedean comparison theorem
needed here, see Theorem \ref{Theorem2} and Corollary \ref{Cor2A}, requires
the condition that the `volume of the system be sufficiently large'.

In a recent book by Dellacherie, Martinez and San Martin, see \cite{D-Ma-SM},
the authors present a theory of `ultrametric matrices' and their connections
with Markov chains. As far as we understand, these ultrametric matrices are
generalizations of the inverses of Parisi matrices. In particular, the results
of this book imply that our Theorem \ref{Theorem2B} is valid for more general
type of matrices. It is interesting to mention that Khrennikov and Kozyrev
developed a very general theory of Parisi-type matrices, see \cite{Khr-Koz4}%
-\cite{Khr-Koz6}.

Finally, our Theorem \ref{Theorem4} allows us to produce numerical simulations
of the behavior of the bistable systems whose states are described by the
solutions of our $p$-adic reaction-ultradiffusion equations. We prefer focus
our article on mathematical aspects, however, our equations include as a
particular case the reaction-diffusion equations on graphs studied by Ueyama
and Hosoe in \cite{Ueyama}. The numerical simulations presented in this
article show that our $p$-adic reaction-ultradiffusion equations develop
stable patterns such as occurs in the Archimedean case. Here, it is important
to mention that `essentially' there is no available literature on numerical
methods for nonlinear $p$-adic reaction-ultradiffusion equations. For some
special equations, numerical solutions can be obtained using $p$-adic
wavelets, see \cite{KKZuniga}, \cite{Kozyrev SV}, \ and the references
therein. In our opinion, this technique is not applicable to the type of
equations considered here. There are well-known mathematical techniques for
the discretization abstract nonlinear evolution equations, we use some of them
here, but the challenge is the `visualization of the data.' Then, the
numerical study of the stable patterns of the equations introduced here, and
the comparison with the stable patterns corresponding to the classical
reaction-diffusion equations is an open problem.

\section{$p$\textbf{-}Adic Analysis: Essential Ideas}

\subsection{The field of $p$-adic numbers}

Along this article $p$ will denote a prime number. The field of $p-$adic
numbers $\mathbb{Q}_{p}$ is defined as the completion of the field of rational
numbers $\mathbb{Q}$ with respect to the $p-$adic norm $|\cdot|_{p} $, which
is defined as%
\[
\left\vert x\right\vert _{p}=\left\{
\begin{array}
[c]{lll}%
0 & \text{if} & x=0\\
p^{-\gamma} & \text{if} & x=p^{\gamma}\frac{a}{b}\text{,}%
\end{array}
\right.
\]
where $a$ and $b$ are integers coprime with $p$. The integer $\gamma:=ord(x)
$, with $ord(0):=+\infty$, is called the\textit{\ }$p-$\textit{adic order of}
$x$. We extend the $p-$adic norm to $\mathbb{Q}_{p}^{n}$ by taking%
\[
||x||_{p}:=\max_{1\leq i\leq n}|x_{i}|_{p},\qquad\text{for }x=(x_{1}%
,\dots,x_{n})\in\mathbb{Q}_{p}^{n}.
\]
We define $ord(x)=\min_{1\leq i\leq n}\{ord(x_{i})\}$, then $||x||_{p}%
=p^{-ord(x)}$.\ The metric space $\left(  \mathbb{Q}_{p}^{n},||\cdot
||_{p}\right)  $ is a complete ultrametric space. As a topological space
$\mathbb{Q}_{p}$\ is homeomorphic to a Cantor-like subset of the real line,
see e.g. \cite{Alberio et al}, \cite{V-V-Z}.

Any $p-$adic number $x\neq0$ has a unique expansion of the form
\[
x=p^{ord(x)}\sum_{j=0}^{\infty}x_{i}p^{j},
\]
where $x_{j}\in\{0,1,2,\dots,p-1\}$ and $x_{0}\neq0$. In addition, any $p-
$adic number $x\neq0$ can be represented uniquely as $x=p^{ord(x)}ac\left(
x\right)  $ where $ac\left(  x\right)  =\sum_{j=0}^{\infty}x_{i}p^{j}$,
$x_{0}\neq0$, is called the \textit{angular component} of $x$. Notice that
$\left\vert ac\left(  x\right)  \right\vert _{p}=1$.

\subsection{Topology of $\mathbb{Q}_{p}^{n}$}

For $r\in\mathbb{Z}$, denote by $B_{r}^{n}(a)=\{x\in\mathbb{Q}_{p}%
^{n};||x-a||_{p}\leq p^{r}\}$ \textit{the ball of radius }$p^{r}$ \textit{with
center at} $a=(a_{1},\dots,a_{n})\in\mathbb{Q}_{p}^{n}$, and take $B_{r}%
^{n}(0):=B_{r}^{n}$. Note that $B_{r}^{n}(a)=B_{r}(a_{1})\times\cdots\times
B_{r}(a_{n})$, where $B_{r}^{1}(a_{i}):=B_{r}(a_{i})=\{x\in\mathbb{Q}%
_{p};|x_{i}-a_{i}|_{p}\leq p^{r}\}$ is the one-dimensional ball of radius
$p^{r}$ with center at $a_{i}\in\mathbb{Q}_{p}$. The ball $B_{0}^{n}$ equals
the product of $n$ copies of $B_{0}=\mathbb{Z}_{p}$, \textit{the ring of }%
$p-$\textit{adic integers}. We also denote by $S_{r}^{n}(a)=\{x\in
\mathbb{Q}_{p}^{n};||x-a||_{p}=p^{r}\}$ \textit{the sphere of radius }$p^{r}$
\textit{with center at} $a=(a_{1},\dots,a_{n})\in\mathbb{Q}_{p}^{n}$, and take
$S_{r}^{n}(0):=S_{r}^{n}$. We notice that $S_{0}^{1}=\mathbb{Z}_{p}^{\times}$
(the group of units of $\mathbb{Z}_{p}$), but $\left(  \mathbb{Z}_{p}^{\times
}\right)  ^{n}\subsetneq S_{0}^{n}$. The balls and spheres are both open and
closed subsets in $\mathbb{Q}_{p}^{n}$. In addition, two balls in
$\mathbb{Q}_{p}^{n}$ are either disjoint or one is contained in the other.

As a topological space $\left(  \mathbb{Q}_{p}^{n},||\cdot||_{p}\right)  $ is
totally disconnected, i.e. the only connected \ subsets of $\mathbb{Q}_{p}%
^{n}$ are the empty set and the points. A subset of $\mathbb{Q}_{p}^{n}$ is
compact if and only if it is closed and bounded in $\mathbb{Q}_{p}^{n}$, see
e.g. \cite[Section 1.3]{V-V-Z}, or \cite[Section 1.8]{Alberio et al}. The
balls and spheres are compact subsets. Thus $\left(  \mathbb{Q}_{p}%
^{n},||\cdot||_{p}\right)  $ is a locally compact topological space.

We will use $\Omega\left(  p^{-r}||x-a||_{p}\right)  $ to denote the
characteristic function of the ball $B_{r}^{n}(a)$. For more general sets, we
will use the notation $1_{A}$ for the characteristic function of a set $A$.

\subsection{The Bruhat-Schwartz space}

A real-valued function $\varphi$ defined on $\mathbb{Q}_{p}^{n}$ is
\textit{called locally constant} if for any $x\in\mathbb{Q}_{p}^{n}$ there
exists an integer $l(x)\in\mathbb{Z}$ such that%
\begin{equation}
\varphi(x+x^{\prime})=\varphi(x)\text{ for }x^{\prime}\in B_{l(x)}^{n}.
\label{local_constancy_0}%
\end{equation}

A function $\varphi:\mathbb{Q}_{p}^{n}\rightarrow\mathbb{R}$ is called a
\textit{Bruhat-Schwartz function (or a test function)} if it is locally
constant with compact support. Any test function can be represented as a
linear combination, with real coefficients, of characteristic functions of
balls. The $\mathbb{R}$-vector space of Bruhat-Schwartz functions is denoted
by $\mathcal{D}(\mathbb{Q}_{p}^{n})$. For $\varphi\in\mathcal{D}%
(\mathbb{Q}_{p}^{n})$, the largest number $l=l(\varphi)$ satisfying
(\ref{local_constancy_0}) is called \textit{the exponent of local constancy
(or the parameter of constancy) of} $\varphi$. Let denote by $\mathcal{D}%
_{N}^{l}\left(
\mathbb{Q}
_{p}^{n}\right)  $ (or simply $\mathcal{D}_{N}^{l}$), the finite dimensional
subspace of all real-valued test functions having supports in $B_{N}^{n}$ and
with parameters of constancy $\geq l$. Then $\mathcal{D}_{N}^{l}%
\subset\mathcal{D}_{N^{\prime}}^{l^{\prime}}$ if $N^{\prime}\geq N$ and $l\geq
l^{\prime}$.

If $U$ is an open subset of $\mathbb{Q}_{p}^{n}$, $\mathcal{D}(U)$ denotes the
space of test functions with supports contained in $U$, then $\mathcal{D}(U)$
is dense in
\[
L^{\rho}\left(  U\right)  =\left\{  \varphi:U\rightarrow\mathbb{R};\left(
\int_{\mathbb{Q}_{p}^{n}}\left\vert \varphi\left(  x\right)  \right\vert
^{\rho}d^{n}x\right)  ^{\frac{1}{\rho}}<\infty\right\}  ,
\]
where $d^{n}x$ is the Haar measure on $\mathbb{Q}_{p}^{n}$ normalized by the
condition $vol(B_{0}^{n})\allowbreak=1$, for $1\leq\rho<\infty$, see e.g.
\cite[Section 4.3]{Alberio et al}. In the case $U=\mathbb{Q}_{p}^{n}$, we will
use the notation $L^{\rho}$ instead of $L^{\rho}\left(  \mathbb{Q}_{p}%
^{n}\right)  $. For an in depth discussion about $p$-adic analysis the reader
may consult \ \cite{Alberio et al}, \cite{Koch}, \cite{Taibleson},
\cite{V-V-Z}.

\section{Some Functional Spaces and Operators}

We define $X_{\infty}(%
\mathbb{Q}
_{p}^{n}):=X_{\infty}=\overline{\left(  \mathcal{D}(%
\mathbb{Q}
_{p}^{n}),\left\Vert \cdot\right\Vert _{\infty}\right)  }$, where $\left\Vert
\phi\right\Vert _{\infty}=\sup_{x\in%
\mathbb{Q}
_{p}^{n}}|\phi(x)|$ and the bar means the completion with respect the metric
induced by $\left\Vert \cdot\right\Vert _{\infty}$. We also use $\left\Vert
\cdot\right\Vert _{\infty}$ to denote the extension of $\left\Vert
\cdot\right\Vert _{\infty}$ to $X_{\infty}$. Notice that all the functions in
$X_{\infty}$ are continuous and \ that
\[
X_{\infty}\subset C_{0}:=\left(  \left\{  f:%
\mathbb{Q}
_{p}^{n}\rightarrow\mathbb{R};f\text{ continuous with }\lim_{\left\Vert
x\right\Vert _{p}\rightarrow\infty}f\left(  x\right)  =0\right\}  ,\left\Vert
\cdot\right\Vert _{\infty}\right)  .
\]
On the other hand, since $\mathcal{D}(%
\mathbb{Q}
_{p}^{n})$ is dense in $C_{0}$, cf. \cite[Chap. II, Proposition 1.3]%
{Taibleson}, we conclude that $X_{\infty}=C_{0}$. In a more general case, if
$K$ is an open subset of $%
\mathbb{Q}
_{p}^{n}$, we define $X_{\infty}\left(  K\right)  =\overline{\left(
\mathcal{D}(K),\left\Vert \cdot\right\Vert _{\infty}\right)  }$.

We set
\[
\text{ }X_{N}:=\left(  \mathcal{D}_{N}^{-N}\left(
\mathbb{Q}
_{p}^{n}\right)  ,\left\Vert \cdot\right\Vert _{\infty}\right)  \text{ for
}N\geq1\text{.}%
\]
Any $\varphi\in X_{N}$ has support in $B_{N}^{n}=\left(  p^{-N}\mathbb{Z}%
_{p}\right)  ^{n}$, and \ $\varphi$ satisfies (\ref{local_constancy_0}) for
$x^{\prime}\in B_{-N}^{n}=\left(  p^{N}\mathbb{Z}_{p}\right)  ^{n}$, in
addition, $B_{\pm N}^{n}$ are additive subgroups and $G_{N}^{n}:=B_{N}%
^{n}/B_{-N}^{n}$ is a finite group with $\#G_{N}^{n}:=p^{2Nn}$ elements. Any
element $\boldsymbol{i}=(\boldsymbol{i}_{1},\ldots,\boldsymbol{i}_{n})$ of
$G_{N}^{n}$ can be represented as
\begin{equation}
\boldsymbol{i}_{j}=a_{-N}^{j}p^{-N}+a_{-N+1}^{j}p^{-N+1}+\ldots+a_{0}%
^{j}+a_{1}^{j}p+\ldots+a_{N-1}^{j}p^{N-1}\text{ } \label{representatives}%
\end{equation}
for $j=1,\ldots,n$, with $a_{k}^{j}\in\left\{  0,1,\ldots,p-1\right\}  $. From
now on, we fix a set of representatives in $%
\mathbb{Q}
_{p}^{n}$ for $G_{N}^{n}$ of the form (\ref{representatives}). We denote by
$\Omega\left(  p^{M}\left\Vert x-x_{0}\right\Vert _{p}\right)  $, the
characteristic function of the ball $x_{0}+\left(  p^{M}\mathbb{Z}_{p}\right)
^{n}$. We notice that any non-zero function $\varphi$ in $X_{N}$ has an index
of local constancy $l_{\varphi}\in\left\{  -N,-N+1,\ldots,0,1,\ldots
,N\right\}  $, and that $B_{l_{\varphi}}^{n}$can be covered by a finite
disjoint union of balls of the form $B_{-N}^{n}\left(  \boldsymbol{j}\right)
$, with $\boldsymbol{j}\in G_{N}^{n}$, then $\left\{  \Omega\left(
p^{N}\left\Vert x-\boldsymbol{i}\right\Vert _{p}\right)  \right\}
_{\boldsymbol{i}\in G_{N}^{n}}$ is a basis of $\mathcal{D}_{N}^{-N}$, see also
e.g. \cite[Lemma 4.3.1]{Alberio et al}. We notice now that if $\varphi\left(
x\right)  =\sum_{\boldsymbol{i}\in G_{N}^{n}}\varphi\left(  \boldsymbol{i}%
\right)  \Omega\left(  p^{N}\left\Vert x-\boldsymbol{i}\right\Vert
_{p}\right)  $, \ with $\varphi\left(  \boldsymbol{i}\right)  \in\mathbb{R}$,
then $\left\Vert \varphi\right\Vert _{\infty}=\max_{\boldsymbol{i}}\left\vert
\varphi\left(  \boldsymbol{i}\right)  \right\vert $. Hence $X_{N}$ is
isomorphic as a Banach space to $\left(  \mathbb{R}^{\#G_{N}^{n}},\left\Vert
\cdot\right\Vert _{\mathbb{R}}\right)  $, where $\left\Vert \left(
t_{1},\ldots,t_{\#G_{N}^{n}}\right)  \right\Vert _{\mathbb{R}}=\max_{1\leq
j\leq\#G_{N}^{n}}\left\vert t_{j}\right\vert $.

We now define for $N\geq1$, $P_{N}:X_{\infty}\rightarrow X_{N}$ as
\[
P_{N}\varphi\left(  x\right)  =\sum_{\boldsymbol{i}\in G_{N}^{n}}%
\varphi\left(  \boldsymbol{i}\right)  \Omega\left(  p^{N}\left\Vert
x-\boldsymbol{i}\right\Vert _{p}\right)  .
\]
Therefore $P_{N}$ is a linear bounded operator, indeed, $\left\Vert
P_{N}\right\Vert \leq1$.

\begin{lemma}
\label{Lemma_0}$\lim_{N\rightarrow\infty}\left\Vert \varphi-P_{N}%
\varphi\right\Vert _{\infty}=0$ for any $\varphi\in X_{\infty}$.
\end{lemma}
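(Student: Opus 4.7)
The plan is a standard three-$\epsilon$ density argument. Given $\varphi\in X_{\infty}=C_{0}$ and $\epsilon>0$, I would first invoke density of $\mathcal{D}(\mathbb{Q}_{p}^{n})$ in $X_{\infty}$ (recorded in the excerpt via Taibleson's result) to pick a test function $\psi\in\mathcal{D}(\mathbb{Q}_{p}^{n})$ with $\|\varphi-\psi\|_{\infty}<\epsilon/3$.

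Next, I would show that $P_{N}\psi=\psi$ for all sufficiently large $N$. Since $\psi$ is a test function, its support is contained in some ball $B_{M}^{n}$ and it has a parameter of local constancy $l(\psi)\in\mathbb{Z}$. Choosing $N_{0}\in\mathbb{N}$ with $N_{0}\geq M$ and $N_{0}\geq -l(\psi)$, for every $N\geq N_{0}$ one has $\psi\in\mathcal{D}_{N}^{-N}$: the support condition is automatic, and $\psi$ is constant on every ball $B_{-N}^{n}(\boldsymbol{i})$ with $\boldsymbol{i}\in G_{N}^{n}$. Expanding $\psi$ in the basis $\{\Omega(p^{N}\|x-\boldsymbol{i}\|_{p})\}_{\boldsymbol{i}\in G_{N}^{n}}$ of $\mathcal{D}_{N}^{-N}$ recalled just before the lemma yields
\[
\psi(x)=\sum_{\boldsymbol{i}\in G_{N}^{n}}\psi(\boldsymbol{i})\,\Omega(p^{N}\|x-\boldsymbol{i}\|_{p})=P_{N}\psi(x),
\]
so that $\psi-P_{N}\psi\equiv 0$ for $N\geq N_{0}$.

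Finally, combining these two ingredients with the already recorded bound $\|P_{N}\|\leq 1$, I would estimate via the triangle inequality
\[
\|\varphi-P_{N}\varphi\|_{\infty}\leq \|\varphi-\psi\|_{\infty}+\|\psi-P_{N}\psi\|_{\infty}+\|P_{N}(\psi-\varphi)\|_{\infty}.
\]
For $N\geq N_{0}$ the middle term vanishes, the first is $<\epsilon/3$ by construction, and the third is $\leq \|P_{N}\|\cdot\|\varphi-\psi\|_{\infty}<\epsilon/3$. Hence $\|\varphi-P_{N}\varphi\|_{\infty}<\epsilon$ for all $N\geq N_{0}$, and letting $\epsilon\to 0$ concludes the proof.

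There is essentially no serious obstacle in this argument. The only point requiring a little care is that the approximating subspaces $\mathcal{D}_{N}^{-N}$ grow in two directions simultaneously as $N\to\infty$, namely in the support-radius direction and in the fineness-of-partition direction; one must choose $N_{0}$ large enough in both respects to trap the fixed test function $\psi$. This is precisely why the upper and lower indices of $\mathcal{D}_{N}^{-N}$ are coupled through the same parameter $N$.
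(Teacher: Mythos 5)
Your proposal is correct and follows essentially the same route as the paper: approximate $\varphi$ by a test function, observe that any fixed test function lies in $\mathcal{D}_{N}^{-N}$ (and hence is fixed by $P_{N}$) once $N$ dominates both its support radius and the negative of its constancy parameter, and conclude via the triangle inequality together with $\left\Vert P_{N}\right\Vert \leq1$. The only cosmetic difference is that the paper groups the error as $\epsilon+\epsilon$ rather than three terms of $\epsilon/3$.
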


\begin{proof}
By using the fact that $\mathcal{D}(%
\mathbb{Q}
_{p}^{n})$ is dense in $X_{\infty}$, given any $\epsilon$ sufficiently small,
there exists $\widetilde{\varphi}\in\mathcal{D}_{M}^{l}\left(
\mathbb{Q}
_{p}^{n}\right)  $, with $l$, $M$ integers depending on $\epsilon$, such that
$\left\Vert \varphi-\widetilde{\varphi}\right\Vert _{\infty}<\epsilon$. We may
assume without loss of generality that $M\geq1$ since $\mathcal{D}_{M}%
^{l}\subset\mathcal{D}_{M+1}^{l}$, and that $l<0$ since\ if $l\geq0$ then
$\mathcal{D}_{M}^{l}\subset\mathcal{D}_{M}^{-l}$. Thus $\widetilde{\varphi}%
\in\mathcal{D}_{M}^{-k}\left(
\mathbb{Q}
_{p}^{n}\right)  $ with $M$, $k\geq1$, and $\mathcal{D}_{M}^{-k}\left(
\mathbb{Q}
_{p}^{n}\right)  \subset\mathcal{D}_{\max\left\{  k,M\right\}  }%
^{-\max\left\{  k,M\right\}  }\left(
\mathbb{Q}
_{p}^{n}\right)  \subset\mathcal{D}_{N}^{-N}\left(
\mathbb{Q}
_{p}^{n}\right)  $ for $N\geq\max\left\{  k,M\right\}  $, i.e. $\widetilde
{\varphi}\in\mathcal{D}_{N}^{-N}\left(
\mathbb{Q}
_{p}^{n}\right)  $ for $N\geq\max\left\{  k,M\right\}  $, and $\widetilde
{\varphi}\left(  x\right)  =\sum_{\boldsymbol{i}\in G_{N}^{n}}\widetilde
{\varphi}\left(  \boldsymbol{i}\right)  \Omega\left(  p^{M}\left\Vert
x-\boldsymbol{i}\right\Vert _{p}\right)  =P_{N}\widetilde{\varphi}\left(
x\right)  $. Now
\[
\left\Vert \varphi-P_{N}\varphi\right\Vert _{\infty}\leq\left\Vert
\varphi-\widetilde{\varphi}\right\Vert _{\infty}+\left\Vert \widetilde
{\varphi}-P_{N}\varphi\right\Vert _{\infty}\leq\epsilon+\left\Vert
P_{N}\widetilde{\varphi}-P_{N}\varphi\right\Vert _{\infty}\leq2\epsilon\text{
\ }%
\]
for $N\geq\max\left\{  k,M\right\}  $, since $\left\Vert P_{N}\widetilde
{\varphi}-P_{N}\varphi\right\Vert _{\infty}=\sup_{\boldsymbol{i}}\left\vert
\widetilde{\varphi}\left(  \boldsymbol{i}\right)  -\varphi\left(
\boldsymbol{i}\right)  \right\vert \leq\left\Vert \widetilde{\varphi}%
-\varphi\right\Vert _{\infty}<\epsilon$ for $N\geq\max\left\{  k,M\right\}  $.
\end{proof}

We denote by $E_{N}$ , $N\geq1$, the embedding $X_{N}\rightarrow X_{\infty} $.
The following result is a consequence of the above observations. If $Z$, $Y$
are real Banach spaces, we denote by $\mathfrak{B}(Z,Y)$, the space of all
linear bounded operators from $Z$ into $Y$.

\begin{lemma}
[Condition A]\label{Condition A}With the above notation, the following
assertions hold:

\noindent(i) $X_{\infty}$, $X_{N}$ for $N\geq1$, are \ real Banach spaces, all
with the norm $\left\Vert \cdot\right\Vert _{\infty}$;

\noindent(ii) $P_{N}\in\mathfrak{B}\left(  X_{\infty},X_{N}\right)  $ and
$\left\Vert P_{N}\varphi\right\Vert _{\infty}\leq\left\Vert \varphi\right\Vert
_{\infty}$ for any $N\geq1$, $\varphi\in X_{\infty}$;

\noindent(iii) $E_{N}\in\mathfrak{B}\left(  X_{N},X_{\infty}\right)  $ and
$\left\Vert E_{N}\varphi\right\Vert _{\infty}=\left\Vert \varphi\right\Vert
_{\infty}$ for any $N\geq1$, $\varphi\in X_{N}$;

\noindent(iv) $P_{N}E_{N}\varphi=\varphi$ for $N\geq1$, $\varphi\in X_{N}$.
\end{lemma}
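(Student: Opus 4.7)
The plan is to verify the four items in turn; all of them are essentially bookkeeping about the definitions already set up in the section, so I expect no genuine obstacle, only the need to be careful about what is definitional versus what requires a short argument.

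For (i), I would note that $X_{\infty}$ is a Banach space by construction, since it is defined as the completion of $\bigl(\mathcal{D}(\mathbb{Q}_{p}^{n}),\|\cdot\|_{\infty}\bigr)$. For $X_{N}$, I would invoke the basis discussion preceding the lemma: the characteristic functions $\{\Omega(p^{N}\|x-\boldsymbol{i}\|_{p})\}_{\boldsymbol{i}\in G_{N}^{n}}$ form a basis of $\mathcal{D}_{N}^{-N}$, so $X_{N}\cong\mathbb{R}^{\#G_{N}^{n}}$ as normed spaces via the identification $\varphi\mapsto(\varphi(\boldsymbol{i}))_{\boldsymbol{i}}$, with $\|\varphi\|_{\infty}=\max_{\boldsymbol{i}}|\varphi(\boldsymbol{i})|$. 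Being finite-dimensional normed, $X_{N}$ is automatically complete.

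For (ii), I would first check that $P_{N}$ actually maps into $X_{N}$. By construction $P_{N}\varphi$ is a finite linear combination of the basis functions $\Omega(p^{N}\|x-\boldsymbol{i}\|_{p})$ with $\boldsymbol{i}\in G_{N}^{n}$, so its support lies in $B_{N}^{n}$ and its parameter of local constancy is at least $-N$; hence $P_{N}\varphi\in\mathcal{D}_{N}^{-N}=X_{N}$. For the norm estimate, because the balls $B_{-N}^{n}(\boldsymbol{i})$ for distinct $\boldsymbol{i}\in G_{N}^{n}$ are pairwise disjoint (any two balls in $\mathbb{Q}_{p}^{n}$ are either disjoint or nested, and distinct coset representatives give distinct cosets), a given $x$ belongs to at most one such ball, so
\[
|P_{N}\varphi(x)|=\max_{\boldsymbol{i}\in G_{N}^{n}}|\varphi(\boldsymbol{i})|\,\mathbf{1}_{B_{-N}^{n}(\boldsymbol{i})}(x)\leq\sup_{y\in\mathbb{Q}_{p}^{n}}|\varphi(y)|=\|\varphi\|_{\infty},
\]
which gives linearity (clear from the formula) and the bound $\|P_{N}\|\leq 1$, so $P_{N}\in\mathfrak{B}(X_{\infty},X_{N})$.

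For (iii), $E_{N}$ is the set-theoretic inclusion $X_{N}\hookrightarrow X_{\infty}$, which is well-defined because every element of $X_{N}$ is a continuous function on $\mathbb{Q}_{p}^{n}$ with compact support, in particular it lies in the closure of $\mathcal{D}(\mathbb{Q}_{p}^{n})$ under $\|\cdot\|_{\infty}$, i.e.\ in $X_{\infty}=C_{0}$. Linearity is obvious, and the norm on both sides is the same supremum norm, so $\|E_{N}\varphi\|_{\infty}=\|\varphi\|_{\infty}$, giving both the isometry statement and the boundedness. Finally, (iv) is immediate from the basis expansion: any $\varphi\in X_{N}$ can be written uniquely as $\varphi(x)=\sum_{\boldsymbol{i}\in G_{N}^{n}}\varphi(\boldsymbol{i})\,\Omega(p^{N}\|x-\boldsymbol{i}\|_{p})$, and applying $P_{N}$ to $E_{N}\varphi$ (viewed in $X_{\infty}$) recovers exactly this expansion, since the coefficient $\varphi(\boldsymbol{i})$ is the value of $\varphi$ at the representative $\boldsymbol{i}$. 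The only mildly nontrivial point in the whole proof is the disjointness argument in (ii); everything else is unpacking notation.
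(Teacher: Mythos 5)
Your proof is correct and follows exactly the route the paper intends: the paper states this lemma with no written proof beyond the remark that it is ``a consequence of the above observations,'' and your argument is precisely the unpacking of those observations (completeness of $X_\infty$ by construction, finite-dimensionality of $X_N\cong\mathbb{R}^{\#G_N^n}$, disjointness of the balls $B_{-N}^{n}(\boldsymbol{i})$ for the bound $\|P_N\|\leq 1$, and the basis expansion for (iii) and (iv)). Nothing is missing.
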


\subsection{\label{Sect_oper_A}The operators $A_{N}$, $A$}

Set $\mathbb{R}_{+}:=\{x\in\mathbb{R};x\geq0\}$. We fix a continuous function
$J:$\textbf{\ }$\mathbb{R}_{+}\rightarrow\mathbb{R}_{+}$, and take
$J(x)=J(||x||_{p})$ for $x\in%
\mathbb{Q}
_{p}^{n}$, then $J(x)$ is a \textit{radial function} on $%
\mathbb{Q}
_{p}^{n}$. In addition, we assume that $\ \int_{%
\mathbb{Q}
_{p}^{n}}J(||x||_{p})d^{n}x=1$.$\ $

\begin{lemma}
\label{Lemma_D} The following assertions hold:

\noindent(i) set $J_{N}(||x||_{p}):=J(||x||_{p})\Omega\left(  p^{-N}\left\Vert
x\right\Vert _{p}\right)  $ for $N\geq1$. Then%
\[
J_{N}(||x||_{p})\ast P_{N}\varphi\left(  x\right)  =\Omega\left(
p^{-N}\left\Vert x\right\Vert _{p}\right)  \left\{  J(||x||_{p})\ast
P_{N}\varphi\left(  x\right)  \right\}
\]
for $\varphi\left(  x\right)  \in X_{\infty}$.

\noindent(ii) Define for $N\geq1$,%
\[%
\begin{array}
[c]{llll}%
A_{N}: & X_{N} & \rightarrow & X_{N}\\
& \phi\left(  x\right)  & \rightarrow & -\int\limits_{B_{N}^{n}}%
J_{N}(||x-y||_{p})\left\{  \phi\left(  y\right)  -\phi\left(  x\right)
\right\}  d^{n}y.
\end{array}
\]
Then $A_{N}$ is a well-defined linear bounded operator.
\end{lemma}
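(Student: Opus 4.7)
For part (i), the plan is to exploit the fact that $B_N^n = (p^{-N}\mathbb{Z}_p)^n$ is an additive subgroup of $\mathbb{Q}_p^n$, together with the strong triangle inequality. Both $J_N$ and $P_N\varphi$ are supported in $B_N^n$, so their convolution is supported in $B_N^n + B_N^n = B_N^n$; hence for $x \notin B_N^n$ both sides of the claimed identity vanish. For $x \in B_N^n$, it suffices to show $(J - J_N) \ast P_N\varphi(x) = 0$: the integrand is supported where $\|y\|_p > p^N$, but for such $y$ the strong triangle inequality forces $\|x - y\|_p = \|y\|_p > p^N$, so $x - y \notin B_N^n$ and $P_N\varphi(x-y) = 0$.

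For part (ii), linearity is immediate. To check that $A_N\phi \in X_N$, I would first use that $\phi$ has support in $B_N^n$ and that $B_N^n$ is a group to rewrite
\[
A_N\phi(x) = -(J_N \ast \phi)(x) + c_N\, \phi(x), \qquad c_N := \int_{B_N^n} J(\|y\|_p)\, d^n y.
\]
Since $\phi = P_N\phi$, part (i) gives $\mathrm{supp}(J_N \ast \phi) \subseteq B_N^n$ and the identification $J_N \ast \phi = J \ast \phi$ on $B_N^n$, so $\mathrm{supp}(A_N\phi) \subseteq B_N^n$. To transfer the local-constancy property with parameter $\geq -N$ from $\phi$ to $J \ast \phi$, I would use the invariance $\phi(u + x') = \phi(u)$ for every $x' \in B_{-N}^n$ (which holds because $\phi \in \mathcal{D}_N^{-N}$) directly inside the convolution integrand, yielding $(J \ast \phi)(x + x') = (J \ast \phi)(x)$ pointwise. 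Combined with the same invariance for $\phi$ itself, this places $A_N\phi$ in $X_N$. Boundedness then follows from the crude pointwise estimate
\[
|A_N\phi(x)| \leq 2\|\phi\|_\infty \int_{B_N^n} J(\|y\|_p)\, d^n y \leq 2\|\phi\|_\infty,
\]
using $J \geq 0$ and $\int_{\mathbb{Q}_p^n} J(\|y\|_p)\, d^n y = 1$, giving $\|A_N\|_{\mathfrak{B}(X_N,X_N)} \leq 2$.

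I do not foresee a substantive obstacle: everything reduces to additivity of the subgroup $B_N^n$ and the strong triangle inequality, which together replace the delicate support and regularity bookkeeping of the Archimedean analog. The one step requiring a little care is the local-constancy transfer in (ii), because $J$ is merely continuous and not locally constant; this is harmless, since the translation invariance of $\phi$ acts inside the integrand without ever touching $J$.
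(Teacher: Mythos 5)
Your proposal is correct and follows essentially the same route as the paper: the same decomposition $A_N\phi = -(J_N\ast\phi) + j_N\phi$ with $j_N=\int_{B_N^n}J(\Vert y\Vert_p)d^ny$, the same use of part (i) for the support claim, and the same bound $\Vert A_N\Vert\leq 2$ via the $L^1$-norm of $J_N$. The only cosmetic difference is in (i), where the paper argues via the dichotomy that two balls of equal radius are either equal or disjoint while you compute $(J-J_N)\ast P_N\varphi=0$ on $B_N^n$ directly from the strong triangle inequality; your explicit translation-invariance argument for the local-constancy transfer in (ii) fills in a step the paper merely asserts.
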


\begin{proof}
(i) We recall that $B_{N}^{n}\left(  0\right)  =B_{N}^{n}$. Notice that
\begin{align*}
I(x)  &  :=J_{N}(||x||_{p})\ast P_{N}\varphi\left(  x\right)  =\int\limits_{%
\mathbb{Q}
_{p}^{n}}J\left(  \left\Vert x-y\right\Vert _{p}\right)  \Omega\left(
p^{-N}\left\Vert x-y\right\Vert _{p}\right)  P_{N}\varphi\left(  y\right)
d^{n}y\\
&  =\int\limits_{B_{N}^{n}\left(  0\right)  \cap B_{N}^{n}\left(  x\right)
}J\left(  \left\Vert x-y\right\Vert _{p}\right)  P_{N}\varphi\left(  y\right)
d^{n}y.
\end{align*}
The calculation of the above integral involves two cases: (1) $B_{N}%
^{n}\left(  0\right)  \cap B_{N}^{n}\left(  x\right)  \neq\varnothing$; (2)
$B_{N}^{n}\left(  0\right)  \cap B_{N}^{n}\left(  x\right)  =\varnothing$. In
the first case, since the radii of the balls are the same, $B_{N}^{n}\left(
0\right)  =B_{N}^{n}\left(  x\right)  $ and thus $x\in B_{N}^{n}\left(
0\right)  $ which implies that $\left\Vert x\right\Vert _{p}\leq p^{N}$, in
addition, supp$I(x)\subset B_{N}^{n}\left(  0\right)  $, and thus
\begin{align*}
I(x)  &  =\Omega\left(  p^{-N}\left\Vert x\right\Vert _{p}\right)
\int\limits_{B_{N}^{n}\left(  0\right)  }J\left(  \left\Vert x-y\right\Vert
_{p}\right)  P_{N}\varphi\left(  y\right)  d^{n}y\\
&  =\Omega\left(  p^{-N}\left\Vert x\right\Vert _{p}\right)  \int\limits_{%
\mathbb{Q}
_{p}^{n}}J\left(  \left\Vert x-y\right\Vert _{p}\right)  P_{N}\varphi\left(
y\right)  d^{n}y.
\end{align*}
In the second case, $x\notin B_{N}^{n}\left(  0\right)  $ this implies that
$\left\Vert x\right\Vert _{p}>p^{-N}$, and $I(x)=0$.

(ii) Notice that
\begin{equation}
A_{N}\phi\left(  x\right)  =-\left\{  J_{N}(||x||_{p})\ast\phi\left(
x\right)  -j_{N}\phi\left(  x\right)  \right\}  ,\text{ with }j_{N}%
:=\int_{B_{N}^{n}}J(||y||_{p})d^{n}y, \label{formula_A_N}%
\end{equation}
$0\leq j_{N}\leq1$ and that%
\begin{equation}
\lim_{N\rightarrow\infty}j_{N}=1. \label{limit}%
\end{equation}
From (\ref{formula_A_N}) and part (i), it follows that the operator is
well-defined.\ Indeed, for $\phi\in X_{N}$, supp $\left(  J_{N}(||x||_{p}%
)\ast\phi\left(  x\right)  \right)  \subset B_{N}^{n}$ and the index of local
constancy of $J_{N}(||x||_{p})\ast\phi\left(  x\right)  $ equals the index of
local constancy of $\phi$. The continuity follows from the Young inequality:%
\[
\left\vert A_{N}\phi\left(  x\right)  \right\vert \leq\left\Vert
J_{N}(||x||_{p})\right\Vert _{L^{1}}\left\Vert \phi\left(  x\right)
\right\Vert _{\infty}+j_{N}\left\Vert \phi\left(  x\right)  \right\Vert
_{\infty}\leq2\left\Vert \phi\left(  x\right)  \right\Vert _{\infty}\text{.}%
\]

\end{proof}

Now, we define%
\begin{equation}%
\begin{array}
[c]{llll}%
A: & X_{\infty} & \rightarrow & X_{\infty}\\
& \varphi\left(  x\right)  & \rightarrow & A\varphi\left(  x\right)
=-\left\{  J\left(  \left\Vert x\right\Vert _{p}\right)  \ast\varphi\left(
x\right)  -\varphi\left(  x\right)  \right\}  .
\end{array}
\label{Operator_A}%
\end{equation}

\begin{remark}
Notice that $A\varphi\left(  x\right)  =-\int_{%
\mathbb{Q}
_{p}^{n}}J\left(  \left\Vert x-y\right\Vert _{p}\right)  \left\{
\varphi\left(  y\right)  -\varphi\left(  x\right)  \right\}  d^{n}y$ since
$\int_{%
\mathbb{Q}
_{p}^{n}}J\left(  \left\Vert x-y\right\Vert _{p}\right)  d^{n}y=1$.
\end{remark}

\begin{lemma}
\label{Lemma_E}The operator $A:X_{\infty}\rightarrow X_{\infty}$ is a linear
and bounded. In addition, the spectrum of $A$, $\sigma\left(  A\right)  $, is
contained in the interval $\left[  0,2\right]  $.
\end{lemma}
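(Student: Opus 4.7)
The plan is to obtain linearity and boundedness from elementary estimates on the convolution $J\ast\varphi$, and then to deduce the spectral inclusion by regarding $T\varphi := J(\|\cdot\|_p)\ast\varphi$ as an element of the commutative Banach algebra $L^{1}(\mathbb{Q}_{p}^{n})$ and applying Gelfand/Fourier theory together with the reality and symmetry of $J$.

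Linearity of $A$ is immediate from the bilinearity of convolution. Writing $A\varphi = \varphi - T\varphi$ and using the hypothesis $\int_{\mathbb{Q}_{p}^{n}} J(\|x\|_{p})\,d^{n}x = 1$, Young's inequality gives $\|T\varphi\|_{\infty}\leq\|J\|_{L^{1}}\|\varphi\|_{\infty}=\|\varphi\|_{\infty}$, whence $\|A\varphi\|_{\infty}\leq 2\|\varphi\|_{\infty}$ and $\|A\|\leq 2$. That $A$ actually maps $X_{\infty}$ into itself follows by checking first on $\mathcal{D}(\mathbb{Q}_{p}^{n})$ (there $T\varphi$ is continuous because $\varphi$ is locally constant and $J \in L^{1}$, and it vanishes at infinity because $\varphi$ has compact support and $J$ is integrable) and then extending to all of $X_{\infty}=C_{0}$ by density and the contraction bound just obtained.

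It remains to show $\sigma(A)\subseteq[0,2]$. Since $A=I-T$ we have $\sigma(A)=1-\sigma(T)$, so it is enough to prove $\sigma(T)\subseteq[-1,1]$. The map $f\mapsto T_{f}$ (convolution by $f$) is a contractive algebra homomorphism from the commutative Banach algebra $L^{1}(\mathbb{Q}_{p}^{n})$ into $\mathfrak{B}(X_{\infty})$; after unitizing the source to $L^{1}(\mathbb{Q}_{p}^{n})\oplus\mathbb{C}$ and sending the added unit to $I$, we obtain a unital contractive homomorphism and hence $\sigma_{\mathfrak{B}(X_{\infty})}(T_{J})\subseteq\sigma_{L^{1}(\mathbb{Q}_{p}^{n})\oplus\mathbb{C}}(J,0)$. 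Standard Gelfand theory identifies the latter spectrum with $\widehat{J}(\mathbb{Q}_{p}^{n})\cup\{0\}$, where $\widehat{J}$ is the $p$-adic Fourier transform of $J$; this uses that the Gelfand spectrum of $L^{1}(\mathbb{Q}_{p}^{n})$ is exactly the self-dual group $\mathbb{Q}_{p}^{n}$ with Gelfand transform equal to the Fourier transform. Because $J$ is real-valued and $J(-x)=J(x)$, one checks $\overline{\widehat{J}(\xi)}=\widehat{J}(-\xi)=\widehat{J}(\xi)$, so $\widehat{J}$ is real, and $|\widehat{J}(\xi)|\leq\|J\|_{L^{1}}=1$. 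Thus $\widehat{J}(\mathbb{Q}_{p}^{n})\subseteq[-1,1]$ and $\sigma(T)\subseteq[-1,1]$, yielding $\sigma(A)\subseteq[0,2]$.

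I expect the main technical point to be the Gelfand-theoretic spectral inclusion, specifically the passage through the non-unital algebra $L^{1}(\mathbb{Q}_{p}^{n})$ and the identification of its Gelfand spectrum via the $p$-adic Fourier transform; once these are in place, the reality argument for $\widehat{J}$ is immediate from the radiality of $J$. A purely elementary alternative would combine a Neumann series on $\{\lambda : |\lambda - 1|>1\}$ (which already covers $\mathbb{R}\setminus[0,2]$) with an explicit construction of $(A-\lambda I)^{-1}$ as convolution by an $L^{1}$-kernel for non-real $\lambda$ with $|\lambda - 1|\leq 1$, exploiting that $|1-\lambda-\widehat{J}(\xi)|\geq|\mathrm{Im}\,\lambda|$ thanks to the reality of $\widehat{J}$; however, justifying that such a kernel exists and lies in $L^{1}$ essentially reduces to the same Wiener-algebra machinery.
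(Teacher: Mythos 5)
Your proposal is correct, and for the first assertion (linearity, boundedness, and the fact that $A$ maps $X_{\infty}$ into itself) it follows essentially the same path as the paper: Young's inequality gives $\Vert A\Vert\leq 2$, and membership in $X_{\infty}$ is checked on the dense subspace $\mathcal{D}(\mathbb{Q}_{p}^{n})$ and extended by continuity (the paper does this slightly differently, by showing $A_{N}\varphi\rightarrow A\varphi$ in $\Vert\cdot\Vert_{\infty}$ via dominated convergence, but the substance is the same). For the spectral inclusion the two arguments genuinely diverge. The paper's proof is a one-line observation: $A\varphi=\lambda\varphi$ is equivalent to $J\ast\varphi=(1-\lambda)\varphi$, and $\Vert J\ast\cdot\Vert\leq 1$ forces $0\leq\lambda\leq 2$. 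Taken literally this only constrains the \emph{point} spectrum, and the norm bound $\Vert I-A\Vert\leq 1$ by itself only yields $\sigma(A)\subseteq\{z:\vert z-1\vert\leq 1\}$, a disk rather than the real interval $[0,2]$. Your Gelfand-theoretic argument --- pushing the spectrum of $T_{J}$ into the spectrum of $(J,0)$ in the unitization of $L^{1}(\mathbb{Q}_{p}^{n})$, identifying the latter with $\widehat{J}(\mathbb{Q}_{p}^{n})\cup\{0\}$, and using radiality of $J$ to conclude that $\widehat{J}$ is real-valued with $\vert\widehat{J}\vert\leq 1$ --- actually proves the full inclusion $\sigma(A)\subseteq[0,2]$ as stated, so it is the more complete (if heavier) argument. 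It is worth noting that the only place the paper later uses this lemma's spectral claim is to produce some $\lambda_{0}\in(-\infty,0)\cap\rho(A)$ (Lemma \ref{Lemma6}), for which the elementary disk bound $\sigma(A)\subseteq\{z:\vert z-1\vert\leq 1\}$, i.e. your Neumann-series remark, already suffices; so your machinery buys a sharper conclusion than the application strictly requires, while the paper's shortcut buys brevity at the cost of only verifying the claim for eigenvalues.
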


\begin{proof}
By the Young inequality, $A$ $\in\mathfrak{B}\left(  X_{\infty},L^{\infty
}\right)  $. Now, by construction $\mathcal{D}(%
\mathbb{Q}
_{p}^{n})$ is dense in $X_{\infty}\left(
\mathbb{Q}
_{p}^{n}\right)  $ with respect to $\left\Vert \cdot\right\Vert _{\infty}$,
then in order to show that $A$ is densely defined and continuous, from
$X_{\infty}$ into itself, it is sufficient to show that $A\varphi\in
X_{\infty}$ for $\varphi\in\mathcal{D}_{N}^{-N}\subset\mathcal{D}(%
\mathbb{Q}
_{p}^{n})$. Thus, we have to show that $A\varphi\in X_{\infty}$ for
$\varphi\in X_{N}$. By Lemma \ref{Lemma_D}, $\Omega\left(  p^{-N}\left\Vert
x\right\Vert _{p}\right)  A\varphi=A_{N}\varphi\in X_{N}$, now we show that
$A_{N}\varphi$ $\underrightarrow{\left\Vert \cdot\right\Vert _{\infty}}$
$A\varphi$ for $\varphi\in X_{N}$. In order to achieve this, by Lemma
\ref{Lemma_0}, it is sufficient to show that $J_{N}\left(  \left\Vert
x\right\Vert _{p}\right)  \ast\varphi\left(  x\right)  $ $\underrightarrow
{\left\Vert \cdot\right\Vert _{\infty}}$ $J\left(  \left\Vert x\right\Vert
_{p}\right)  \ast\varphi\left(  x\right)  $ for $\varphi\in X_{N}$. Indeed,%
\[
\left\Vert \left\{  J\left(  \left\Vert x\right\Vert _{p}\right)
-J_{N}\left(  \left\Vert x\right\Vert _{p}\right)  \right\}  \ast
\varphi\left(  x\right)  \right\Vert _{\infty}\leq\left\Vert \varphi
\right\Vert _{\infty}\int\limits_{%
\mathbb{Q}
_{p}^{n}}\left\vert J\left(  \left\Vert y\right\Vert _{p}\right)
-J_{N}\left(  \left\Vert y\right\Vert _{p}\right)  \right\vert d^{n}y.
\]
This last integral tends to zero as $N$ tends to infinity by the Dominated
Convergence Theorem, recall that $\int_{%
\mathbb{Q}
_{p}^{n}}J_{N}\left(  \left\Vert y\right\Vert _{p}\right)  d^{n}y\leq\int_{%
\mathbb{Q}
_{p}^{n}}J\left(  \left\Vert y\right\Vert _{p}\right)  d^{n}y=1$.

The comment about the spectrum of $A$ \ follows from the \ following
observation: the equation $A\varphi=\lambda\varphi$ is equivalent to
$J\ast\varphi=\left(  1-\lambda\right)  \varphi$ and since $\left\Vert
J\ast\cdot\right\Vert \leq1$, we have $0\leq\lambda\leq2$.
\end{proof}

\section{\label{sect_matrix_repre}The Matrix Representation of operators
$A_{N}$ and Markov Chains}

By using the basis $\left\{  \Omega\left(  p^{N}\left\Vert x-\boldsymbol{i}%
\right\Vert _{p}\right)  \right\}  _{\boldsymbol{i}\in G_{N}^{n}}$, we
identify $X_{N}$ with $\left(  \mathbb{R}^{\#G_{N}^{n}},\left\Vert
\cdot\right\Vert _{\mathbb{R}}\right)  $, thus operator $A_{_{N}}$ is given by
a matrix. This matrix is computed by means of the following two lemmas.

\begin{lemma}
\label{Lemma_B}Set $\mathfrak{a}\left(  x,\boldsymbol{i}\right)
:=J_{N}\left(  \left\Vert x\right\Vert _{p}\right)  \ast\Omega\left(
p^{N}\left\Vert x-\boldsymbol{i}\right\Vert _{p}\right)  $ for $x\in B_{N}%
^{n}$, $\boldsymbol{i}\in G_{N}^{n}$. Let $\widetilde{x}$ denote the image of
$x$ under the canonical map $B_{N}^{n}\rightarrow G_{N}^{n}$. Then
\[
\mathfrak{a}\left(  x,\boldsymbol{i}\right)  =\mathfrak{a}\left(
\widetilde{x},\boldsymbol{i}\right)  =\left\{
\begin{array}
[c]{ccc}%
p^{-Nn}J\left(  p^{-ord(\widetilde{x}-\boldsymbol{i})}\right)  & \text{if} &
ord(\widetilde{x}-\boldsymbol{i})\neq+\infty\\
&  & \\
\int\limits_{\left(  p^{N}\mathbb{Z}_{p}\right)  ^{n}}J\left(  \left\Vert
y\right\Vert _{p}\right)  d^{n}y & \text{if} & ord(\widetilde{x}%
-\boldsymbol{i})=+\infty.
\end{array}
\right.
\]

\end{lemma}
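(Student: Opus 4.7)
The plan is to compute $\mathfrak{a}(x,\boldsymbol{i})$ by writing it out explicitly and exploiting the ultrametric inequality to see that $J_N(\|x-y\|_p)$ is constant on the small ball $\boldsymbol{i}+(p^N\mathbb{Z}_p)^n$ whenever the two ``scales'' are separated.

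First I would recognize that $\Omega(p^N\|y-\boldsymbol{i}\|_p)$ is the characteristic function of $\boldsymbol{i}+(p^N\mathbb{Z}_p)^n=B_{-N}^n(\boldsymbol{i})$, so that
\[
\mathfrak{a}(x,\boldsymbol{i})=\int_{\boldsymbol{i}+(p^N\mathbb{Z}_p)^n} J_N(\|x-y\|_p)\,d^ny.
\]
The value therefore depends only on the coset $\widetilde{x}$ of $x$ modulo $B_{-N}^n$ (replacing $x$ by $x+\xi$ with $\xi\in(p^N\mathbb{Z}_p)^n$ amounts to a translation of the integration domain, which is $B_{-N}^n$-invariant), which already gives the identity $\mathfrak{a}(x,\boldsymbol{i})=\mathfrak{a}(\widetilde{x},\boldsymbol{i})$.

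Next I would split into the two cases prescribed by the statement. If $\mathrm{ord}(\widetilde{x}-\boldsymbol{i})\neq+\infty$, then $x-\boldsymbol{i}\notin(p^N\mathbb{Z}_p)^n$, i.e.\ $\|x-\boldsymbol{i}\|_p>p^{-N}$. For any $y=\boldsymbol{i}+p^Nz$ with $z\in\mathbb{Z}_p^n$, the ultrametric inequality gives $\|x-y\|_p=\|(x-\boldsymbol{i})-p^Nz\|_p=\|x-\boldsymbol{i}\|_p$, since $\|p^Nz\|_p\le p^{-N}<\|x-\boldsymbol{i}\|_p$. Both $x$ and $\boldsymbol{i}$ lie in $B_N^n$, so $\|x-\boldsymbol{i}\|_p\le p^N$ and $J_N(\|x-\boldsymbol{i}\|_p)=J(\|x-\boldsymbol{i}\|_p)$. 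The integrand is constant on the ball, whose volume is $p^{-Nn}$, yielding $p^{-Nn}J(p^{-\mathrm{ord}(\widetilde{x}-\boldsymbol{i})})$ after noting $\mathrm{ord}(x-\boldsymbol{i})=\mathrm{ord}(\widetilde{x}-\boldsymbol{i})$ in this regime.

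If instead $\mathrm{ord}(\widetilde{x}-\boldsymbol{i})=+\infty$, then $x-\boldsymbol{i}\in(p^N\mathbb{Z}_p)^n$, and the change of variable $z=x-y$ sends the domain $\boldsymbol{i}+(p^N\mathbb{Z}_p)^n$ to $(x-\boldsymbol{i})+(p^N\mathbb{Z}_p)^n=(p^N\mathbb{Z}_p)^n$. Since $(p^N\mathbb{Z}_p)^n\subset(p^{-N}\mathbb{Z}_p)^n$, we have $J_N=J$ on the new domain, so the integral becomes $\int_{(p^N\mathbb{Z}_p)^n}J(\|y\|_p)\,d^ny$, matching the claimed value. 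There is no real obstacle here; the only subtlety is keeping the two radii $p^{-N}$ (support of $\Omega(p^N\|\cdot\|_p)$) and $p^N$ (support of $J_N$) straight, and applying the ultrametric triangle inequality in the right direction to collapse $J_N(\|x-y\|_p)$ to a constant across the small ball.
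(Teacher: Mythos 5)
Your proof is correct and follows essentially the same route as the paper: both reduce the convolution to an integral over a single coset $\boldsymbol{i}+\left(p^{N}\mathbb{Z}_{p}\right)^{n}$ and then use the ultrametric (isosceles) property to see that $\left\Vert x-y\right\Vert _{p}=\left\Vert \widetilde{x}-\boldsymbol{i}\right\Vert _{p}$ is constant on that coset precisely when $ord(\widetilde{x}-\boldsymbol{i})\neq+\infty$, with the remaining case handled by a translation-invariance change of variables. The only cosmetic difference is that the paper first shifts the domain to $\widetilde{x}-\boldsymbol{i}+\left(p^{N}\mathbb{Z}_{p}\right)^{n}$ and drops the cutoff $\Omega\left(p^{-N}\left\Vert \cdot\right\Vert _{p}\right)$ at the outset, whereas you keep $J_{N}$ and discharge the cutoff within each case; both are valid.
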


\begin{proof}
We first notice that since $\widetilde{x}-\boldsymbol{i}+\left(
p^{N}\mathbb{Z}_{p}\right)  ^{n}\subset B_{N}^{n}$, it \ verifies that
\begin{align}
\mathfrak{a}\left(  x,\boldsymbol{i}\right)   &  =\int
\limits_{x-\boldsymbol{i}+\left(  p^{N}\mathbb{Z}_{p}\right)  ^{n}}%
J_{N}\left(  \left\Vert y\right\Vert _{p}\right)  d^{n}y=\int
\limits_{\widetilde{x}-\boldsymbol{i}+\left(  p^{N}\mathbb{Z}_{p}\right)
\ ^{n}}J_{N}\left(  \left\Vert y\right\Vert _{p}\right)  d^{n}y\nonumber\\
&  =\int\limits_{\widetilde{x}-\boldsymbol{i}+\left(  p^{N}\mathbb{Z}%
_{p}\right)  ^{n}}J\left(  \left\Vert y\right\Vert _{p}\right)  d^{n}%
y=J\left(  \left\Vert \widetilde{x}\right\Vert _{p}\right)  \ast\Omega\left(
p^{N}\left\Vert \widetilde{x}-\boldsymbol{i}\right\Vert _{p}\right)
\nonumber\\
&  =\int\limits_{\left(  p^{N}\mathbb{Z}_{p}\right)  ^{n}}J\left(  \left\Vert
\widetilde{x}-\boldsymbol{i}-y\right\Vert _{p}\right)  d^{n}y.
\label{Formula_H}%
\end{align}
On the other hand, since $ord(G_{N}^{n})=\left\{  -N,-N+1,\ldots
,0,1,\ldots,N-1,+\infty\right\}  $ and $ord(y)\geq N$ for $y\in\left(
p^{N}\mathbb{Z}_{p}\right)  ^{n}$, then $\left\Vert \widetilde{x}%
-\boldsymbol{i}-y\right\Vert _{p}=\left\Vert \widetilde{x}-\boldsymbol{i}%
\right\Vert _{p}$ if and only if $ord\left(  \widetilde{x}-\boldsymbol{i}%
\right)  \neq+\infty$. The announced formula now follows from (\ref{Formula_H}).
\end{proof}

\begin{remark}
Notice that $\mathfrak{a}\left(  \widetilde{x},\boldsymbol{i}\right)
=\mathfrak{a}\left(  \boldsymbol{i},\widetilde{x}\right)  =\mathfrak{a}\left(
\left\Vert \widetilde{x}-\boldsymbol{i}\right\Vert _{p}\right)  $, where
$\mathfrak{a}\left(  \left\Vert \widetilde{x}-\boldsymbol{i}\right\Vert
_{p}\right)  $ means that there exists a function $g:\mathbb{R}_{+}%
\mathbb{\rightarrow R}$ such that $\mathfrak{a}\left(  \boldsymbol{i}%
,\widetilde{x}\right)  =g\left(  \left\Vert \widetilde{x}-\boldsymbol{i}%
\right\Vert _{p}\right)  $, i.e. $\mathfrak{a}\left(  \boldsymbol{i}%
,\widetilde{x}\right)  $ is a radial function \ of $\widetilde{x}%
-\boldsymbol{i}$.
\end{remark}

\begin{lemma}
\label{Lemma_C}The matrix for operator $A_{N}$ acting on $X_{N}$ is
$A^{\left(  N\right)  }=\left[  A_{\boldsymbol{ki}}^{\left(  N\right)
}\right]  _{\boldsymbol{k},\boldsymbol{i}\in G_{N}^{n}}=\left[  j_{N}%
\delta_{\boldsymbol{ki}}-\mathfrak{a}_{\boldsymbol{ki}}\right]
_{\boldsymbol{k},\boldsymbol{i}\in G_{N}^{n}}$, where $\mathfrak{a}%
_{\boldsymbol{ki}}:=\mathfrak{a}(\boldsymbol{k},\boldsymbol{i})$ and
$\delta_{\boldsymbol{ki}}$ denotes the Kronecker delta.
\end{lemma}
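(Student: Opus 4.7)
The plan is to compute the action of $A_N$ on each element of the basis $\{\Omega(p^N \|x - \boldsymbol{i}\|_p)\}_{\boldsymbol{i} \in G_N^n}$ of $X_N$ and then simply read off the matrix entries in this basis. Since $A_N$ is linear and bounded, this will completely determine its matrix representation.

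First, I would apply the formula from Lemma \ref{Lemma_D}(ii), namely
\[
A_N \phi(x) = -\bigl\{J_N(\|x\|_p) * \phi(x) - j_N \phi(x)\bigr\},
\]
to the basis element $\phi_{\boldsymbol{i}}(x) := \Omega(p^N \|x - \boldsymbol{i}\|_p)$. This yields
\[
A_N \phi_{\boldsymbol{i}}(x) = -\mathfrak{a}(x, \boldsymbol{i}) + j_N \Omega(p^N \|x - \boldsymbol{i}\|_p),
\]
where $\mathfrak{a}(x, \boldsymbol{i}) = J_N(\|x\|_p) * \Omega(p^N \|x - \boldsymbol{i}\|_p)$ is exactly the convolution analyzed in Lemma \ref{Lemma_B}.

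The key step is then to invoke Lemma \ref{Lemma_B}, which asserts that $\mathfrak{a}(x, \boldsymbol{i})$ depends on $x$ only through its image $\widetilde{x} \in G_N^n$, i.e. $\mathfrak{a}(x, \boldsymbol{i}) = \mathfrak{a}(\widetilde{x}, \boldsymbol{i})$. Consequently $\mathfrak{a}(\cdot, \boldsymbol{i})$, viewed as a function on $B_N^n$, is constant on each coset of $(p^N \mathbb{Z}_p)^n$ and thus lies in $X_N$, with the explicit expansion
\[
\mathfrak{a}(x, \boldsymbol{i}) = \sum_{\boldsymbol{k} \in G_N^n} \mathfrak{a}(\boldsymbol{k}, \boldsymbol{i})\, \Omega(p^N \|x - \boldsymbol{k}\|_p).
\]
Combining this with the trivial expansion $\Omega(p^N \|x - \boldsymbol{i}\|_p) = \sum_{\boldsymbol{k}} \delta_{\boldsymbol{k}\boldsymbol{i}} \Omega(p^N \|x - \boldsymbol{k}\|_p)$ gives
\[
A_N \phi_{\boldsymbol{i}}(x) = \sum_{\boldsymbol{k} \in G_N^n} \bigl[j_N \delta_{\boldsymbol{k}\boldsymbol{i}} - \mathfrak{a}_{\boldsymbol{k}\boldsymbol{i}}\bigr]\, \Omega(p^N \|x - \boldsymbol{k}\|_p),
\]
from which the claimed matrix $A^{(N)} = [j_N \delta_{\boldsymbol{k}\boldsymbol{i}} - \mathfrak{a}_{\boldsymbol{k}\boldsymbol{i}}]$ is read off column by column.

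There is no substantive obstacle: all of the analytic work (the restriction of supports, the computation of the convolution on balls, and the fact that $\mathfrak{a}$ is radial in $\widetilde{x} - \boldsymbol{i}$) is already contained in Lemma \ref{Lemma_D} and Lemma \ref{Lemma_B}. The only point that requires a moment's care is the verification that $\mathfrak{a}(\cdot, \boldsymbol{i})$ really belongs to $X_N$ so that its expansion in the chosen basis is legitimate; this follows immediately from Lemma \ref{Lemma_B} together with the containment $\mathrm{supp}\, J_N * \phi_{\boldsymbol{i}} \subset B_N^n$ noted in the proof of Lemma \ref{Lemma_D}(ii).
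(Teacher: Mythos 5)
Your proposal is correct and follows the same overall strategy as the paper: apply $A_{N}$ to each basis element $\Omega\left(  p^{N}\left\Vert x-\boldsymbol{i}\right\Vert _{p}\right)  $, expand the result in the basis $\left\{  \Omega\left(  p^{N}\left\Vert x-\boldsymbol{k}\right\Vert _{p}\right)  \right\}  _{\boldsymbol{k}\in G_{N}^{n}}$, and read off the matrix column by column. The one point where you genuinely diverge is in identifying the coefficients of $J_{N}\left(  \left\Vert x\right\Vert _{p}\right)  \ast\Omega\left(  p^{N}\left\Vert x-\boldsymbol{i}\right\Vert _{p}\right)  $: the paper extracts them by pairing against the orthogonal basis, which leads to the explicit computation (\ref{Calculo}) ending in $d_{\boldsymbol{ki}}=\int_{\boldsymbol{k}-\boldsymbol{i}+\left(  p^{N}\mathbb{Z}_{p}\right)  ^{n}}J\left(  \left\Vert y\right\Vert _{p}\right)  d^{n}y=\mathfrak{a}\left(  \boldsymbol{k},\boldsymbol{i}\right)  $, whereas you observe that Lemma \ref{Lemma_B} already asserts $\mathfrak{a}\left(  x,\boldsymbol{i}\right)  =\mathfrak{a}\left(  \widetilde{x},\boldsymbol{i}\right)  $, so the convolution is constant on each coset $\boldsymbol{k}+\left(  p^{N}\mathbb{Z}_{p}\right)  ^{n}$ and the coefficient of $\Omega\left(  p^{N}\left\Vert x-\boldsymbol{k}\right\Vert _{p}\right)  $ is simply the value $\mathfrak{a}\left(  \boldsymbol{k},\boldsymbol{i}\right)  $ at a representative. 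Your route is shorter because it reuses Lemma \ref{Lemma_B} as stated rather than recomputing the same integral; the paper's inner-product formulation has the mild side benefit of making the symmetry $\mathfrak{a}_{\boldsymbol{ki}}=\mathfrak{a}_{\boldsymbol{ik}}$ (hence the self-adjointness of $A_{N}$) transparent. You are also right that the only point requiring care is the support containment $\mathrm{supp}\left(  J_{N}\ast\Omega\left(  p^{N}\left\Vert \cdot-\boldsymbol{i}\right\Vert _{p}\right)  \right)  \subset B_{N}^{n}$, which is exactly what Lemma \ref{Lemma_D} supplies, so the expansion over $\boldsymbol{k}\in G_{N}^{n}$ accounts for the whole function. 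No gap.
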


\begin{proof}
Notice that $\left\{  \Omega\left(  p^{N}\left\Vert x-\boldsymbol{i}%
\right\Vert _{p}\right)  \right\}  _{\boldsymbol{i}\in G_{N}^{n}}$ is an
orthogonal basis of $X_{N}$ under the usual inner product of real-valued
functions, which is denoted as $\left\langle \cdot,\cdot\right\rangle $,
since
\[
\Omega\left(  p^{N}\left\Vert x-\boldsymbol{i}\right\Vert _{p}\right)
\Omega\left(  p^{N}\left\Vert x-\boldsymbol{j}\right\Vert _{p}\right)
\equiv0\text{ if }\boldsymbol{i\neq j}.
\]
Take $\varphi\left(  x\right)  =\sum_{\boldsymbol{i}\in G_{N}^{n}}%
\varphi\left(  \boldsymbol{i}\right)  \Omega\left(  p^{N}\left\Vert
x-\boldsymbol{i}\right\Vert _{p}\right)  $, then
\[
J_{N}\left(  \left\Vert x\right\Vert _{p}\right)  \ast\Omega\left(
p^{N}\left\Vert x-\boldsymbol{i}\right\Vert _{p}\right)  =\sum
\limits_{\boldsymbol{k}\in G_{N}^{n}}d_{\boldsymbol{ki}}\Omega\left(
p^{N}\left\Vert x-\boldsymbol{k}\right\Vert _{p}\right)  ,
\]
with $d_{\boldsymbol{ki}}\in\mathbb{R}$, cf. Lemma \ref{Lemma_D}, and%
\[
A_{N}\varphi\left(  x\right)  =\sum\limits_{\boldsymbol{k}\in G_{N}^{n}%
}\left\{  \sum\limits_{\boldsymbol{i}\in G_{N}^{n}}\left(  j_{N}%
\delta_{\boldsymbol{ki}}-d_{\boldsymbol{ki}}\right)  \varphi\left(
\boldsymbol{i}\right)  \right\}  \Omega\left(  p^{N}\left\Vert
x-\boldsymbol{k}\right\Vert _{p}\right)  .
\]
Hence $\left[  j_{N}\delta_{\boldsymbol{ki}}-d_{\boldsymbol{ki}}\right]
_{\boldsymbol{k},\boldsymbol{i}\in G_{N}^{n}}$ is the matrix representation
for operator $A_{N}$ acting on $X_{N}$. To compute the coefficients
$d_{\boldsymbol{ki}}$ we proceed as follows. By using that%
\[
\Omega\left(  p^{N}\left\Vert y-\boldsymbol{j}\right\Vert _{p}\right)
\ast\Omega\left(  p^{N}\left\Vert y-\boldsymbol{l}\right\Vert _{p}\right)
=p^{-Nn}\Omega\left(  p^{N}\left\Vert y-\left(  \boldsymbol{j}+\boldsymbol{l}%
\right)  \right\Vert _{p}\right)  ,
\]
we have%
\begin{gather}
d_{\boldsymbol{ki}}=P^{Nn}\left\langle J_{N}\left(  \left\Vert x\right\Vert
_{p}\right)  \ast\Omega\left(  p^{N}\left\Vert x-\boldsymbol{i}\right\Vert
_{p}\right)  ,\Omega\left(  p^{N}\left\Vert x-\boldsymbol{k}\right\Vert
_{p}\right)  \right\rangle \nonumber\\
=p^{Nn}\int\limits_{%
\mathbb{Q}
_{p}^{n}}\int\limits_{%
\mathbb{Q}
_{p}^{n}}J_{N}\left(  \left\Vert y\right\Vert _{p}\right)  \Omega\left(
p^{N}\left\Vert x-\boldsymbol{i}-y\right\Vert _{p}\right)  \Omega\left(
p^{N}\left\Vert x-\boldsymbol{k}\right\Vert _{p}\right)  d^{n}yd^{n}%
x\nonumber\\
=p^{Nn}\int\limits_{%
\mathbb{Q}
_{p}^{n}}J_{N}\left(  \left\Vert y\right\Vert _{p}\right)  \Omega\left(
p^{N}\left\Vert y+\boldsymbol{i}\right\Vert _{p}\right)  \ast\Omega\left(
p^{N}\left\Vert y-\boldsymbol{k}\right\Vert _{p}\right)  d^{n}y\nonumber\\
=\int\limits_{%
\mathbb{Q}
_{p}^{n}}J_{N}\left(  \left\Vert y\right\Vert _{p}\right)  \Omega\left(
p^{N}\left\Vert y-\left(  \boldsymbol{k}-\boldsymbol{i}\right)  \right\Vert
_{p}\right)  d^{n}y=\int\limits_{\boldsymbol{k}-\boldsymbol{i}+\left(
p^{N}\mathbb{Z}_{p}\right)  ^{n}}J\left(  \left\Vert y\right\Vert _{p}\right)
d^{n}y\nonumber\\
=\mathfrak{a}(\boldsymbol{k},\boldsymbol{i}), \label{Calculo}%
\end{gather}
cf. Lemma \ref{Lemma_B} .
\end{proof}

\begin{lemma}
\label{Lemma_D1}$-A^{\left(  N\right)  }$ is a $Q$-matrix, i.e.
$-A_{\boldsymbol{ij}}^{\left(  N\right)  }\geq0$ for $\boldsymbol{i}%
\neq\boldsymbol{j}$ with $\boldsymbol{i}$, $\boldsymbol{j}\in G_{N}^{n}$, and
$A_{\boldsymbol{ii}}^{\left(  N\right)  }=-\sum_{\boldsymbol{j}\neq
\boldsymbol{i}}A_{\boldsymbol{ij}}^{\left(  N\right)  }$.
\end{lemma}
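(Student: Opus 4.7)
The plan is to read off both claims directly from the explicit matrix formula $A^{(N)}_{\boldsymbol{ki}} = j_N\delta_{\boldsymbol{ki}} - \mathfrak{a}_{\boldsymbol{ki}}$ given by Lemma \ref{Lemma_C}, together with the integral representation of $\mathfrak{a}_{\boldsymbol{ki}}$ from Lemma \ref{Lemma_B}.

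For the off-diagonal sign condition, I would observe that when $\boldsymbol{i}\neq\boldsymbol{j}$ one has $A^{(N)}_{\boldsymbol{ij}} = -\mathfrak{a}_{\boldsymbol{ij}}$, so the claim $-A^{(N)}_{\boldsymbol{ij}}\geq 0$ reduces to $\mathfrak{a}_{\boldsymbol{ij}}\geq 0$. This is immediate from Lemma \ref{Lemma_B}, since $\mathfrak{a}_{\boldsymbol{ij}}$ is either the value $p^{-Nn}J(p^{-\mathrm{ord}(\boldsymbol{j}-\boldsymbol{i})})$ or an integral of $J$ over a ball, and $J:\mathbb{R}_+\to\mathbb{R}_+$ is nonnegative.

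For the row-sum identity, the key observation is that the family $\{\boldsymbol{j}-\boldsymbol{i}+(p^N\mathbb{Z}_p)^n\}_{\boldsymbol{j}\in G_N^n}$ is, after translating by $-\boldsymbol{i}$, a partition of $B_N^n=(p^{-N}\mathbb{Z}_p)^n$ into cosets of $(p^N\mathbb{Z}_p)^n$. Using formula (\ref{Formula_H}) for $\mathfrak{a}_{\boldsymbol{ij}}$, I would compute
\[
\sum_{\boldsymbol{j}\in G_N^n}\mathfrak{a}_{\boldsymbol{ij}}=\sum_{\boldsymbol{j}\in G_N^n}\int_{\boldsymbol{j}-\boldsymbol{i}+(p^N\mathbb{Z}_p)^n}J(\|y\|_p)\,d^ny=\int_{B_N^n}J(\|y\|_p)\,d^ny=j_N,
\]
where the second equality uses that $\boldsymbol{j}\mapsto\boldsymbol{j}-\boldsymbol{i}$ is a bijection of $G_N^n$ and that the resulting cosets tile $B_N^n$ disjointly. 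Splitting off the diagonal term gives $\mathfrak{a}_{\boldsymbol{ii}}+\sum_{\boldsymbol{j}\neq\boldsymbol{i}}\mathfrak{a}_{\boldsymbol{ij}}=j_N$, which rearranges to $A^{(N)}_{\boldsymbol{ii}}=j_N-\mathfrak{a}_{\boldsymbol{ii}}=\sum_{\boldsymbol{j}\neq\boldsymbol{i}}\mathfrak{a}_{\boldsymbol{ij}}=-\sum_{\boldsymbol{j}\neq\boldsymbol{i}}A^{(N)}_{\boldsymbol{ij}}$, as required.

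There is no real obstacle here; the proof is essentially a bookkeeping check once one notices the coset-partition structure. The only point deserving a moment of care is to confirm that the representatives fixed in (\ref{representatives}) actually put the cosets $\boldsymbol{j}-\boldsymbol{i}+(p^N\mathbb{Z}_p)^n$ inside $B_N^n$ (so that the integrals can be reassembled into $\int_{B_N^n}$), which follows from the ultrametric inequality since each $\boldsymbol{j}-\boldsymbol{i}\in B_N^n$ and $(p^N\mathbb{Z}_p)^n=B_{-N}^n\subset B_N^n$.
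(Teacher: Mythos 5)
Your proof is correct and follows essentially the same route as the paper: off-diagonal nonnegativity from $\mathfrak{a}_{\boldsymbol{ij}}\geq 0$ (an integral of the nonnegative kernel $J$), and the row-sum identity from the fact that the cosets $B_{-N}^{n}(\boldsymbol{k})$, $\boldsymbol{k}\in G_{N}^{n}$, tile $B_{N}^{n}$ disjointly and that translation by $-\boldsymbol{i}$ permutes $G_{N}^{n}$. The only cosmetic difference is that you sum over all $\boldsymbol{j}$ and then split off the diagonal term, whereas the paper sums over $\boldsymbol{j}\neq\boldsymbol{i}$ directly; the content is identical.
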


\begin{proof}
We first notice that
\[
j_{N}-\mathfrak{a}_{\boldsymbol{ii}}=\int\limits_{B_{N}^{n}}J\left(
\left\Vert y\right\Vert _{p}\right)  d^{n}y-\int\limits_{B_{-N}^{n}}J\left(
\left\Vert y\right\Vert _{p}\right)  d^{n}y\geq0,
\]
cf. Lemmas \ref{Lemma_C}, \ref{Lemma_B}. Now, $A_{\boldsymbol{ij}}^{\left(
N\right)  }=-\mathfrak{a}_{\boldsymbol{ij}}$ for $\boldsymbol{j}%
\neq\boldsymbol{i}$ with $\mathfrak{a}_{\boldsymbol{ij}}\geq0$, and by using
that $G_{N}^{n}$ is an additive group, and that $B_{N}^{n}=%
{\textstyle\coprod\nolimits_{\boldsymbol{k}\in G_{N}^{n}}}
B_{-N}^{n}\left(  \boldsymbol{k}\right)  $, we have%
\begin{align*}
\sum\limits_{\boldsymbol{j}\neq\boldsymbol{i}}\mathfrak{a}_{\boldsymbol{ij}}
&  =\sum\limits_{\boldsymbol{j}\neq\boldsymbol{i}}\text{ }\int\limits_{B_{-N}%
^{n}\left(  \boldsymbol{j}-\boldsymbol{i}\right)  }J\left(  \left\Vert
y\right\Vert _{p}\right)  d^{n}y=\sum\limits_{\boldsymbol{k}\neq
\boldsymbol{0}}\int\limits_{B_{-N}^{n}\left(  \boldsymbol{k}\right)  }J\left(
\left\Vert y\right\Vert _{p}\right)  d^{n}y\\
&  =\int\limits_{B_{N}^{n}}J\left(  \left\Vert y\right\Vert _{p}\right)
d^{n}y-\int\limits_{B_{-N}^{n}}J\left(  \left\Vert y\right\Vert _{p}\right)
d^{n}y=j_{N}-\mathfrak{a}_{\boldsymbol{ii}}\text{,}%
\end{align*}
i.e. $A_{\boldsymbol{ii}}^{\left(  N\right)  }=-\sum_{\boldsymbol{j}%
\neq\boldsymbol{i}}A_{\boldsymbol{ij}}^{\left(  N\right)  }$.
\end{proof}

A real matrix $A$ is called \textit{nonnegative} if each of its entries is
greater than or equal to zero, in this case, we use the notation
$A\geq\boldsymbol{0}$. Similarly, we say that a real matrix is
\textit{nonpositive} if each of its entries is less than or equal to zero, in
this case, we use the notation $A\leq\boldsymbol{0}$. We denote by
$\mathbb{E}$ the identity matrix and by $\boldsymbol{1}$ the unit vector,
which is \ the vector having all its entries equal to one.

\begin{theorem}
\label{Theorem2B}(i) Set $P^{\left(  N\right)  }\left(  t\right)
:=e^{-tA^{\left(  N\right)  }}$, $t\geq0$. Then $P^{\left(  N\right)  }\left(
t\right)  $ is a semigroup of nonnegative matrices, with $P^{\left(  N\right)
}\left(  0\right)  =\mathbb{E}$, satisfying
\[
\frac{\partial P^{\left(  N\right)  }\left(  t\right)  }{\partial
t}+A^{\left(  N\right)  }P^{\left(  N\right)  }\left(  t\right)  =0,
\]
and $P^{\left(  N\right)  }\left(  t\right)  \boldsymbol{1}=\boldsymbol{1}$
for $t\geq0$.

\noindent(ii) The function $P^{\left(  N\right)  }\left(  t-s\right)  $,
$t\geq s\geq0$, is the transition function of a homogeneous Markov chain with
state space $G_{N}^{n}$. Furthermore, this stochastic process has
right-continuous piece-wise-constant paths.
\end{theorem}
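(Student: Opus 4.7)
For part (i), my plan is to derive all four claims directly from the series representation $P^{(N)}(t)=\sum_{k=0}^{\infty}\frac{(-t A^{(N)})^{k}}{k!}$ together with the $Q$-matrix structure of Lemma \ref{Lemma_D1}. The identities $P^{(N)}(0)=\mathbb{E}$, the semigroup law $P^{(N)}(t+s)=P^{(N)}(t)P^{(N)}(s)$, and the differential equation $\partial_{t}P^{(N)}+A^{(N)}P^{(N)}=0$ are standard consequences of this series, since every power of $A^{(N)}$ commutes with $A^{(N)}$ itself. The row-sum identity $P^{(N)}(t)\boldsymbol{1}=\boldsymbol{1}$ will follow from $A^{(N)}\boldsymbol{1}=\boldsymbol{0}$, which is precisely the content of Lemma \ref{Lemma_D1}; then $(A^{(N)})^{k}\boldsymbol{1}=\boldsymbol{0}$ for every $k\geq 1$, so only the $k=0$ term of the exponential acts nontrivially on $\boldsymbol{1}$.

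The only nontrivial point in (i) is nonnegativity, and the key trick is a nonnegative decomposition: by Lemma \ref{Lemma_C}, write $A^{(N)}=j_{N}\mathbb{E}-B^{(N)}$ with $B^{(N)}_{\boldsymbol{k}\boldsymbol{i}}:=\mathfrak{a}_{\boldsymbol{k}\boldsymbol{i}}$, so that $B^{(N)}\geq\boldsymbol{0}$ by the explicit formula in Lemma \ref{Lemma_B}. Since $j_{N}\mathbb{E}$ commutes with $B^{(N)}$,
\[
P^{(N)}(t)=e^{-t j_{N}}\,\exp\!\bigl(t B^{(N)}\bigr)=e^{-t j_{N}}\sum_{k=0}^{\infty}\frac{t^{k}\,(B^{(N)})^{k}}{k!}.
\]
Each $(B^{(N)})^{k}$ is entrywise nonnegative, so the series is termwise nonnegative, and multiplying by the positive scalar $e^{-tj_{N}}$ preserves nonnegativity.

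For part (ii) I will invoke the classical finite-state continuous-time Markov chain construction. Part (i) shows that each row $\{P^{(N)}_{\boldsymbol{i}\boldsymbol{j}}(t)\}_{\boldsymbol{j}\in G_{N}^{n}}$ is nonnegative and sums to one, hence defines a probability measure on the finite space $G_{N}^{n}$; the semigroup law is the Chapman--Kolmogorov equation, so Kolmogorov's extension theorem produces a time-homogeneous Markov chain $\{X_{t}\}_{t\geq 0}$ on $G_{N}^{n}$ with transition function $P^{(N)}(t-s)$. Because the state space is finite and the generator $-A^{(N)}$ is a bounded $Q$-matrix (Lemma \ref{Lemma_D1}), the chain is realized by the minimal jump construction: starting from $\boldsymbol{i}$ one waits an exponential time of parameter $|A^{(N)}_{\boldsymbol{i}\boldsymbol{i}}|$ and then jumps to $\boldsymbol{j}\neq\boldsymbol{i}$ with probability $(-A^{(N)}_{\boldsymbol{i}\boldsymbol{j}})/|A^{(N)}_{\boldsymbol{i}\boldsymbol{i}}|$; this construction yields right-continuous piecewise-constant paths by design.

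The only real obstacle is the nonnegativity in (i), which would fail for a generic matrix exponential. The splitting $A^{(N)}=j_{N}\mathbb{E}-B^{(N)}$ with $B^{(N)}\geq\boldsymbol{0}$ is what makes it work, and this is precisely where the $Q$-matrix structure of Lemma \ref{Lemma_D1} does the decisive work; everything else reduces either to the series calculus for $\exp(\cdot)$ or to the textbook correspondence between bounded $Q$-matrices and finite-state jump Markov chains.
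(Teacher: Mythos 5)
Your proposal is correct and rests on exactly the same foundation as the paper's proof, namely the $Q$-matrix structure established in Lemma \ref{Lemma_D1}; the paper simply delegates the remaining standard facts to a citation (\cite[Theorem 2.5]{Yin-Zhang}), whereas you prove them directly via the uniformization splitting $A^{(N)}=j_{N}\mathbb{E}-B^{(N)}$ with $B^{(N)}=[\mathfrak{a}_{\boldsymbol{k}\boldsymbol{i}}]\geq\boldsymbol{0}$ and the minimal jump construction. The only cosmetic point is that your jump description divides by $|A^{(N)}_{\boldsymbol{i}\boldsymbol{i}}|$, which degenerates when $\boldsymbol{i}$ is absorbing; that case is trivially consistent with right-continuous piecewise-constant paths and does not affect the argument.
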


\begin{proof}
The result follows from Lemma \ref{Lemma_D1} by using well-known results about
Markov chains, see e.g. \cite[Theorem 2.5]{Yin-Zhang}.
\end{proof}

\section{\label{Sect_G_Landau}Non-Archimedean Helmholtz Free-Energy
Functionals}

We define for $\varphi\in X_{N}$, and $\lambda>0$,%
\begin{equation}
E_{N}\left(  \varphi\right)  =\frac{1}{4}\int\limits_{B_{N}^{n}}%
\int\limits_{B_{N}^{n}}J_{N}\left(  \left\Vert x-y\right\Vert _{p}\right)
\left\{  \varphi\left(  x\right)  -\varphi\left(  y\right)  \right\}
^{2}d^{n}xd^{n}y+\lambda\int\limits_{B_{N}^{n}}W\left(  \varphi\left(
x\right)  \right)  d^{n}x, \label{E_N}%
\end{equation}
where $J_{N}\left(  \left\Vert x\right\Vert _{p}\right)  $ is as before,
$\varphi$ is a scalar density function defined on $B_{N}^{n}$ that takes
values in $\left[  -1,1\right]  $, $W:\mathbb{R}\rightarrow\mathbb{R}$, with
derivative $f\in C^{2}\left(  \mathbb{R}\right)  $, is a double-well potential
having (not necessarily equal) minima at $\pm1$. The functional $E_{N}\left(
\varphi\right)  $ is a non-Archimedean version\ of a non-local Helmholtz
free-energy functional. The function $\varphi$, the order parameter,
represents the macroscopic density profile of a system which has two
equilibrium pure phases described by the profiles $\varphi\equiv1$ and
$\varphi\equiv-1$, and $-1<\varphi<1$ represents the `interface'. The function
$J_{N}$ is a positive, possibly anisotropic, interaction potential which
vanishes at infinity. If $\varphi$ is an energy minimizing configuration, the
second term in $E_{N}$ forces the minimizer $\varphi$ to take values close the
pure states, while the first term in $E_{N}$ represents an interaction energy
which penalizes the spatial inhomogenety of $\varphi$.

In the classical Archimedean setting (i.e. $\mathbb{R}^{n}$), the $L^{2}%
$-gradient of functionals of type (\ref{E_N}) lead to the non-local versions
of Allen-Cahn equations, see \cite{Alberti-B-1}-\cite{Alberti-B-2},
\cite{BatesChmdj} The next result shows that a similar situation \ happens in
the non-Archimedean setting.

\begin{lemma}
\label{Lemma_F}(i) By identifying $\varphi\left(  x\right)  $ with the vector
$\left[  \varphi\left(  \boldsymbol{i}\right)  \right]  _{\boldsymbol{i}\in
G_{N}^{n}}$, i.e. by identifying $X_{N}$ with $\mathbb{R}^{\#G_{N}^{n}}$, we
have
\begin{align*}
E_{N}\left(  \left[  \varphi\left(  \boldsymbol{i}\right)  \right]
_{\boldsymbol{i}\in G_{N}^{n}}\right)   &  =\frac{j_{N}p^{-Nn}}{2}%
\sum\limits_{\boldsymbol{i}\in G_{N}^{n}}\varphi^{2}\left(  \boldsymbol{i}%
\right)  -\frac{p^{-Nn}}{2}\sum\limits_{\boldsymbol{i},\boldsymbol{j}\in
G_{N}^{n}}\mathfrak{a}_{\boldsymbol{ij}}\varphi\left(  \boldsymbol{i}\right)
\varphi\left(  \boldsymbol{j}\right) \\
&  +\lambda p^{-Nn}\sum\limits_{\boldsymbol{i}\in G_{N}^{n}}W\left(
\varphi\left(  \boldsymbol{i}\right)  \right)  ,
\end{align*}
where $\left[  \mathfrak{a}_{\boldsymbol{ij}}\right]  _{\boldsymbol{i,j}\in
G_{N}^{n}}$ is the matrix defined in Lemma \ref{Lemma_B}.

\noindent(ii) We assume that $\varphi$ depends on $\boldsymbol{i}\in G_{N}%
^{n}$ and $t\geq0$. The gradient flow in the Euclidean space $\mathbb{R}%
^{\#G_{N}^{n}}$ of the functional $E_{N}:\mathbb{R}^{\#G_{N}^{n}}%
\rightarrow\mathbb{R}$ is the evolution in $\mathbb{R}^{\#G_{N}^{n}}$ given
by
\begin{align}
\frac{\partial}{\partial t}\left[  \varphi\left(  \boldsymbol{i},t\right)
\right]  _{\boldsymbol{i}\in G_{N}^{n}}  &  =-\nabla E_{N}\left(  \left[
\varphi\left(  \boldsymbol{i},t\right)  \right]  _{\boldsymbol{i}\in G_{N}%
^{n}}\right) \label{Eq_gradient}\\
&  =-p^{-Nn}A^{\left(  N\right)  }\left[  \varphi\left(  \boldsymbol{i}%
,t\right)  \right]  _{\boldsymbol{i}\in G_{N}^{n}}-\lambda p^{-Nn}\left[
f\left(  \varphi\left(  \boldsymbol{i},t\right)  \right)  \right]
_{\boldsymbol{i}\in G_{N}^{n}},\nonumber
\end{align}
where $A^{\left(  N\right)  }$ is the matrix defined in Lemma \ref{Lemma_C}.
\end{lemma}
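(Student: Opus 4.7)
For part (i), I would substitute the basis expansion $\varphi(x) = \sum_{\boldsymbol{i}\in G_N^n} \varphi(\boldsymbol{i}) \Omega(p^N\|x-\boldsymbol{i}\|_p)$ into each of the three terms defining $E_N$ and reduce everything to the finite data $\{\varphi(\boldsymbol{i})\}_{\boldsymbol{i}\in G_N^n}$. The potential term is immediate from the fact that $\{B_{-N}^n(\boldsymbol{i})\}_{\boldsymbol{i}\in G_N^n}$ partitions $B_N^n$ with each ball of volume $p^{-Nn}$, producing $\int_{B_N^n} W(\varphi(x))\, d^n x = p^{-Nn}\sum_{\boldsymbol{i}} W(\varphi(\boldsymbol{i}))$.

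For the quadratic interaction term I would expand the square and exploit the symmetry of $J_N(\|x-y\|_p)$ in $x,y$ to rewrite it as
\begin{equation*}
\tfrac{1}{2}\int_{B_N^n}\varphi(x)^2\!\left(\int_{B_N^n} J_N(\|x-y\|_p)\, d^n y\right) d^n x - \tfrac{1}{2}\int_{B_N^n}\!\int_{B_N^n} J_N(\|x-y\|_p)\varphi(x)\varphi(y)\, d^n x\, d^n y.
\end{equation*}
Since $B_N^n$ is an additive subgroup of $\mathbb{Q}_p^n$, the inner integral in the first piece equals the constant $j_N$ for every $x\in B_N^n$, producing the diagonal contribution $\tfrac{j_N p^{-Nn}}{2}\sum_{\boldsymbol{i}} \varphi(\boldsymbol{i})^2$. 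For the cross term I would decompose the double integral along the cells $B_{-N}^n(\boldsymbol{i})\times B_{-N}^n(\boldsymbol{j})$; the whole computation then reduces to the identity
\begin{equation*}
\int_{B_{-N}^n(\boldsymbol{i})}\!\int_{B_{-N}^n(\boldsymbol{j})} J_N(\|x-y\|_p)\, d^n y\, d^n x = p^{-Nn}\,\mathfrak{a}_{\boldsymbol{ij}},
\end{equation*}
which I would derive by recognizing the inner integral as $\mathfrak{a}(x,\boldsymbol{j}) = J_N(\|\cdot\|_p) \ast \Omega(p^N\|\cdot-\boldsymbol{j}\|_p)(x)$ and invoking Lemma~\ref{Lemma_B} together with the remark that $\mathfrak{a}(\cdot,\boldsymbol{j})$ factors through the canonical surjection $B_N^n\to G_N^n$, so it is constant equal to $\mathfrak{a}_{\boldsymbol{ij}}$ on $B_{-N}^n(\boldsymbol{i})$.

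For part (ii), the gradient flow equation follows by differentiating the closed-form expression from (i) coordinate-wise. Using the symmetry $\mathfrak{a}_{\boldsymbol{ij}}=\mathfrak{a}_{\boldsymbol{ji}}$ noted after Lemma~\ref{Lemma_B}, the partial derivative of $E_N$ at $\varphi(\boldsymbol{k})$ is
\begin{equation*}
\frac{\partial E_N}{\partial \varphi(\boldsymbol{k})} = p^{-Nn}\!\left(j_N \varphi(\boldsymbol{k}) - \sum_{\boldsymbol{j}\in G_N^n}\mathfrak{a}_{\boldsymbol{kj}}\varphi(\boldsymbol{j})\right) + \lambda p^{-Nn} f(\varphi(\boldsymbol{k})).
\end{equation*}
By Lemma~\ref{Lemma_C} the bracketed quantity is exactly $[A^{(N)}\varphi]_{\boldsymbol{k}}$, and imposing $\partial_t \varphi(\boldsymbol{k},t) = -\partial_{\varphi(\boldsymbol{k})} E_N$ yields equation~\eqref{Eq_gradient}.

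The only genuinely delicate point is the kernel identity displayed above; once it is in place, both parts collapse to bookkeeping with finite sums. Its proof hinges on the observation in Lemma~\ref{Lemma_B} that the relevant convolution is locally constant at scale $-N$, and on the fact that $B_{-N}^n$ is an additive subgroup, so that the double integral depends on $\boldsymbol{i},\boldsymbol{j}$ only through $\boldsymbol{i}-\boldsymbol{j}$ in $G_N^n$. I do not anticipate any subtlety in the identification $X_N\cong\mathbb{R}^{\#G_N^n}$ or in the differentiability of $E_N$, which on this finite dimensional space is a quadratic form plus the $C^2$ function $\lambda p^{-Nn}\sum_{\boldsymbol{i}}W(\varphi(\boldsymbol{i}))$.
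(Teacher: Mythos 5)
Your proposal is correct and follows essentially the same route as the paper: expand $\varphi$ in the basis $\{\Omega(p^N\Vert x-\boldsymbol{i}\Vert_p)\}$, reduce the diagonal term to $j_N$ by translation invariance, identify the cross term with $p^{-Nn}\mathfrak{a}_{\boldsymbol{ij}}$ via Lemma \ref{Lemma_B} (the paper phrases this as the inner product computation (\ref{Calculo}) rather than a cell-by-cell decomposition, but it is the same calculation), and then differentiate the resulting finite-dimensional expression, using the symmetry $\mathfrak{a}_{\boldsymbol{ij}}=\mathfrak{a}_{\boldsymbol{ji}}$ to absorb the factor $\tfrac{1}{2}$. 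The only cosmetic difference is that the paper computes the directional derivative against a test vector $\theta$ while you take coordinate-wise partials; these are equivalent in $\mathbb{R}^{\#G_N^n}$.
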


\begin{remark}
Notice that in $X_{N}$, (\ref{Eq_gradient}) can be written as
\begin{equation}
\frac{\partial}{\partial t}\varphi\left(  x,t\right)  =-A_{N}\varphi\left(
x,t\right)  -\lambda f\left(  \varphi\left(  x,t\right)  \right)  .
\label{Eq_gradient_2}%
\end{equation}

\end{remark}

\begin{proof}
(i) By using that $\varphi^{2}\left(  x\right)  =\sum_{\boldsymbol{i}\in
G_{N}^{n}}\varphi^{2}\left(  \boldsymbol{i}\right)  \Omega\left(
p^{N}\left\Vert x-\boldsymbol{i}\right\Vert _{p}\right)  $, $\varphi\left(
x\right)  \varphi\left(  y\right)  =\sum_{\boldsymbol{i},\boldsymbol{j}\in
G_{N}^{n}}\varphi\left(  \boldsymbol{i}\right)  \varphi\left(  \boldsymbol{j}%
\right)  \Omega\left(  p^{N}\left\Vert x-\boldsymbol{i}\right\Vert
_{p}\right)  \Omega\left(  p^{N}\left\Vert y-\boldsymbol{j}\right\Vert
_{p}\right)  $, we have%
\begin{multline*}
E_{N}\left(  \left[  \varphi_{\boldsymbol{i}}\right]  _{\boldsymbol{i}\in
G_{N}^{n}}\right)  =\frac{1}{2}\sum_{\boldsymbol{i}\in G_{N}^{n}}\varphi
^{2}\left(  \boldsymbol{i}\right)  \int\limits_{B_{N}^{n}}\int\limits_{B_{N}%
^{n}}J\left(  \left\Vert x-y\right\Vert _{p}\right)  \Omega\left(
p^{N}\left\Vert x-\boldsymbol{i}\right\Vert _{p}\right)  d^{n}xd^{n}y\\
-\frac{1}{2}\sum_{\boldsymbol{i},\boldsymbol{j}\in G_{N}^{n}}\varphi\left(
\boldsymbol{i}\right)  \varphi\left(  \boldsymbol{j}\right)  \int
\limits_{B_{N}^{n}}\int\limits_{B_{N}^{n}}J\left(  \left\Vert x-y\right\Vert
_{p}\right)  \Omega\left(  p^{N}\left\Vert x-\boldsymbol{i}\right\Vert
_{p}\right)  \Omega\left(  p^{N}\left\Vert y-\boldsymbol{j}\right\Vert
_{p}\right)  d^{n}xd^{n}y\\
+\lambda\sum_{\boldsymbol{i}\in G_{N}^{n}}W\left(  \varphi\left(
\boldsymbol{i}\right)  \right)  \int\limits_{B_{N}^{n}}\Omega\left(
p^{N}\left\Vert x-\boldsymbol{i}\right\Vert _{p}\right)  d^{n}x.
\end{multline*}
The announced formula follows from the following observations:%
\begin{align*}
&  \int\limits_{B_{N}^{n}}\Omega\left(  p^{N}\left\Vert x-\boldsymbol{i}%
\right\Vert _{p}\right)  \left\{  \int\limits_{B_{N}^{n}}J\left(  \left\Vert
x-y\right\Vert _{p}\right)  d^{n}y\right\}  d^{n}x\\
&  =\int\limits_{B_{N}^{n}}\Omega\left(  p^{N}\left\Vert x-\boldsymbol{i}%
\right\Vert _{p}\right)  \left\{  \int\limits_{B_{N}^{n}}J\left(  \left\Vert
z\right\Vert _{p}\right)  d^{n}z\right\}  d^{n}x=p^{-Nn}j_{N},
\end{align*}
and
\begin{align*}
&  \int\limits_{B_{N}^{n}}\int\limits_{B_{N}^{n}}J\left(  \left\Vert
x-y\right\Vert _{p}\right)  \Omega\left(  p^{N}\left\Vert x-\boldsymbol{i}%
\right\Vert _{p}\right)  \Omega\left(  p^{N}\left\Vert y-\boldsymbol{j}%
\right\Vert _{p}\right)  d^{n}xd^{n}y\\
&  =\left\langle J\left(  \left\Vert x\right\Vert _{p}\right)  \ast
\Omega\left(  p^{N}\left\Vert x-\boldsymbol{i}\right\Vert _{p}\right)
,\Omega\left(  p^{N}\left\Vert x-\boldsymbol{j}\right\Vert _{p}\right)
\right\rangle =p^{-Nn}\mathfrak{a}_{\boldsymbol{ij}},
\end{align*}
see (\ref{Calculo}).

(ii) By using the first part, with $\left\langle \cdot,\cdot\right\rangle
_{\mathbb{R}}$ denoting the inner product in $\mathbb{R}^{\#G_{N}^{n}}$, the
directional derivative (i.e. functional derivative) of $E_{N}$ is given by%
\begin{multline*}
\lim_{\epsilon\rightarrow0}\frac{E_{N}\left(  \left[  \varphi\left(
\boldsymbol{i}\right)  +\epsilon\theta\left(  \boldsymbol{i}\right)  \right]
_{\boldsymbol{i}\in G_{N}^{n}}\right)  -E_{N}\left(  \left[  \varphi\left(
\boldsymbol{i}\right)  \right]  _{\boldsymbol{i}\in G_{N}^{n}}\right)
}{\epsilon}\\
=j_{N}p^{-Nn}\sum\limits_{\boldsymbol{i}}\varphi\left(  \boldsymbol{i}\right)
\theta\left(  \boldsymbol{i}\right)  -p^{-Nn}\sum\limits_{\boldsymbol{i}%
,\boldsymbol{j}}\mathfrak{a}_{\boldsymbol{ij}}\varphi\left(  \boldsymbol{j}%
\right)  \theta\left(  \boldsymbol{i}\right)  +\lambda p^{-Nn}\sum
\limits_{\boldsymbol{i}}f\left(  \varphi\left(  \boldsymbol{i}\right)
\right)  \theta\left(  \boldsymbol{i}\right) \\
=p^{-Nn}\left\langle A^{(N)}\left[  \varphi\left(  \boldsymbol{i}\right)
\right]  _{\boldsymbol{i}\in G_{N}^{n}}+\lambda\left[  f\left(  \varphi\left(
\boldsymbol{i}\right)  \right)  \right]  _{\boldsymbol{i}\in G_{N}^{n}%
},\left[  \theta\left(  \boldsymbol{i}\right)  \right]  _{\boldsymbol{i}\in
G_{N}^{n}}\right\rangle _{\mathbb{R}},
\end{multline*}
i.e. $\nabla E_{N}\left(  \left[  \varphi\left(  \boldsymbol{i}\right)
\right]  _{\boldsymbol{i}\in G_{N}^{n}}\right)  =p^{-Nn}\left(  A^{(N)}\left[
\varphi\left(  \boldsymbol{i}\right)  \right]  _{\boldsymbol{i}\in G_{N}^{n}%
}+\lambda\left[  f\left(  \varphi\left(  \boldsymbol{i}\right)  \right)
\right]  _{\boldsymbol{i}\in G_{N}^{n}}\right)  $ where $\nabla g$ denotes the
standard gradient vector in $\mathbb{R}^{\#G_{N}^{n}}$. On the other hand, in
the space $X_{N}$, we have $\nabla E_{N}\left(  \varphi\right)  =A_{N}%
\varphi+\lambda f\left(  \varphi\right)  $ since
\begin{align*}
&  p^{-Nn}\left\langle A^{(N)}\left[  \varphi\left(  \boldsymbol{i}\right)
\right]  _{\boldsymbol{i}\in G_{N}^{n}}+\lambda\left[  f\left(  \varphi\left(
\boldsymbol{i}\right)  \right)  \right]  _{\boldsymbol{i}\in G_{N}^{n}%
},\left[  \theta\left(  \boldsymbol{i}\right)  \right]  _{\boldsymbol{i}\in
G_{N}^{n}}\right\rangle _{\mathbb{R}}\\
&  =\int_{\mathbb{Q}_{p}^{n}}\left\{  A_{N}\varphi\left(  x\right)  +\lambda
f\left(  \varphi\left(  x\right)  \right)  \right\}  \theta\left(  x\right)
d^{n}x
\end{align*}
in $X_{N}$.
\end{proof}

Consider $\left(  G_{N}^{n},\left\Vert \cdot\right\Vert _{p}\right)  $ as a
finite ultrametric space. Then (\ref{Eq_gradient}) is
reaction-ultradi\-ffusion equation in $\left(  G_{N}^{n},\left\Vert
\cdot\right\Vert _{p}\right)  $, which is the $L^{2}$-gradient of an energy
functional defined on $\left(  G_{N}^{n},\left\Vert \cdot\right\Vert
_{p}\right)  $. These equations are generalizations of the ultradiffusion
equations studied in \cite{O-S} and \cite{Bachas-Hu}. In this article we
initiate the study of these equations and their `limits'\ as $N$ tends to
infinity. The limit of \ some ultradiffusion equations of type
(\ref{Eq_gradient}) with $f\equiv0$ was considered by Avetisov et al. in
\cite{Av-8}, when the matrix $A^{\left(  N\right)  }$ comes from a Parisi
matrix. More precisely, in \cite{Av-8} \ was\ established, by using a physical
argument, that the `limit'\ of an equation of type (\ref{Eq_gradient}) as $N$
tends to infinity is
\begin{equation}
\frac{\partial}{\partial t}\varphi\left(  x,t\right)  =-A\varphi\left(
x,t\right)  -\lambda f\left(  \varphi\left(  x,t\right)  \right)  \text{,
}x\in\mathbb{Q}_{p}^{n}\text{, }t\geq0. \label{Eq_gradient_3}%
\end{equation}
In this article we show, from a mathematical perspective, that the solutions
of the Cauchy problem attached to the equation (\ref{Eq_gradient_2}) converge
to the solutions of the Cauchy problem attached to the equation
(\ref{Eq_gradient_3}), see Theorem \ref{Theorem4}, in the case that $f$ $\in
C^{2}$ with three zeros at $-1$, $0$, $1$. The equation (\ref{Eq_gradient_3})
is formally the $L^{2}$-gradient of the following energy functional:
\[
E\left(  \varphi\right)  =\frac{1}{4}\int\limits_{\mathbb{Q}_{p}^{n}}%
\int\limits_{\mathbb{Q}_{p}^{n}}J\left(  \left\Vert x-y\right\Vert
_{p}\right)  \left\{  \varphi\left(  x\right)  -\varphi\left(  y\right)
\right\}  ^{2}d^{n}xd^{n}y+\lambda\int\limits_{\mathbb{Q}_{p}^{n}}W\left(
\varphi\left(  x\right)  \right)  d^{n}x,
\]
where $\varphi$ is a scalar density function defined on $\mathbb{Q}_{p}^{n}$
that takes values in $\left[  -1,1\right]  $, $W$ is a double-well potential
having minima at $\pm1$ as before. At the moment, we do not know if $E\left(
\varphi\right)  $ can be well approximated by $E_{N}\left(  P_{N}%
\varphi\right)  $ for $\varphi\in X_{\infty}$. On the other hand, if supp
$J\subset K$, open and compact, with $K\subset B_{N_{0}}^{n}$, and $W^{\prime
}=f$ is continuous, then for any $\varphi\in X_{\infty}\left(  K\right)  $,
and $N\geq N_{0}$, the functional
\begin{align*}
E_{N}\left(  P_{N}\varphi\right)   &  =\frac{1}{4}\int\limits_{K}%
\int\limits_{K}J\left(  \left\Vert x-y\right\Vert _{p}\right)  \left\{
P_{N}\varphi\left(  x\right)  -P_{N}\varphi\left(  y\right)  \right\}
^{2}d^{n}xd^{n}y\\
&  +\lambda\int\limits_{K}W\left(  P_{N}\varphi\left(  x\right)  \right)
d^{n}x
\end{align*}
tends to
\[
\frac{1}{4}\int\limits_{K}\int\limits_{K}J\left(  \left\Vert x-y\right\Vert
_{p}\right)  \left\{  \varphi\left(  x\right)  -\varphi\left(  y\right)
\right\}  ^{2}d^{n}xd^{n}y+\lambda\int\limits_{K}W\left(  \varphi\left(
x\right)  \right)  d^{n}x
\]
as $N$ tends to infinity. The verification of this assertion follows directly
from the Dominated Convergence Theorem and the fact that $P_{N}\varphi$
$\underrightarrow{\left\Vert .\right\Vert _{\infty}}$ $\varphi$.

\section{\label{Sect_Staionary_Solutions}Stationary Solutions}

We take $J(x)=J(||x||_{p})$ for $x\in%
\mathbb{Q}
_{p}^{n}$ as in Section \ref{Sect_oper_A}. We fix a function $f:\mathbb{R}%
\rightarrow\mathbb{R}$ having the following properties:%
\begin{equation}
f\in C^{2}\left(  \mathbb{R}\right)  \text{;} \tag{H1}%
\end{equation}%
\begin{equation}
f\text{ has exactly three zeros at }-1\text{, }0\text{, }1\text{;} \tag{H2}%
\end{equation}%
\begin{equation}
f^{\prime}\left(  -1\right)  >0\text{, }f^{\prime}\left(  0\right)  <0\text{,
}f^{\prime}\left(  1\right)  >0\text{;} \tag{H3}%
\end{equation}

\noindent\label{H5}(H4) the function $g(u):=u+\lambda f\left(  u\right)  $ has
three zeros and exactly three intervals of monotonicity for any sufficiently
large $\lambda>0$. We denote by $u_{\lambda}^{-}$, $u_{\lambda}^{+}$, the
extreme roots of $g(u)=0$, then $u_{\lambda}^{-}<$ $u_{\lambda}^{+}$,
$u_{\lambda}^{-}>-1$, $u_{\lambda}^{+}<1$, and $u_{\lambda}^{\pm}%
\rightarrow\pm1$ as $\lambda\rightarrow\infty$.

The following technical conditions always hold under Hypotheses (H1)-(H4) by
taking $\lambda$ sufficiently large:

\noindent\label{C5}(C5) We take $\alpha^{-}\in\left(  -1,0\right)  $,
$\alpha^{+}\in\left(  0,1\right)  $ such that $f^{\prime}\left(  u\right)
\geq\delta>0$ \ for $u$ in $\left[  -1,\alpha^{-}\right]  \cup\left[
\alpha^{+},1\right]  $. In addition, we assume that $\alpha^{+}$, $\alpha^{-}$
satisfy $u_{\lambda}^{-}<\alpha^{-}<0<\alpha^{+}<u_{\lambda}^{+}$;
\begin{equation}
\left(  1+\alpha^{+}\right)  +\lambda f\left(  \alpha^{+}\right)  \leq0;
\tag{C6}%
\end{equation}%
\begin{equation}
\alpha^{-}+\lambda f\left(  \alpha^{-}\right)  \geq0\text{.} \tag{C7}%
\end{equation}
Notice that conditions C6-C7 hold if $\lambda\geq\max\left\{  \frac
{-\alpha^{-}}{f\left(  \alpha^{-}\right)  },\frac{1+\alpha^{+}}{-f\left(
\alpha^{+}\right)  }\right\}  $.

\begin{remark}
Notice that polynomial $u^{3}-u$\ satisfies hypotheses H1-H4. If $f(u)\in
C^{2}\left(  \mathbb{R}\right)  $ and $f^{\prime\prime}(u)$ has exactly a zero
$\beta\in\left(  -1,1\right)  $ and $f^{\prime\prime}(u)$ is positive to the
right of $\beta$ and negative to the left of $\beta$, then $g(u)=u+\lambda
f\left(  u\right)  $ satisfies Hypothesis H4.
\end{remark}

\begin{theorem}
\label{Theorem1}Assume that $f$ satisfies hypotheses (H1)-(H4). Then for any
measurable subset $M\subset B_{N_{0}}^{n}$ and for $\lambda$ sufficiently
large, the equation%
\begin{equation}
\left\{
\begin{array}
[c]{l}%
u\in X_{\infty}\\
Au\left(  x\right)  +\lambda f\left(  u\left(  x\right)  \right)  =0\text{,
}x\in\mathbb{Q}_{p}^{n}\text{,}%
\end{array}
\right.  \label{Eq2}%
\end{equation}
has a unique solution $\widetilde{u}$ satisfying%
\begin{equation}
\alpha^{+}\leq\widetilde{u}\left(  x\right)  \leq1\text{ for }x\in M\text{
and\ \ }-1\leq\widetilde{u}\left(  x\right)  \leq\alpha^{-}\text{ for }x\in
B_{N_{0}}^{n}\smallsetminus M\text{.} \label{Eq3}%
\end{equation}

\end{theorem}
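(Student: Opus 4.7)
The plan is to recast the stationary equation as a pointwise fixed-point problem and apply the Banach contraction principle. Since $\int_{\mathbb{Q}_p^n}J(\|y\|_p)\,d^n y = 1$, the operator of Lemma~\ref{Lemma_E} satisfies $A\varphi = \varphi - J*\varphi$, so equation (\ref{Eq2}) is equivalent to the pointwise identity
\[
g(u(x)) = (J*u)(x), \qquad g(s) := s + \lambda f(s).
\]

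First I would analyze $g$ under the hypotheses. Condition (C5) forces $g'(s) = 1 + \lambda f'(s) \geq 1+\lambda\delta$ on $[-1,\alpha^-]\cup[\alpha^+,1]$, so $g$ is strictly increasing on each outer branch. Combined with $g(\pm1)=\pm1$, condition (C6) yields $g(\alpha^+)\leq -1$, while (C7) together with ``$\lambda$ large'' (exploiting $f(\alpha^-)>0$) yields $g(\alpha^-)\geq 1$, so $g$ maps $[\alpha^+,1]$ and $[-1,\alpha^-]$ each onto an interval containing $[-1,1]$. This produces two $C^2$ inverse branches
\[
g_+^{-1}:[-1,1]\to[\alpha^+,1],\qquad g_-^{-1}:[-1,1]\to[-1,\alpha^-],
\]
both Lipschitz with constant $L:=(1+\lambda\delta)^{-1}<1$.

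Next I would consider the closed set
\[
K = \bigl\{u\in X_\infty : \|u\|_\infty\leq 1,\ \alpha^+\leq u\leq 1 \text{ on }M,\ -1\leq u\leq\alpha^-\text{ on }B_{N_0}^n\setminus M\bigr\}
\]
and define $T:K\to K$ by
\[
(Tu)(x) = \begin{cases} g_+^{-1}\!\bigl((J*u)(x)\bigr), & x\in M,\\ g_-^{-1}\!\bigl((J*u)(x)\bigr), & x\in B_{N_0}^n\setminus M,\\ \text{branch chosen so that } Tu\in X_\infty, & x\in\mathbb{Q}_p^n\setminus B_{N_0}^n, \end{cases}
\]
so that $T$ manifestly respects the constraints (\ref{Eq3}) on $B_{N_0}^n$. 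A Young-type estimate based on $\|J\|_{L^1}=1$ gives, for $u,v\in K$,
\[
|(Tu)(x)-(Tv)(x)|\leq L\bigl|(J*u)(x)-(J*v)(x)\bigr|\leq L\,\|u-v\|_\infty,
\]
so $T$ is an $L$-contraction on $K$. The Banach fixed-point theorem then produces a unique $\widetilde u\in K$ with $T\widetilde u=\widetilde u$, which translating back is the unique solution of (\ref{Eq2}) with the prescribed sign pattern.

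The main obstacle I anticipate lies not in the contraction step, which is essentially formal once the shape analysis of $g$ is in place, but in the construction of $T$ (and of $K$) on $\mathbb{Q}_p^n\setminus B_{N_0}^n$. Since the prescribed sign pattern forces $|\widetilde u|\geq\min(\alpha^+,|\alpha^-|)$ on $B_{N_0}^n$, the decay requirement $\widetilde u\in X_\infty=C_0$ is not automatic: one must partition the exterior into regions compatible with the convolution so that $(J*u)(x)$ remains in the range of the branch of $g^{-1}$ chosen there and $(Tu)(x)\to 0$ as $\|x\|_p\to\infty$. The ultrametric geometry of $\mathbb{Q}_p^n$ (in particular the abundance of clopen sets) should make such a selection possible, and uniqueness of $\widetilde u$ within $K$ follows at once from the contraction property.
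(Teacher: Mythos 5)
Your reformulation of (\ref{Eq2}) as $g(u(x))=(J\ast u)(x)$ with $g(s)=s+\lambda f(s)$, and your analysis of the two outer inverse branches $g_{\pm}^{-1}$, are sound (the paper itself uses these branches, but only in the last lines of its proof and only to upgrade the regularity of the fixed point already obtained). The genuine gap is the third line in your definition of $T$: on $\mathbb{Q}_{p}^{n}\smallsetminus B_{N_{0}}^{n}$ no branch is actually assigned, and no admissible assignment is exhibited; this is not a detail that the ultrametric geometry repairs. The outer branches take values in $[\alpha^{+},1]$ and $[-1,\alpha^{-}]$, both bounded away from $0$, so neither can be used on a neighborhood of infinity if $Tu$ is to lie in $X_{\infty}=C_{0}$; hence the middle (decreasing) inverse branch of $g$ must be used somewhere, and \emph{where} it must be used depends on the unknown $u$ itself, since for $x$ at moderate distance from $B_{N_{0}}^{n}$ the value $(J\ast u)(x)$ still receives mass from the region where $|u|\geq\min(\alpha^{+},-\alpha^{-})$ and need not be small. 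Moreover the Lipschitz constant of the middle branch is $1/\min|g^{\prime}|$ over the relevant preimage of $[-1,1]$, which is not controlled by (C5)--(C7); making it a contraction requires a separate ``$\lambda$ large'' estimate near $0$ that you do not supply. Finally, even granting some fixed admissible assignment, the Banach fixed point theorem would deliver uniqueness only within the class of functions following that particular branch pattern on the exterior, whereas the theorem asserts uniqueness among all $u\in X_{\infty}$ satisfying (\ref{Eq3}) --- a condition that constrains nothing outside $B_{N_{0}}^{n}$. Your argument therefore does not exclude two solutions that have the prescribed sign pattern on $B_{N_{0}}^{n}$ but sit on different branches of $g^{-1}$ outside it.

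The paper sidesteps all of this by iterating the damped map $Tu=u-h\{Au+\lambda f(u)\}=(1-h)u+hJ\ast u-h\lambda f(u)$ for a sufficiently small $h>0$. That map requires no branch selection: it is defined on all of $X_{\infty}$, its fixed points are exactly the solutions of (\ref{Eq2}), it preserves the set $Y=\{u\in X_{\infty}: u\text{ satisfies (\ref{Eq3})}\}$ (via the monotonicity of $s\mapsto(1-h)s-h\lambda f(s)$ on $[-1,\alpha^{-}]\cup[\alpha^{+},1]$ together with (C6)--(C7) and $\|J\|_{L^{1}}=1$), and it contracts there, so existence and uniqueness in $Y$ come out in one stroke. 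If you wish to keep the inversion picture, you must either restrict the uniqueness claim to your chosen branch class or replace your $T$ by a relaxation map of this kind.
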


\begin{proof}
We define
\[
Y:=\left\{  u\in X_{\infty};u\left(  x\right)  \in\left[  \alpha^{+},1\right]
\text{ for }x\in M\text{ and\ }u\left(  x\right)  \in\left[  -1,\alpha
^{-}\right]  \text{ for }x\in B_{N_{0}}^{n}\smallsetminus M\right\}
\]
and the mapping $T:Y\rightarrow X_{\infty}$ as
\begin{align*}
Tu\left(  x\right)   &  =u\left(  x\right)  -h\left\{  Au\left(  x\right)
+\lambda f\left(  u\left(  x\right)  \right)  \right\} \\
&  =\left(  1-h\right)  u\left(  x\right)  +hJ\left(  \left\Vert x\right\Vert
_{p}\right)  \ast u\left(  x\right)  -h\lambda f\left(  u\left(  x\right)
\right)
\end{align*}
for $h>0$. By Lemma \ref{Lemma_E}, $A:Y\rightarrow X_{\infty}$ is well-defined
operator, and $u\rightarrow f\left(  u\right)  $ is also a well-defined
mapping from $Y$ into $X_{\infty}$ because $f\left(  u\right)  $ is a
continuous function satisfying $\lim_{\left\Vert x\right\Vert _{p}%
\rightarrow\infty}f\left(  u\left(  x\right)  \right)  =0$, here we use that
$f\left(  0\right)  =0$.

We show that for $h$ sufficiently small $T$ is a contraction from $Y$ into
itself, and since $Y$ is a Banach space, $T$ has a unique fixed point
$\widetilde{u}$, which is the desired solution.

\textbf{Claim 1.} For $h$ sufficiently small, if $u\left(  x\right)
\in\left[  -1,\alpha^{-}\right]  \cup\left[  \alpha^{+},1\right]  $,\ then
$-1\leq Tu\left(  x\right)  \leq1$.

Indeed, since $f^{\prime}\left(  u\right)  $ is non-negative and continuous on
$\left[  -1,\alpha^{-}\right]  \cup\left[  \alpha^{+},1\right]  $, we can pick
$h$ sufficiently small to get%
\begin{equation}
h\lambda f^{\prime}\left(  u\right)  <1-h\text{ for }u\in\left[  -1,\alpha
^{-}\right]  \cup\left[  \alpha^{+},1\right]  \text{.} \label{Cond_1}%
\end{equation}
Notice that (\ref{Cond_1}) holds if $0<h$ $<\frac{1}{1+\lambda\max
_{u\in\left[  -1,\alpha^{-}\right]  \cup\left[  \alpha^{+},1\right]
}f^{\prime}\left(  u\right)  }$.

From (\ref{Cond_1}) with $u\left(  x\right)  \in\left[  \alpha^{+},1\right]
$, we get $h\lambda\int_{u}^{1}f^{\prime}\left(  u\right)  du\leq\left(
1-h\right)  \int_{u}^{1}du$, which implies that $\left(  1-h\right)
u-h\lambda f\left(  u\right)  \leq1-h$, and thus $Tu\left(  x\right)
\leq\left(  1-h\right)  +hJ\left(  \left\Vert x\right\Vert _{p}\right)  \ast
u\left(  x\right)  \leq1$, since
\begin{equation}
\left\vert J\left(  \left\Vert x\right\Vert _{p}\right)  \ast u\left(
x\right)  \right\vert \leq\left\Vert J\right\Vert _{L^{1}}\left\Vert
u\right\Vert _{\infty}\leq1. \label{Eq1}%
\end{equation}
In the case in which $u\left(  x\right)  \in\left[  -1,\alpha^{-}\right]  $,
we get from (\ref{Cond_1}) that $h\lambda\int_{-1}^{u}f^{\prime}\left(
u\right)  du\leq\left(  1-h\right)  \int_{-1}^{u}du$, which implies that
$\left(  1-h\right)  u-h\lambda f\left(  u\right)  \geq-1+h\geq-1$, and thus
$Tu\left(  x\right)  \geq-1+hJ\left(  \left\Vert x\right\Vert _{p}\right)
\ast u\left(  x\right)  \geq-1$, by using (\ref{Eq1}).

\textbf{Claim 2.} $T:Y\rightarrow Y$.

\textbf{Case} $u\left(  x\right)  \in\left[  \alpha^{+},1\right]  $ for $x\in
M $.

By using that $u-h\left\{  u+\lambda f\left(  u\right)  \right\}  $ is
increasing on $\left[  \alpha^{+},1\right]  $, we have%
\begin{gather*}
Tu\left(  x\right)  \geq\alpha^{+}-h\alpha^{+}-h\lambda f\left(  \alpha
^{+}\right)  +hJ\left(  \left\Vert x\right\Vert _{p}\right)  \ast u\left(
x\right) \\
\geq\alpha^{+}+h\left\{  \alpha^{+}\int\limits_{M}J\left(  \left\Vert
x-y\right\Vert _{p}\right)  d^{n}y-\alpha^{+}-\lambda f\left(  \alpha
^{+}\right)  \right\} \\
\geq\alpha^{+}+h\left\{  \alpha^{+}\int\limits_{M}J\left(  \left\Vert
x-y\right\Vert _{p}\right)  d^{n}y-\alpha^{+}-\lambda f\left(  \alpha
^{+}\right)  -\int\limits_{\mathbb{Q}_{p}^{n}\smallsetminus M}J\left(
\left\Vert x-y\right\Vert _{p}\right)  d^{n}y\right\} \\
=\alpha^{+}-h\left\{  \left(  1+\alpha^{+}\right)  \int\limits_{\mathbb{Q}%
_{p}^{n}\smallsetminus M}J\left(  \left\Vert x-y\right\Vert _{p}\right)
d^{n}y+\lambda f\left(  \alpha^{+}\right)  \right\}  \geq\\
\alpha^{+}-h\left\{  \left(  1+\alpha^{+}\right)  +\lambda f\left(  \alpha
^{+}\right)  \right\}  ,
\end{gather*}
since $\int_{\mathbb{Q}_{p}^{n}\smallsetminus M}J\left(  \left\Vert
x-y\right\Vert _{p}\right)  d^{n}y\leq1$. Now, by Condition C6 and Claim 1,
$\alpha^{+}\leq Tu\left(  x\right)  \leq1$ for $x\in M$.

\textbf{Case} $u\left(  x\right)  \in\left[  -1,\alpha^{-}\right]  $ for $x\in
B_{N_{0}}^{n}\smallsetminus M$.

By using that $u-h\left\{  u+\lambda f\left(  u\right)  \right\}  $ is
increasing on $\left[  -1,\alpha^{-}\right]  $, cf. (\ref{Cond_1}),%
\begin{gather*}
Tu\left(  x\right)  \leq\alpha^{-}-h\alpha^{-}-h\lambda f\left(  \alpha
^{-}\right)  +hJ\left(  \left\Vert x\right\Vert _{p}\right)  \ast1_{B_{N_{0}%
}^{n}\smallsetminus M}\left(  x\right)  u\left(  x\right) \\
\leq\alpha^{-}-h\left\{  \alpha^{-}-\alpha^{-}\int\limits_{B_{N_{0}}%
^{n}\smallsetminus M}J\left(  \left\Vert x-y\right\Vert _{p}\right)
d^{n}y+\lambda f\left(  \alpha^{-}\right)  \right\} \\
\leq\alpha^{-}-h\left\{  \alpha^{-}+\lambda f\left(  \alpha^{-}\right)
\right\}  ,
\end{gather*}
since $h\alpha^{-}\int_{B_{N_{0}}^{n}\smallsetminus M}J\left(  \left\Vert
x-y\right\Vert _{p}\right)  d^{n}y\leq0$. Now by using Condition C7 and Claim
1, $-1\leq Tu\left(  x\right)  \leq\alpha^{-}$ for $x\in B_{N_{0}}%
^{n}\smallsetminus M$.

\textbf{Claim 3. }$T:Y\rightarrow Y$ is a contraction for $h$ sufficiently small.

Take $u$, $v$ in $X$, then
\begin{align*}
\left\Vert Tu-Tv\right\Vert _{\infty}  &  =\left\Vert \left(  1-h\right)
\left(  u-v\right)  -h\lambda\left\{  f(u)-f(v)\right\}  +hJ\ast\left(
u-v\right)  \right\Vert _{\infty}\\
&  =\left\Vert \left(  1-h-h\lambda f^{\prime}\left(  w\right)  \right)
\left(  u-v\right)  +hJ\ast\left(  u-v\right)  \right\Vert _{\infty}\\
&  \leq\left\vert 1-h-h\lambda f^{\prime}\left(  w\right)  \right\vert
\left\Vert u-v\right\Vert _{\infty}+h\left\Vert u-v\right\Vert _{\infty
}\text{,}%
\end{align*}
with $w=au+\left(  1-a\right)  v$ for some $0<a<1$. By (\ref{Cond_1}),
$1-h-h\lambda f^{\prime}\left(  w\right)  >0$ for $h$ sufficiently small, and
$1-h-h\lambda f^{\prime}\left(  w\right)  \leq1$ because $-h\left(  1+\lambda
f^{\prime}\left(  w\right)  \right)  \leq0$ given that $\lambda f^{\prime
}\left(  w\right)  \geq\lambda\delta>0$, cf. Condition C5. Hence
\[
\left\Vert Tu-Tv\right\Vert _{\infty}\leq\left(  1-h-h\lambda f^{\prime
}\left(  w\right)  \right)  \left\Vert u-v\right\Vert _{\infty}\leq\left\Vert
u-v\right\Vert _{\infty}\text{.}%
\]
By using the fact that $Y$ is a Banach space we conclude that the equation
(\ref{Eq2}) has a unique solution $\widetilde{u}$ satisfying (\ref{Eq3}). Now,
from Hypothesis H4 and using that $\widetilde{u}=g_{i}^{-1}\left(
J\ast\widetilde{u}\right)  $, $i=1,2$, where $g_{i}^{-1}$ is one of the
extreme branches of $g^{-1}$, and the fact that $J\ast\widetilde{u}$ is a
continuous function (by the Dominated Convergence Theorem), we conclude that
$\widetilde{u}$ is a continuous function.
\end{proof}

\begin{remark}
In Theorem \ref{Theorem1}, ball $B_{N_{0}}^{n}$ can be replaced by a compact
subset, however, for the sake of simplicity we use a ball centered at the origin.
\end{remark}

\begin{theorem}
\label{Theorem1A}Fix $N_{0}\geq1$ and assume that supp $J\left(  \left\Vert
x\right\Vert _{p}\right)  \subset B_{N_{0}}^{n}$, and that $f$ satisfies
hypotheses (H1)-(H4).Then for any open and compact subset $M$ contained in
$B_{N_{0}}^{n}$ and for $\lambda$ sufficiently large, the equation%
\begin{equation}
\left\{
\begin{array}
[c]{l}%
u\in X_{N_{0}}\\
A_{N_{0}}u\left(  x\right)  +\lambda f\left(  u\left(  x\right)  \right)
=0\text{, }x\in B_{N_{0}}^{n}\text{,}%
\end{array}
\right.  \label{Eq2A}%
\end{equation}
has a unique solution $\widetilde{u}$ satisfying%
\begin{equation}
\alpha^{+}\leq\widetilde{u}\left(  x\right)  \leq1\text{ for }x\in M\text{
and\ \ }-1\leq\widetilde{u}\left(  x\right)  \leq\alpha^{-}\text{ for }x\in
B_{N_{0}}^{n}\smallsetminus M\text{. } \label{Eq3A}%
\end{equation}

\end{theorem}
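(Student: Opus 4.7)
The plan is to transfer the fixed-point construction used in Theorem~\ref{Theorem1} to the finite-dimensional Banach space $X_{N_{0}}$, which, under $\mathrm{supp}\,J\subset B_{N_{0}}^{n}$, works in a particularly clean way.

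First I would rewrite $A_{N_{0}}$ in a more workable form. Because $\mathrm{supp}\,J\subset B_{N_{0}}^{n}$, we have $J_{N_{0}}(\|\cdot\|_{p})=J(\|\cdot\|_{p})$ and $j_{N_{0}}=\int_{B_{N_{0}}^{n}}J(\|y\|_{p})\,d^{n}y=1$, so formula (\ref{formula_A_N}) reduces to $A_{N_{0}}u=-J\ast u+u$ for $u\in X_{N_{0}}$. Since $B_{N_{0}}^{n}$ is an additive subgroup that contains both $\mathrm{supp}\,J$ and $\mathrm{supp}\,u$, the convolution $J\ast u$ is itself supported in $B_{N_{0}}^{n}$ and retains the local constancy of $u$, so $A_{N_{0}}$ genuinely acts on $X_{N_{0}}$. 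For $h>0$, I would then define
\[
Tu(x)=u(x)-h\{A_{N_{0}}u(x)+\lambda f(u(x))\}=(1-h)u(x)+hJ\ast u(x)-h\lambda f(u(x)),
\]
a map algebraically identical to the one appearing in the proof of Theorem~\ref{Theorem1}.

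Next I would restrict $T$ to the closed, convex subset
\[
Y:=\{u\in X_{N_{0}}:\alpha^{+}\leq u(x)\leq 1\text{ on }M,\ -1\leq u(x)\leq\alpha^{-}\text{ on }B_{N_{0}}^{n}\smallsetminus M\}
\]
of $(X_{N_{0}},\|\cdot\|_{\infty})$, and reproduce the three claims from the proof of Theorem~\ref{Theorem1}. Choosing $h$ small enough that $h\lambda f'(u)<1-h$ on $[-1,\alpha^{-}]\cup[\alpha^{+},1]$ (possible by (C5)), the monotonicity argument gives $-1\leq Tu(x)\leq 1$ pointwise; conditions (C6) on $M$, (C7) on $B_{N_{0}}^{n}\smallsetminus M$, and the trivial bound $\int_{\mathbb{Q}_{p}^{n}\smallsetminus M}J(\|x-y\|_{p})\,d^{n}y\leq 1$ together yield $T(Y)\subset Y$; finally the mean value theorem combined with $\|J\ast\cdot\|_{\infty}\leq\|\cdot\|_{\infty}$ and the lower bound $f'(w)\geq\delta$ on the relevant intervals gives that $T$ is a strict contraction. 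Points $x\notin B_{N_{0}}^{n}$ need no separate treatment: both $u(x)$ and $(J\ast u)(x)$ vanish there, so $Tu(x)=0$ automatically, confirming $Tu\in X_{N_{0}}$.

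Banach's fixed-point theorem applied to $T:Y\to Y$ then produces the unique $\widetilde{u}\in Y$ with $A_{N_{0}}\widetilde{u}+\lambda f(\widetilde{u})=0$, and (\ref{Eq3A}) is immediate from membership in $Y$. I do not foresee a serious obstacle; the only point requiring any care is the invariance $J\ast X_{N_{0}}\subset X_{N_{0}}$, which is precisely what $\mathrm{supp}\,J\subset B_{N_{0}}^{n}$ guarantees. After that, the proof is essentially a transcription of the Theorem~\ref{Theorem1} argument, in fact slightly simpler since the decay at infinity is built into $X_{N_{0}}$ rather than something to be arranged.
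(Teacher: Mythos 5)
Your proposal is correct and follows essentially the same route as the paper: the paper likewise defines $Y_{N_{0}}\subset X_{N_{0}}$ and the map $T_{N_{0}}u=(1-h)u+hJ\ast u-h\lambda f(u)$, checks that $J\ast u$ and $f\circ u$ stay in $X_{N_{0}}$ (the latter using local constancy and $f(0)=0$), and then repeats the three claims from Theorem \ref{Theorem1} verbatim. The only detail worth making explicit is that the vanishing of $Tu$ outside $B_{N_{0}}^{n}$ also uses $f(0)=0$ for the reaction term, not just the vanishing of $u$ and $J\ast u$.
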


\begin{proof}
Due to the fact that $X_{N_{0}}$ is a subspace of $X_{\infty}$ the proof of
this result is completely similar to the one given for Theorem \ref{Theorem1}.
We define
\[
Y_{N_{0}}:=\left\{  u\in X_{N_{0}};u\left(  x\right)  \in\left[  \alpha
^{+},1\right]  \text{ for }x\in M\text{ and\ }u\left(  x\right)  \in\left[
-1,\alpha^{-}\right]  \text{ for }x\in B_{N_{0}}^{n}\smallsetminus M\right\}
\]
and the mapping $T_{N_{0}}:Y_{N_{0}}\rightarrow X_{N_{0}}$ as
\begin{align*}
T_{N_{0}}u\left(  x\right)   &  =u\left(  x\right)  -h\left\{  A_{N_{0}%
}u\left(  x\right)  +\lambda f\left(  u\left(  x\right)  \right)  \right\} \\
&  =\left(  1-h\right)  u\left(  x\right)  +hJ\left(  \left\Vert x\right\Vert
_{p}\right)  \ast u\left(  x\right)  -h\lambda f\left(  u\left(  x\right)
\right)  .
\end{align*}
for $h>0$. By Lemma \ref{Lemma_D}, $J\left(  \left\Vert x\right\Vert
_{p}\right)  \ast u\left(  x\right)  $ gives rise to a linear operator from
$X_{N}$ into itself, \ and since%
\begin{equation}
f\left(  u\left(  x\right)  \right)  =\sum_{\boldsymbol{i}\in G_{N_{0}}^{n}%
}f\left(  u\left(  \boldsymbol{i}\right)  \right)  \Omega\left(  p^{N_{0}%
}\left\Vert x-\boldsymbol{i}\right\Vert _{p}\right)  , \label{condition_on_f}%
\end{equation}
we have $T_{N_{0}}u\in X_{N_{0}}$ if $u\in Y_{N_{0}}$. Now the proof continues
as in the proof of Theorem \ref{Theorem1}.
\end{proof}

\begin{remark}
By using the fact that $M$ is open compact, there exists $I_{M}$ a subset of
$G_{N_{0}}^{n}$ such that $M=%
{\textstyle\bigsqcup\nolimits_{\boldsymbol{i}\in I_{M}}}
\boldsymbol{i}+\left(  p^{N_{0}}\mathbb{Z}_{p}\right)  ^{n}$. By identifying
$u\left(  x\right)  $ with the column vector $\left[  u\left(  \boldsymbol{i}%
\right)  \right]  _{\boldsymbol{i}\in G_{N_{0}}^{n}}$\ and using
(\ref{condition_on_f}), the equation (\ref{Eq2A}) takes the form
\begin{equation}
A^{\left(  N_{0}\right)  }\left[  u\left(  \boldsymbol{i}\right)  \right]
_{\boldsymbol{i}\in G_{N_{0}}^{n}}-\lambda\left[  f\left(  u\left(
\boldsymbol{i}\right)  \right)  \right]  _{\boldsymbol{i}\in G_{N_{0}}^{n}}=0,
\label{Eq4A}%
\end{equation}
then, Theorem \ref{Theorem1A} asserts that the\ equation (\ref{Eq4A}) has a
unique solution $\left[  \widetilde{u}_{\boldsymbol{i}}\right]
_{\boldsymbol{i}\in G_{N_{0}}^{n}}$ satisfying
\[
\alpha^{+}\leq\widetilde{u}\left(  \boldsymbol{i}\right)  \leq1\text{ for
}\boldsymbol{i}\in I_{M}\text{ and }-1\leq\widetilde{u}\left(  \boldsymbol{i}%
\right)  \leq\alpha^{-}\text{ for }\boldsymbol{i}\in G_{N_{0}}^{n}%
\smallsetminus I_{M}.
\]

\end{remark}

The proofs of Theorems \ref{Theorem1}, \ref{Theorem1A} use the same technique
as the proof of \cite[Theorem 2.1]{BatesChmdj}. However, the hypotheses that
we use are simpler than in \cite[Theorem 2.1]{BatesChmdj}.

\section{A Comparison Theorem}

We consider $\mathbb{Q}_{p}^{n}\times\mathbb{R}$ as a topological space with
the product topology. For $T>0$, we set $\mathbb{D}=B_{L}^{n}\times\left(
0,T\right)  $. Notice that the topological closure of $\mathbb{D}$ is
$\overline{\mathbb{D}}=B_{L}^{n}\times\left[  0,T\right]  $ and that the
boundary of $\mathbb{D}$ is Bd$\mathbb{D}=B_{L}^{n}\times\left\{  0\right\}
\cup B_{L}^{n}\times\left\{  T\right\}  $.

\begin{theorem}
\label{Theorem2}Let $u\left(  x,t\right)  $, $v\left(  x,t\right)
:\mathbb{Q}_{p}^{n}\times\mathbb{R\rightarrow R}$ be functions such that the
following conditions hold for $L\geq L_{0}(T)$ for some $L_{0}(T)\in
\mathbb{N}$:

\noindent(C1) $u\left(  x,t\right)  $, $v\left(  x,t\right)  $ are continuous
functions of $x$ in $B_{L}^{n}$ and continuous differentiable functions of $t
$ on $\left[  0,T\right]  $;

\noindent(C2) with the notation $Pw\left(  x,t\right)  :=\frac{\partial
w\left(  x,t\right)  }{\partial t}-\left\{  J\left(  \left\Vert x\right\Vert
_{p}\right)  \ast w\left(  x,t\right)  -w\left(  x,t\right)  \right\}  $,%
\[
Pu\left(  x,t\right)  +\lambda f\left(  u\left(  x,t\right)  \right)  \geq
Pv\left(  x,t\right)  +\lambda f\left(  v\left(  x,t\right)  \right)  \text{
for }\left(  x,t\right)  \in\mathbb{D}\text{;}%
\]

\noindent(C3) $u\left(  x,0\right)  \geq v\left(  x,0\right)  $ for $x$ in
$B_{L}^{n}$.

Then under the hypotheses (C1)-(C3), it verifies that
\[
u\left(  x,t\right)  \geq v\left(  x,t\right)  \text{ for }\left(  x,t\right)
\in\overline{\mathbb{D}}\text{.}%
\]
In addition, if $\left\Vert u\left(  \cdot,t\right)  -v\left(  \cdot,t\right)
\right\Vert _{\infty}\leq C<\infty$, where $C$ is a constant independent of
$t$, there exists $L_{0}\in\mathbb{N}$ \ such that for $L\geq L_{0}$,%
\[
u\left(  x,t\right)  \geq v\left(  x,t\right)  \text{ for }\left(  x,t\right)
\in B_{L}^{n}\times\left[  0,\infty\right)  .
\]

\end{theorem}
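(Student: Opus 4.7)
The plan is to linearize the differential inequality via the Mean Value Theorem, kill the bad sign of the resulting zeroth-order term by a time rescaling, and then run a maximum-principle argument on a large enough cylinder, using the ultrametric structure to control the tail of the nonlocal convolution. Setting $w:=u-v$ and noting that $f\in C^{2}$ together with uniform boundedness of $u,v$ on $\mathbb{Q}_{p}^{n}\times\lbrack 0,T]$ (implicit in the interpretation that the hypotheses hold for every $L\geq L_{0}(T)$, hence effectively globally), the MVT gives $f(u)-f(v)=f^{\prime}(\xi)w$ with $|f^{\prime}(\xi)|\leq K_{f}$, and (C2) becomes
\[
\partial_{t}w-\bigl(J\ast w-w\bigr)+\lambda f^{\prime}(\xi)w\geq 0.
\]

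I would then absorb the bad sign of the last term by the exponential change of unknown $W(x,t):=e^{-Kt}w(x,t)$ with $K>\lambda K_{f}+2$, so that $c_{K}(x,t):=K+1+\lambda f^{\prime}(\xi)\geq 1$ uniformly and
\[
\partial_{t}W+c_{K}W-J\ast W\geq 0,\qquad W(\cdot,0)\geq 0.
\]
It suffices to prove $W\geq 0$ on $\overline{\mathbb{D}}$. For each $\delta>0$ set $W_{\delta}:=W+\delta$; since $J\ast 1=1$,
\[
\partial_{t}W_{\delta}+c_{K}W_{\delta}-J\ast W_{\delta}\geq\delta(c_{K}-1)\geq\delta.
\]

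The heart of the proof is a pointwise minimum argument on an enlarged cylinder $B_{\widetilde{L}}^{n}\times\lbrack 0,T]$ with $\widetilde{L}\geq L$ to be fixed below. Suppose for contradiction that the continuous function $W_{\delta}$ attains a nonpositive minimum on this compact set. Since $W_{\delta}(\cdot,0)\geq\delta>0$, the minimum is attained at some $(x_{0},t_{0})$ with $t_{0}>0$, so $\partial_{t}W_{\delta}(x_{0},t_{0})\leq 0$ and $W_{\delta}(x_{0},t_{0})\leq 0$. The key $p$-adic input is the ultrametric identity: for $x_{0}\in B_{\widetilde{L}}^{n}$ and $y\notin B_{\widetilde{L}}^{n}$, $\|x_{0}-y\|_{p}=\|y\|_{p}$, so
\[
\int_{B_{\widetilde{L}}^{n}}J(\|x_{0}-y\|_{p})\,d^{n}y=1-\epsilon(\widetilde{L}),\qquad\epsilon(\widetilde{L}):=\int_{\mathbb{Q}_{p}^{n}\setminus B_{\widetilde{L}}^{n}}J(\|y\|_{p})\,d^{n}y\xrightarrow[\widetilde{L}\to\infty]{}0.
\]
Combining $W_{\delta}(y,t_{0})\geq W_{\delta}(x_{0},t_{0})$ on $B_{\widetilde{L}}^{n}$ with a global bound $|W_{\delta}|\leq C^{\prime}$ gives $J\ast W_{\delta}(x_{0},t_{0})\geq W_{\delta}(x_{0},t_{0})(1-\epsilon(\widetilde{L}))-C^{\prime}\epsilon(\widetilde{L})$. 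Substituting into the inequality at $(x_{0},t_{0})$ and using that $(c_{K}-1+\epsilon(\widetilde{L}))W_{\delta}(x_{0},t_{0})\leq 0$ yields $\delta\leq C^{\prime}\epsilon(\widetilde{L})$, which fails for $\widetilde{L}$ large enough. Hence $W_{\delta}>0$ on $B_{\widetilde{L}}^{n}\times\lbrack 0,T]\supset\overline{\mathbb{D}}$, and letting $\delta\to 0$ gives $W\geq 0$ on $\overline{\mathbb{D}}$. The $[0,\infty)$ extension follows by running the same argument on every $[0,T]$, the uniform bound $\|u(\cdot,t)-v(\cdot,t)\|_{\infty}\leq C$ ensuring that the constants stay independent of $T$.

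The main obstacle is precisely this nonlocality: because $J\ast W_{\delta}(x_{0},t_{0})$ senses $W_{\delta}$ outside $B_{\widetilde{L}}^{n}$, an Archimedean-style pointwise minimum principle cannot be closed without controlling a boundary layer. The decisive feature of the $p$-adic setting is that this control is not merely asymptotic but exact: the ultrametric identity $\int_{B_{\widetilde{L}}^{n}}J=1-\epsilon(\widetilde{L})$ splits the convolution cleanly, so that the $\delta$-perturbation can absorb the tail once the ball is taken large enough, which is exactly what the threshold $L_{0}(T)$ in the statement encodes.
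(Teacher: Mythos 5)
Your proposal is correct and follows the same skeleton as the paper's proof: linearize $f(u)-f(v)$ by the Mean Value Theorem, rescale by $e^{-Kt}$ to make the zeroth-order coefficient positive, locate a minimum of the rescaled difference on a compact cylinder, and use $\int_{\mathbb{Q}_p^n\smallsetminus B_{\widetilde L}^n}J\rightarrow 0$ together with the ultrametric identity $\left\Vert x_0-y\right\Vert_p=\left\Vert y\right\Vert_p$ to control the part of the convolution that the cylinder cannot see. The one genuine difference is your $\delta$-regularization $W_\delta=W+\delta$, which turns the differential inequality into a strict one, $\partial_t W_\delta+c_KW_\delta-J\ast W_\delta\geq\delta$, so that the contradiction at a nonpositive minimum reads $\delta\leq C'\epsilon(\widetilde L)$ and is defeated by choosing $\widetilde L=\widetilde L(\delta)$ large. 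The paper instead splits $(J\ast z-z)(x_0,t_0)=I_0+I_1+I_2$ at the negative minimum $m$ and argues $\left\vert I_1\right\vert<I_2$ for $L$ large; since $I_2=\left\vert m\right\vert\epsilon(L)$ and $\left\vert I_1\right\vert\leq C(T)\epsilon(L)$ decay at the same rate, that comparison is delicate (it needs the values of $z(\cdot,t_0)$ outside $B_L^n$ to be dominated by $\left\vert m\right\vert$), whereas your fixed $\delta>0$ beats any quantity of order $\epsilon(\widetilde L)$ outright; so your variant is, if anything, more robust at the critical step. Two small points to tidy up: first, when the minimum occurs at $t_0=T$, condition (C2) is only assumed on the open time interval, so you should note that the inequality passes to $t=T$ by continuity of $\partial_t w$ on $[0,T]$ (the paper treats this boundary case separately); second, both your argument and the paper's silently use that $u-v$ is bounded on all of $\mathbb{Q}_p^n\times[0,T]$ and that $u,v$ take values in a fixed compact set on which $f'$ is bounded --- (C1) only gives continuity on each ball --- so it is worth flagging this as an implicit standing assumption, made explicit in the statement only for the $[0,\infty)$ part.
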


\begin{proof}
Set $w\left(  x,t\right)  :=u\left(  x,t\right)  -v\left(  x,t\right)  $. Then
$w\left(  x,0\right)  \geq0$ for $x$ in $B_{L}^{n}$\ and
\[
Pw\left(  x,t\right)  +\lambda f^{\prime}\left(  w_{0}\right)  w\left(
x,t\right)  \geq0\text{ for }\left(  x,t\right)  \in\mathbb{D}\text{,}%
\]
where $w_{0}=au+\left(  1-a\right)  v$, for some $a:=a(x,t)$ satisfying
$0<a<1$. By contradiction, assume that $w<0$ at some point of $\overline
{\mathbb{D}} $. Set $z\left(  x,t\right)  =e^{-kt}w\left(  x,t\right)  $ with
$k$ a positive constant satisfying%
\[
\lambda f^{\prime}\left(  w_{0}\right)  +k>0\text{ in }\overline{\mathbb{D}%
}\text{.}%
\]
Then
\begin{equation}
Pw+\lambda f^{\prime}\left(  w_{0}\right)  w=e^{kt}\left\{  Pz+z\left(
\lambda f^{\prime}\left(  w_{0}\right)  +k\right)  \right\}  \geq0\text{ in
}\mathbb{D}\text{,} \label{Eq4}%
\end{equation}
and $z\left(  x,0\right)  =w\left(  x,0\right)  \geq0$ in $B_{L}^{n}$. Since
$z$ is negative at some point of $\overline{\mathbb{D}}$,
\[
m:=\min_{\left(  x,t\right)  \in\overline{\mathbb{D}}}z\left(  x,t\right)
<0\text{ and }m=z\left(  x_{0},t_{0}\right)  \text{ for some }\left(
x_{0},t_{0}\right)  \in\overline{\mathbb{D}}\text{.}%
\]
Three cases occur: (i) $\left(  x_{0},t_{0}\right)  \in B_{L}^{n}%
\times\left\{  0\right\}  $, (ii) $\left(  x_{0},t_{0}\right)  \in B_{L}%
^{n}\times\left\{  T\right\}  $, (iii) $\left(  x_{0},t_{0}\right)
\in\mathbb{D}$. The first case is not possible due to condition (C3). We now
consider case (iii). By (\ref{Eq4}) we get
\[
Pz\left(  x_{0},t_{0}\right)  \geq-\left(  \lambda f^{\prime}\left(
w_{0}\right)  +k\right)  z\left(  x_{0},t_{0}\right)
\]
and since $\left(  \lambda f^{\prime}\left(  w_{0}\right)  +k\right)  >0$ in
$\mathbb{D}$, we get $Pz\left(  x_{0},t_{0}\right)  >0$. A contradiction is
derived from Claim 4, and thus this case is not possible.

\textbf{Claim 4. }If\textbf{\ }$\left(  x_{0},t_{0}\right)  \in\mathbb{D}$,
then $Pz\left(  x_{0},t_{0}\right)  \leq0$ for $L$ sufficiently large.

The proof requires to consider two subcases: (i) supp $J\subset B_{L}^{n}$ for
some $L\in\mathbb{N}$, (ii) supp $J\nsubseteqq B_{L}^{n}$ for any
$L\in\mathbb{N}$. In the first subcase, $\frac{\partial z\left(  x_{0}%
,t_{0}\right)  }{\partial t}\leq0$ since $z\left(  x_{0},t\right)  $ has a
global minimum in $\left[  0,T\right]  $, and since%
\[
\left(  J\ast z-z\right)  \left(  x_{0},t_{0}\right)  =\int\limits_{B_{L}^{n}%
}J\left(  \left\Vert x_{0}-y\right\Vert _{p}\right)  \left\{  z\left(
y,t_{0}\right)  -z\left(  x_{0},t_{0}\right)  \right\}  d^{n}y\geq0,
\]
hence $Pz\left(  x_{0},t_{0}\right)  \leq0$. In the second subcase,
$\frac{\partial z\left(  x_{0},t_{0}\right)  }{\partial t}\leq0$ and%
\begin{gather*}
\left(  J\ast z-z\right)  \left(  x_{0},t_{0}\right)  =\int\limits_{\mathbb{Q}%
_{p}^{n}}J\left(  \left\Vert x_{0}-y\right\Vert _{p}\right)  \left\{  z\left(
y,t_{0}\right)  -z\left(  x_{0},t_{0}\right)  \right\}  d^{n}y\\
=\int\limits_{B_{L}^{n}}J\left(  \left\Vert x_{0}-y\right\Vert _{p}\right)
\left\{  z\left(  y,t_{0}\right)  -z\left(  x_{0},t_{0}\right)  \right\}
d^{n}y+\int\limits_{\mathbb{Q}_{p}^{n}\smallsetminus B_{L}^{n}}J\left(
\left\Vert x_{0}-y\right\Vert _{p}\right)  z\left(  y,t_{0}\right)  d^{n}y\\
-z\left(  x_{0},t_{0}\right)  \int\limits_{\mathbb{Q}_{p}^{n}\smallsetminus
B_{L}^{n}}J\left(  \left\Vert x_{0}-y\right\Vert _{p}\right)  d^{n}%
y=:I_{0}+I_{1}+I_{2}.
\end{gather*}
Now, $I_{0}\geq0$ since $z\left(  y,t_{0}\right)  \geq z\left(  x_{0}%
,t_{0}\right)  $ for $y\in B_{L}^{n}$, and $I_{2}>0$ since
\[
\int\limits_{\left\Vert y\right\Vert _{p}>p^{L}}J\left(  \left\Vert
x_{0}-y\right\Vert _{p}\right)  d^{n}y=\int\limits_{\left\Vert y\right\Vert
_{p}>p^{L}}J\left(  \left\Vert y\right\Vert _{p}\right)  d^{n}y\text{ because
}x_{0}\in B_{L}^{n}\text{,}%
\]
this last integral is strictly positive due to the fact that supp
$J\nsubseteqq B_{L}^{n}$ for any $L\in\mathbb{N}$, and that $J$ is a
continuous function. On the other hand,
\[
\left\vert I_{1}\right\vert \leq\left\Vert z\left(  \cdot,t_{0}\right)
\right\Vert _{\infty}\int\limits_{\mathbb{Q}_{p}^{n}\smallsetminus B_{L}^{n}%
}J\left(  \left\Vert x_{0}-y\right\Vert _{p}\right)  d^{n}y\leq C(T)\int
\limits_{\left\Vert y\right\Vert _{p}>p^{L}}J\left(  \left\Vert y\right\Vert
_{p}\right)  d^{n}y,
\]
since $J\left(  \left\Vert y\right\Vert _{p}\right)  \in L^{1}$, for $L$
sufficiently large, $\left\vert I_{1}\right\vert <I_{2}$ and therefore
$Pz\left(  x_{0},t_{0}\right)  <0$ for $L\geq L_{0}(T)$ for some $L_{0}%
(T)\in\mathbb{N}$.

Finally, we consider the case (ii). In this case, since $Pz+z\left(  \lambda
f^{\prime}\left(  w_{0}\right)  +k\right)  \geq0$ in $\mathbb{D}$, by taking
the limit as $\left(  x,t\right)  \rightarrow\left(  x_{0},T\right)  $ we get
$Pz\left(  x_{0},T\right)  \geq0$. The argument given for Claim 4 also works
in the case $\left(  x_{0},T\right)  $ because $z\left(  x_{0},t\right)  $ is
differentiable at $t=T$. Therefore case (ii) is also impossible.
\end{proof}

\begin{lemma}
\label{Lemma1}Set $\overline{u}\left(  x,t\right)  :=\widetilde{u}\left(
x\right)  +\epsilon e^{-\beta t}$, $\underline{u}\left(  x,t\right)
:=\widetilde{u}\left(  x\right)  -\epsilon e^{-\beta t}$, where $\widetilde
{u}\left(  x\right)  $ is the function given in Theorem \ref{Theorem1},
$\epsilon$, $\beta>0$, and $x\in\mathbb{Q}_{p}^{n}$, $t\geq0$. If $\epsilon$,
$\beta$ are sufficiently small, then%
\begin{equation}
P\overline{u}\left(  x,t\right)  +\lambda f\left(  \overline{u}\left(
x,t\right)  \right)  \geq0\text{ \ for }x\in\mathbb{Q}_{p}^{n},t\geq0
\label{Super_sol}%
\end{equation}
and%
\begin{equation}
P\underline{u}\left(  x,t\right)  +\lambda f\left(  \underline{u}\left(
x,t\right)  \right)  \leq0\text{ \ for }x\in\mathbb{Q}_{p}^{n},t\geq0.
\label{Sub_sol}%
\end{equation}

\end{lemma}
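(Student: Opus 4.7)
The plan is to reduce both (\ref{Super_sol}) and (\ref{Sub_sol}) to a pointwise inequality of the form $\lambda f'(\xi)\ge\beta$ by a direct computation exploiting three facts: the linearity of $P$; that the operator $A$ annihilates constants, since $\int_{\mathbb{Q}_p^n}J(\|y\|_p)\,d^ny=1$; and that $\widetilde{u}$ satisfies $A\widetilde{u}+\lambda f(\widetilde{u})=0$. Because the two cases are completely symmetric, I would treat the supersolution case first and then indicate the sign changes that yield the subsolution case.

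First I would compute the action of $P$ separately on $\widetilde{u}$ and on the spatially constant function $\epsilon e^{-\beta t}$. For the constant piece, $J\ast(\epsilon e^{-\beta t})=\epsilon e^{-\beta t}$, so $P(\epsilon e^{-\beta t})=-\beta\epsilon e^{-\beta t}$. For the stationary piece, the identity $J\ast\widetilde{u}-\widetilde{u}=\lambda f(\widetilde{u})$ gives $P\widetilde{u}=-\lambda f(\widetilde{u})$. Adding these and using the definition of $\overline{u}$,
\[
P\overline{u}(x,t)+\lambda f(\overline{u}(x,t))=\lambda\bigl[f(\widetilde{u}(x)+\epsilon e^{-\beta t})-f(\widetilde{u}(x))\bigr]-\beta\epsilon e^{-\beta t}.
\]
A single application of the mean value theorem to $f$ rewrites this as $\epsilon e^{-\beta t}\bigl(\lambda f'(\xi)-\beta\bigr)$, where $\xi=\xi(x,t)$ lies between $\widetilde{u}(x)$ and $\widetilde{u}(x)+\epsilon e^{-\beta t}$. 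The analogous subsolution computation produces $\epsilon e^{-\beta t}\bigl(\beta-\lambda f'(\eta)\bigr)$ with $\eta$ between $\widetilde{u}(x)-\epsilon e^{-\beta t}$ and $\widetilde{u}(x)$, so both inequalities reduce to the single bound $\lambda f'(\xi),\lambda f'(\eta)\ge\beta$ uniformly in $x\in\mathbb{Q}_p^n$ and $t\ge 0$.

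To close this bound I would invoke condition (C5), which gives $f'(u)\ge\delta>0$ on $[-1,\alpha^-]\cup[\alpha^+,1]$. By the continuity of $f'$, there exists $\epsilon_0>0$ such that $f'\ge\delta/2$ on the enlarged compact set $K_{\epsilon_0}:=[-1-\epsilon_0,\alpha^-+\epsilon_0]\cup[\alpha^+-\epsilon_0,1+\epsilon_0]$. Taking $\epsilon<\epsilon_0$ forces $\xi(x,t)$ and $\eta(x,t)$ into $K_{\epsilon_0}$ whenever $\widetilde{u}(x)\in[-1,\alpha^-]\cup[\alpha^+,1]$, and then the choice $\beta\le\lambda\delta/2$ yields $\lambda f'(\xi),\lambda f'(\eta)\ge\lambda\delta/2\ge\beta$, which is what is required.

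The main obstacle is the uniformity-in-$x$ clause: Theorem \ref{Theorem1} only explicitly confines $\widetilde{u}(x)$ to $[-1,\alpha^-]\cup[\alpha^+,1]$ when $x\in B_{N_0}^n$, so I must verify that the same dichotomy propagates to every $x\in\mathbb{Q}_p^n$. Here I would lean on the identity $\widetilde{u}=g_i^{-1}(J\ast\widetilde{u})$ used at the end of the proof of Theorem \ref{Theorem1}, where $g(u)=u+\lambda f(u)$: hypothesis (H4) together with the choice of extreme branches of $g^{-1}$ pins $\widetilde{u}(x)$, for every $x$, inside one of the two intervals on which $f'\ge\delta$. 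Once this uniform range statement is secured, the estimate of the previous paragraph holds pointwise on all of $\mathbb{Q}_p^n\times[0,\infty)$, completing the proof of both (\ref{Super_sol}) and (\ref{Sub_sol}).
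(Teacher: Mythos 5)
Your computation coincides with the paper's: both decompose $P\overline{u}$ using $J\ast 1=1$ and the stationarity identity $J\ast\widetilde{u}-\widetilde{u}=\lambda f(\widetilde{u})$ to arrive at $-\beta\epsilon e^{-\beta t}+\lambda\{f(\widetilde{u}+\epsilon e^{-\beta t})-f(\widetilde{u})\}$, and both then conclude from a positive lower bound on $f'$ near the range of $\widetilde{u}$; the only difference is that you use the mean value theorem where the paper uses the second-order Taylor bound $|f(\widetilde{u}+s)-f(\widetilde{u})-sf'(\widetilde{u})|\le Cs^{2}$, which is immaterial.

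The one substantive step is the uniformity-in-$x$ clause that you yourself single out, and your proposed patch does not close it. The identity $\widetilde{u}=g_i^{-1}(J\ast\widetilde{u})$ with $g_i^{-1}$ an \emph{extreme} branch cannot hold for all large $\left\Vert x\right\Vert _{p}$: the extreme branches satisfy $g_1^{-1}(0)=u_{\lambda}^{-}$ and $g_3^{-1}(0)=u_{\lambda}^{+}$, whereas $\widetilde{u}\in X_{\infty}=C_{0}$ forces $\widetilde{u}(x)\rightarrow 0$ and $J\ast\widetilde{u}(x)\rightarrow 0$ as $\left\Vert x\right\Vert _{p}\rightarrow\infty$, so for large $\left\Vert x\right\Vert _{p}$ the value $\widetilde{u}(x)$ must be the \emph{middle} preimage of $J\ast\widetilde{u}(x)$ under $g$, i.e., it lies near $0$, where $f'<0$ by (H3). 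At such points your expression $\epsilon e^{-\beta t}\left(\lambda f'(\xi)-\beta\right)$ is negative for every choice of $\beta>0$ and small $\epsilon$, so the bound $\lambda f'(\xi)\geq\beta$ is simply unavailable outside $B_{N_{0}}^{n}$. To be fair, the paper's own proof invokes only ``(H4) and $\epsilon$, $\beta$ sufficiently small'' at exactly this point, without saying where $\widetilde{u}(x)$ lives off $B_{N_{0}}^{n}$; the difficulty you identified is therefore genuine and is not resolved by either argument as written. A repair would have to restrict the inequalities to a region where $\widetilde{u}$ stays in $[-1,\alpha^{-}]\cup[\alpha^{+},1]$, modify $\overline{u}$, $\underline{u}$ near infinity, or prove separately that $\widetilde{u}$ avoids the set where $f'\leq0$.
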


\begin{proof}
We show (\ref{Super_sol}), the other inequality is established in the same
way. We first note that%
\[
P\overline{u}\left(  x,t\right)  +\lambda f\left(  \overline{u}\left(
x,t\right)  \right)  =-\beta\epsilon e^{-\beta t}+\lambda\left\{  f\left(
\widetilde{u}\left(  x\right)  +\epsilon e^{-\beta t}\right)  -f\left(
\widetilde{u}\left(  x\right)  \right)  \right\}  .
\]
By using that $\widetilde{u}\left(  x\right)  $ is bounded and the Taylor
Expansion Theorem with $f\in C^{2}\left(  \mathbb{R}\right)  $,
\[
\left\vert f\left(  \widetilde{u}\left(  x\right)  +\epsilon e^{-\beta
t}\right)  -f\left(  \widetilde{u}\left(  x\right)  \right)  -\epsilon
e^{-\beta t}f^{\prime}\left(  \widetilde{u}\left(  x\right)  \right)
\right\vert \leq C\epsilon^{2}e^{-2\beta t},
\]
for some $C>0$. Finally, by (H4) and by choosing $\epsilon$, $\beta$
sufficiently small,%
\[
f\left(  \widetilde{u}\left(  x\right)  +\epsilon e^{-\beta t}\right)  \geq
f\left(  \widetilde{u}\left(  x\right)  \right)  +\epsilon e^{-\beta
t}f^{\prime}\left(  \widetilde{u}\left(  x\right)  \right)  -C\epsilon
^{2}e^{-2\beta t}\geq0.
\]

\end{proof}

\begin{remark}
\label{nota2}If in Lemma \ref{Lemma1}, operator $Pw\left(  x,t\right)  $ is
replaced by
\[
\frac{\partial w\left(  x,t\right)  }{\partial t}-\left\{  J_{N}\left(
\left\Vert x\right\Vert _{p}\right)  \ast w\left(  x,t\right)  -j_{N}w\left(
x,t\right)  \right\}
\]
and take $x\in B_{L}^{n},t\geq0$, with $L$ sufficiently large, then
(\ref{Super_sol})-(\ref{Sub_sol}) are valid for $x\in B_{L}^{n},t\geq0$.
\end{remark}

Let us consider \ the following Cauchy problem:%
\begin{equation}
\left\{
\begin{array}
[c]{l}%
\frac{\partial u\left(  x,t\right)  }{\partial t}=J\left(  \left\Vert
x\right\Vert _{p}\ast u\left(  x,t\right)  \right)  -u\left(  x,t\right)
-\lambda f\left(  u\left(  x,t\right)  \right)  \text{,}\\
u\left(  x,0\right)  =u_{0}\left(  x\right)  ,
\end{array}
\right.  \label{eqA}%
\end{equation}
where $\left(  x,t\right)  \in\mathbb{Q}_{p}^{n}\times\left[  0,\infty\right)
$, and $u\left(  x,t\right)  :\mathbb{Q}_{p}^{n}\times\left[  0,\infty\right)
\rightarrow\mathbb{R}$ is a bounded continuous \ function in $x$ for any fixed
$t\geq0$, which is continuously differentiable in $t\in\left[  0,\infty
\right)  $ for any fixed $x\in\mathbb{Q}_{p}^{n}$.

\begin{corollary}
\label{Cor1}Assume that $u\left(  x,t\right)  $ is a solution of Cauchy
problem (\ref{eqA}) with $u_{0}\left(  x\right)  $ a bounded function
satisfying $\widetilde{u}\left(  x\right)  -\epsilon\leq u_{0}\left(
x\right)  \leq\widetilde{u}\left(  x\right)  +\epsilon$, with $\epsilon$
sufficiently small so that Lemma \ref{Lemma1} holds. Then $\underline
{u}\left(  x,t\right)  \leq u\left(  x,t\right)  \leq\overline{u}\left(
x,t\right)  $ for $\left(  x,t\right)  \in\mathbb{Q}_{p}^{n}\times\left[
0,\infty\right)  $.
\end{corollary}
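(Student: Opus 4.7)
The plan is to derive Corollary \ref{Cor1} as a two-sided application of the comparison theorem (Theorem \ref{Theorem2}), using the auxiliary functions $\overline{u}$, $\underline{u}$ from Lemma \ref{Lemma1} as a supersolution/subsolution pair sandwiching the solution $u$ of the Cauchy problem (\ref{eqA}).

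First I would verify that on any bounded cylinder $\mathbb{D}=B_L^n\times(0,T)$, the hypotheses (C1)--(C3) of Theorem \ref{Theorem2} apply to the pair $(\overline{u},u)$. Condition (C1) is immediate: $\widetilde{u}$ is continuous on $\mathbb{Q}_p^n$ by Theorem \ref{Theorem1}, so $\overline{u}(x,t)=\widetilde{u}(x)+\epsilon e^{-\beta t}$ is continuous in $x$ and $C^1$ in $t$, while $u$ has the required regularity by the hypothesis on the Cauchy problem. Condition (C2) follows because
\[
P\overline{u}+\lambda f(\overline{u})\ \geq\ 0\ =\ Pu+\lambda f(u),
\]
where the left inequality is the supersolution bound (\ref{Super_sol}) of Lemma \ref{Lemma1} and the equality is just equation (\ref{eqA}). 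Condition (C3) holds at $t=0$ because $\overline{u}(x,0)=\widetilde{u}(x)+\epsilon\geq u_0(x)$ by assumption. Theorem \ref{Theorem2} then gives $\overline{u}(x,t)\geq u(x,t)$ on $B_L^n\times[0,T]$ provided $L\geq L_0(T)$.

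Next I would upgrade this to $\mathbb{Q}_p^n\times[0,\infty)$ using the second (infinite-time) assertion of Theorem \ref{Theorem2}. This requires a uniform bound on $\|\overline{u}(\cdot,t)-u(\cdot,t)\|_\infty$ independent of $t$. Since $\widetilde{u}$ takes values in $[-1,1]$ by Theorem \ref{Theorem1}, $u$ is bounded by hypothesis, and $|\epsilon e^{-\beta t}|\leq\epsilon$, we have $\|\overline{u}(\cdot,t)-u(\cdot,t)\|_\infty\leq 2+\epsilon+\|u\|_\infty=:C<\infty$. Consequently there exists $L_0\in\mathbb{N}$ such that $\overline{u}\geq u$ on $B_L^n\times[0,\infty)$ for every $L\geq L_0$. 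Since $\mathbb{Q}_p^n=\bigcup_{L\geq L_0}B_L^n$, this yields $u(x,t)\leq\overline{u}(x,t)$ on all of $\mathbb{Q}_p^n\times[0,\infty)$. Repeating the argument with the pair $(u,\underline{u})$ using the subsolution bound (\ref{Sub_sol}) and the initial inequality $\underline{u}(x,0)=\widetilde{u}(x)-\epsilon\leq u_0(x)$ yields the lower bound $u(x,t)\geq\underline{u}(x,t)$.

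The step I expect to be the most delicate is checking the uniform-in-$t$ boundedness required for the infinite-horizon extension; the finite-time comparison is essentially automatic once Lemma \ref{Lemma1} is in place, but a reader might worry about whether the constant $C$ really is $t$-independent. Here the argument is saved by the fact that $\widetilde{u}$ is a fixed function (bounded by $1$) and the correction $\epsilon e^{-\beta t}$ decays, so all the dependence on $t$ is already absorbed into an a priori $L^\infty$ bound on $u$ coming from the definition of the solution class. No further regularity of $f$ or $J$ is needed beyond what was used in proving Theorem \ref{Theorem2} and Lemma \ref{Lemma1}.
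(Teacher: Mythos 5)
Your proof is correct and follows exactly the route the paper takes: the paper's own proof of Corollary \ref{Cor1} is the single sentence ``The result follows from Theorem \ref{Theorem2} and Lemma \ref{Lemma1},'' and your argument is precisely the expansion of that sentence, applying the comparison theorem to the pairs $(\overline{u},u)$ and $(u,\underline{u})$ together with the supersolution/subsolution inequalities (\ref{Super_sol})--(\ref{Sub_sol}). Your explicit verification of the $t$-uniform bound needed for the infinite-horizon clause of Theorem \ref{Theorem2} is a detail the paper leaves implicit, and your reading of the solution class of (\ref{eqA}) as supplying a global $L^{\infty}$ bound on $u$ is the same reading the paper itself relies on.
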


\begin{proof}
The result follows from Theorem \ref{Theorem2} and Lemma \ref{Lemma1}.
\end{proof}

Lest us consider the following Cauchy problem:%
\begin{equation}
\left\{
\begin{array}
[c]{l}%
\frac{\partial u\left(  x,t\right)  }{\partial t}=J_{N}\left(  \left\Vert
x\right\Vert _{p}\ast u\left(  x,t\right)  \right)  -u\left(  x,t\right)
-\lambda f\left(  u\left(  x,t\right)  \right) \\
u\left(  x,0\right)  =u_{0}\left(  x\right)  ,
\end{array}
\right.  \label{eqB}%
\end{equation}
where $\left(  x,t\right)  \in B_{N}^{n}\times\left[  0,\infty\right)  $, and
$u\left(  x,t\right)  :B_{N}^{n}\times\left[  0,\infty\right)  \rightarrow
\mathbb{R}$ is a continuous\ function in $x$ for any fixed $t\geq0$, which is
continuously differentiable in $t\in\left[  0,\infty\right)  $ for any fixed
$x\in B_{N}^{n}$.

\begin{corollary}
\label{Cor2A} Assume that $N$ is sufficiently large and that $u\left(
x,t\right)  $ is a solution of Cauchy problem (\ref{eqB}) with $u_{0}\left(
x\right)  \in X_{N}$ satisfying $\widetilde{u}\left(  x\right)  -\epsilon\leq
u_{0}\left(  x\right)  \leq\widetilde{u}\left(  x\right)  +\epsilon$, with
$\epsilon$ sufficiently small so that Lemma \ref{Lemma1} and Remark
\ref{nota2} hold. Then $\underline{u}\left(  x,t\right)  \leq u\left(
x,t\right)  \leq\overline{u}\left(  x,t\right)  $ for $\left(  x,t\right)  \in
B_{N}^{n}\times\left[  0,\infty\right)  $.
\end{corollary}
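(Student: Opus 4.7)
The plan is to mirror the proof of Corollary \ref{Cor1} step by step, substituting the operator from Remark \ref{nota2} and the compact spatial domain $B_N^n$ throughout. First, by Lemma \ref{Lemma1} combined with Remark \ref{nota2}, for $N$ sufficiently large and $\epsilon,\beta$ sufficiently small, the functions $\overline{u}(x,t)=\widetilde{u}(x)+\epsilon e^{-\beta t}$ and $\underline{u}(x,t)=\widetilde{u}(x)-\epsilon e^{-\beta t}$ are, respectively, a super- and a sub-solution of (\ref{eqB}) on $B_N^n\times[0,\infty)$. The hypothesis on the initial datum gives at once $\underline{u}(x,0)\leq u(x,0)\leq\overline{u}(x,0)$ on $B_N^n$.

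The next step is to establish a comparison principle analogous to Theorem \ref{Theorem2} for the operator $Pw(x,t):=\partial_{t}w-\{J_{N}(\left\Vert x\right\Vert_{p})\ast w-j_{N}w\}$ on $B_{N}^{n}\times(0,T)$. I would transcribe the argument used in Theorem \ref{Theorem2}: set $w=u-v$, introduce $z=e^{-kt}w$ with $k$ chosen so that $\lambda f'(w_0)+k>0$ throughout the domain, and derive a contradiction from the assumption that $z$ attains a negative minimum at some point $(x_0,t_0)\in\overline{\mathbb{D}}$. The decisive simplification is that, since $\operatorname{supp} J_N\subset B_N^n$, only the first subcase of Claim 4 in the proof of Theorem \ref{Theorem2} can arise: using $\int_{B_N^n}J_N(\left\Vert x_0-y\right\Vert_p)\,d^{n}y=j_N$ together with $z(y,t_0)\geq z(x_0,t_0)$ for all $y\in B_N^n$ yields $(J_N\ast z-j_N z)(x_0,t_0)\geq 0$, whence $Pz(x_0,t_0)\leq 0$, contradicting the differential inequality. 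The case $t_0=T$ is handled by the same limit argument as in Theorem \ref{Theorem2}.

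With this comparison principle in hand, I would apply it twice, to the pairs $(\overline{u},u)$ and $(u,\underline{u})$ on $B_N^n\times[0,T]$ for each $T>0$, obtaining $\underline{u}(x,t)\leq u(x,t)\leq\overline{u}(x,t)$ on $B_N^n\times[0,T]$. Since $T$ is arbitrary and the relevant differences are uniformly bounded in $t$ (the exponentially decaying perturbation is controlled by $\epsilon$, and $\widetilde{u}$ is bounded), the conclusion extends to $B_N^n\times[0,\infty)$.

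The main obstacle is the verification of the comparison principle in this truncated setting, although it is in fact mildly easier than Theorem \ref{Theorem2}: the compact support of $J_N$ eliminates the delicate second subcase, where one had to balance the tail integral $\int_{\mathbb{Q}_p^n\smallsetminus B_L^n}J(\left\Vert y\right\Vert_p)\,d^{n}y$ against the main contribution for $L$ large. The only algebraic novelty is the coefficient $j_N\leq 1$ in place of $1$, which is harmless: it can be absorbed into the constant $k$ of the exponential substitution without altering any sign in the minimum-point argument.
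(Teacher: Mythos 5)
Your proposal is correct and takes essentially the same route as the paper, whose entire proof of this corollary is the one-line citation of Theorem \ref{Theorem2} and Remark \ref{nota2}; your explicit re-run of the comparison argument for the truncated operator $\partial_{t}-\left(  J_{N}\ast\cdot-j_{N}\cdot\right)  $, in which only the compact-support subcase of Claim 4 can occur, is exactly the content hiding behind that citation. One small simplification: since $\int_{B_{N}^{n}}J_{N}\left(  \left\Vert y\right\Vert _{p}\right)  d^{n}y=j_{N}$, at a negative minimum one gets $\left(  J_{N}\ast z-j_{N}z\right)  \left(  x_{0},t_{0}\right)  =\int_{B_{N}^{n}}J_{N}\left(  \left\Vert x_{0}-y\right\Vert _{p}\right)  \left\{  z\left(  y,t_{0}\right)  -z\left(  x_{0},t_{0}\right)  \right\}  d^{n}y\geq0$ directly, so no absorption of $j_{N}$ into the constant $k$ is even required.
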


\begin{proof}
The result follows from Theorem \ref{Theorem2} and Remark \ref{nota2}.
\end{proof}

\section{The Cauchy Problem}

\begin{theorem}
\label{Theorem3}Consider the Cauchy problem:%
\begin{equation}
\left\{
\begin{array}
[c]{ll}%
u\left(  x,t\right)  \in C\left(  \left[  0,T\right]  ,X_{\infty}\left(
\mathbb{Q}_{p}^{n}\right)  \right)  \cap C^{1}\left(  \left[  0,T\right]
,X_{\infty}\left(  \mathbb{Q}_{p}^{n}\right)  \right)  , & T>0\\
\frac{\partial u\left(  x,t\right)  }{\partial t}=J\left(  \left\Vert
x\right\Vert _{p}\right)  \ast u\left(  x,t\right)  -u\left(  x,t\right)
-\lambda f\left(  u\left(  x,t\right)  \right)  , & t\in\left[  0,T\right] \\
u\left(  x,0\right)  =u_{0}\left(  x\right)  , &
\end{array}
\right.  \label{Cauchy_problem_2}%
\end{equation}
with $u_{0}\left(  x\right)  \in X_{\infty}\left(  \mathbb{Q}_{p}^{n}\right)
$ satisfying $\widetilde{u}\left(  x\right)  -\epsilon\leq u_{0}\left(
x\right)  \leq\widetilde{u}\left(  x\right)  +\epsilon$, with $\lambda$,
$\widetilde{u}\left(  x\right)  $ as in Theorem \ref{Theorem1}, and\ with
$\epsilon$ sufficiently small so that Lemma \ref{Lemma1} holds. Then, the
initial value problem (\ref{Cauchy_problem_2}) has a unique solution
satisfying $\underline{u}\left(  x,t\right)  \leq u\left(  x,t\right)
\leq\overline{u}\left(  x,t\right)  $ for $\left(  x,t\right)  \in
\mathbb{Q}_{p}^{n}\times\left[  0,\infty\right)  $. In addition, $u\left(
x,t\right)  $ \ satisfies then $\lim_{t\rightarrow\infty}\left\Vert u\left(
x,t\right)  -\widetilde{u}\left(  x\right)  \right\Vert _{\infty}=0$.
\end{theorem}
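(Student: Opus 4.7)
The plan is to view equation (\ref{Cauchy_problem_2}) as a semilinear abstract ODE in the Banach space $X_{\infty}$ of the form $u'(t) + A u(t) + \lambda F(u(t)) = 0$, where $A \in \mathfrak{B}(X_{\infty}, X_{\infty})$ is the bounded operator of Lemma \ref{Lemma_E} and $F$ is the Nemytskii operator $F(u)(x) := f(u(x))$. First I would verify that $F$ maps $X_{\infty} = C_{0}$ into itself and is locally Lipschitz: continuity of $f$ together with $f(0) = 0$ (Hypothesis (H2)) implies $f \circ u \in C_{0}$ whenever $u \in C_{0}$, and the local Lipschitz property on bounded sets of $X_{\infty}$ follows from $f \in C^{2}$ (Hypothesis (H1)). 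Since $A$ is bounded, it generates the uniformly continuous semigroup $\{e^{-tA}\}_{t \ge 0}$ on $X_{\infty}$, and the Duhamel/mild formulation $u(t) = e^{-tA} u_{0} - \lambda \int_{0}^{t} e^{-(t-s)A} F(u(s))\,ds$ coincides with the classical one.

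With this setup, a standard Picard--Lindel\"{o}f contraction argument on $C([0,\tau], X_{\infty})$ (as in \cite{C-H}, \cite{Milan}) produces a unique maximal classical solution $u \in C^{1}([0, T_{\max}), X_{\infty})$. To promote this local solution to a global one and to obtain the desired bounds, I would invoke Corollary \ref{Cor1}: the hypothesis $\widetilde{u}(x) - \epsilon \le u_{0}(x) \le \widetilde{u}(x) + \epsilon$ places $u_{0}$ pointwise between the sub- and super-solutions $\underline{u}(x,t) = \widetilde{u}(x) - \epsilon e^{-\beta t}$ and $\overline{u}(x,t) = \widetilde{u}(x) + \epsilon e^{-\beta t}$ of Lemma \ref{Lemma1}, and both $u(\cdot, t)$ and $\overline{u}(\cdot, t)$ remain in a fixed bounded subset of $X_{\infty}$ on the interval of existence, so the boundedness condition $\|u(\cdot,t) - \overline{u}(\cdot,t)\|_{\infty} \le C$ required by the second part of Theorem \ref{Theorem2} is automatic. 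Thus
\[
\underline{u}(x,t) \le u(x,t) \le \overline{u}(x,t), \qquad (x,t) \in \mathbb{Q}_{p}^{n} \times [0, T_{\max}).
\]
This gives the uniform a priori bound $\|u(\cdot, t)\|_{\infty} \le \|\widetilde{u}\|_{\infty} + \epsilon$, which rules out blow-up and extends the solution to any $T > 0$, hence yields global existence with the claimed sandwich. The long-time convergence is then immediate, since
\[
\|u(\cdot, t) - \widetilde{u}\|_{\infty} \le \epsilon\, e^{-\beta t} \xrightarrow[t \to \infty]{} 0.
\]

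The main obstacle I anticipate is the interface between the abstract Banach-space solution produced by Picard iteration and the pointwise comparison principle: Theorem \ref{Theorem2} is stated for functions that are classically continuous in $x$ and $C^{1}$ in $t$ on cylinders $\overline{\mathbb{D}} = B_{L}^{n} \times [0,T]$, whereas the Duhamel solution is a priori only a curve in $X_{\infty}$. One must check that any $u \in C^{1}([0,T], X_{\infty})$ satisfies the hypotheses of Theorem \ref{Theorem2} on every such cylinder, which follows from $X_{\infty} \subset C_{0}$ and the fact that the $\|\cdot\|_{\infty}$-derivative coincides with the pointwise $t$-derivative uniformly in $x$. A secondary technicality is to justify that Corollary \ref{Cor1} yields the comparison on all of $\mathbb{Q}_{p}^{n}$ and not merely on balls $B_{L}^{n}$, which is exactly the content of the second assertion of Theorem \ref{Theorem2} once the uniform boundedness of $u - \overline{u}$ and $\underline{u} - u$ in $t$ is established from the global $L^{\infty}$-bound above.
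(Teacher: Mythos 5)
Your proposal is correct and follows essentially the same route as the paper: the mild (Duhamel) formulation with the bounded operator $A$ and the locally Lipschitz Nemytskii map $u\mapsto f(u)$, local existence by a contraction argument, the comparison principle (Corollary \ref{Cor1} built on Theorem \ref{Theorem2} and Lemma \ref{Lemma1}) to obtain the a priori sandwich $\underline{u}\leq u\leq\overline{u}$ that rules out blow-up and gives global existence, and the decay $\epsilon e^{-\beta t}\rightarrow0$ for the asymptotics. The only cosmetic difference is that the paper obtains the contraction semigroup $e^{-tA}$ via $m$-dissipativity and the Hille--Yosida--Phillips theorem rather than directly from the boundedness of $A$.
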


\begin{proof}
We recall that $-Au\left(  x,t\right)  =J\left(  \left\Vert x\right\Vert
_{p}\right)  \ast u\left(  x,t\right)  -u\left(  x,t\right)  $. By Lemma
\ref{Lemma_E}, $A$ gives rise to a linear bounded operator from $X_{\infty}$
onto itself. On the other hand, $A$ is $m$-dissipative, i.e. there exists
$\nu_{0}>0$ such that for all $h\in X_{\infty}$ there exists a solution $u\in
X_{\infty}$ of $u-\nu_{0}Au=h$, cf. \cite[Proposition 2.2.6]{C-H}. Indeed,
consider the operator%
\[%
\begin{array}
[c]{llll}%
T: & X_{\infty} & \rightarrow & X_{\infty}\\
& u & \rightarrow & -\nu J\ast u+\nu u+h,
\end{array}
\]
with $\nu>0$. \ By Lemma \ref{Lemma_E}, $T$ is well-defined and $\left\vert
Tu-Tv\right\vert \leq2\nu\left\Vert u-v\right\Vert _{\infty}$, thus if
$0<2\nu<1$, $T$ is a contraction and by the Banach Fixed Point Theorem there
exits a unique $u\in X_{\infty}$ such that $Tu=u$, \ which implies that $A$ is
an $m$-dissipative operator on $X_{\infty}$. By the Hille-Yosida-Phillips
Theorem, $-A$ is the generator of a contraction semigroup $e^{-tA}$ on
$X_{\infty}$, see e.g. \cite[Theorem 3.4.4]{C-H}. Now, any solution \ of
(\ref{Cauchy_problem_2}) is \ a solution of the following integral equation:%
\begin{equation}
u\left(  x,t\right)  =e^{-tA}u_{0}\left(  x\right)  +\int_{0}^{t}e^{-\left(
t-s\right)  A}f\left(  u\left(  x,s\right)  \right)  ds\text{ for }t\in\left[
0,T\right]  \text{, }T>0\text{,} \label{Eq_Integral_Eq}%
\end{equation}
cf. \cite[Lemma 4.1.1]{C-H}. By using that $\left\Vert u_{0}\left(  x\right)
\right\Vert _{\infty}\leq\left\Vert \widetilde{u}\left(  x\right)  \right\Vert
_{\infty}+\epsilon\leq1+\epsilon=:M<\infty$, there exists a unique solution
$u\left(  x,t\right)  \in C\left(  \left[  0,T_{M}\right]  ,X_{\infty}\left(
\mathbb{Q}_{p}^{n}\right)  \right)  $ of (\ref{Eq_Integral_Eq}), cf.
\cite[Proposition 4.3.3]{C-H}. Then, two cases occur: (i) $T_{M}=\infty$, i.e.
there exists a global solution for (\ref{Cauchy_problem_2}); (ii)
$T_{M}<\infty$ and $\lim_{t\rightarrow T_{M}}\left\Vert u\left(
\cdot,t\right)  \right\Vert _{\infty}=\infty$, cf. \cite[Theorem 4.3.4]{C-H},
now by Corollary \ref{Cor1}, $\left\Vert u\left(  \cdot,t\right)  \right\Vert
_{\infty}\leq M$, therefore $T_{M}=\infty$. Finally, by using Corollary
\ref{Cor1} and Lemma \ref{Lemma1}, $\lim_{t\rightarrow\infty}\left\Vert
u\left(  x,t\right)  -\widetilde{u}\left(  x\right)  \right\Vert _{\infty}=0$.
\end{proof}

By using the same reasoning we obtain the following finite dimensional version
of Theorem \ref{Theorem3}:

\begin{theorem}
\label{Theorem3A}Consider the Cauchy problem:%
\begin{equation}
\left\{
\begin{array}
[c]{ll}%
u\left(  x,t\right)  \in C\left(  \left[  0,T\right]  ,X_{N}\right)  \cap
C^{1}\left(  \left[  0,T\right]  ,X_{N}\right)  , & T>0\\
\frac{\partial u\left(  x,t\right)  }{\partial t}=J_{N}\left(  \left\Vert
x\right\Vert _{p}\right)  \ast u\left(  x,t\right)  -u\left(  x,t\right)
-\lambda f\left(  u\left(  x,t\right)  \right)  , & x\in B_{N}^{n},t\in\left[
0,T\right] \\
u\left(  x,0\right)  =u_{0}\left(  x\right)  , &
\end{array}
\right.  \label{Cauchy_problem_2A}%
\end{equation}
with $N$ sufficiently large, $u_{0}\left(  x\right)  \in X_{N}$ satisfying
$\widetilde{u}\left(  x\right)  -\epsilon\leq u_{0}\left(  x\right)
\leq\widetilde{u}\left(  x\right)  +\epsilon$, with $\lambda$, $\widetilde
{u}\left(  x\right)  $ as in Theorem \ref{Theorem1A}, and with $\epsilon$
sufficiently small so that Lemma \ref{Lemma1} and Remark \ref{nota2} hold.
Then the initial value problem (\ref{Cauchy_problem_2A}) has a unique solution
satisfying $\underline{u}\left(  x,t\right)  \leq u\left(  x,t\right)
\leq\overline{u}\left(  x,t\right)  $ for $\left(  x,t\right)  \in B_{N}%
^{n}\times\left[  0,\infty\right)  $. In addition, $u\left(  x,t\right)  $
satisfies $\lim_{t\rightarrow\infty}\left\Vert u\left(  x,t\right)
-\widetilde{u}\left(  x\right)  \right\Vert _{\infty}=0$.
\end{theorem}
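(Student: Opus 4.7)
The plan is to mirror the proof of Theorem \ref{Theorem3}, replacing $A$ with $A_N$ and $X_\infty$ with $X_N$, and exploiting that all the ingredients used in the infinite-dimensional case have finite-dimensional analogues already established in the paper. First I would verify that $-A_N$ generates a contraction semigroup on $X_N$. Since $X_N$ is finite-dimensional and $A_N$ is linear and bounded (Lemma \ref{Lemma_D}), the exponential $e^{-tA_N}$ is well defined; to obtain a \emph{contraction} semigroup with respect to $\|\cdot\|_\infty$, I would show $A_N$ is $m$-dissipative by imitating the fixed-point argument in Theorem \ref{Theorem3}: for small $\nu>0$, the map $u\mapsto -\nu J_N\ast u+\nu j_N u + h$ is a contraction on $X_N$ (with constant bounded by $2\nu$ using Young's inequality and the fact $j_N\leq1$), so $u-\nu A_N u = h$ is solvable for all $h\in X_N$. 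Then Hille-Yosida-Phillips \cite[Theorem 3.4.4]{C-H} applies.

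Next I would recast the Cauchy problem (\ref{Cauchy_problem_2A}) as the integral equation
\begin{equation*}
u(x,t)=e^{-tA_N}u_0(x)-\lambda\int_0^t e^{-(t-s)A_N}f(u(x,s))\,ds,
\end{equation*}
observing that $X_N$ is invariant under the nonlinear map $\varphi\mapsto f(\varphi)$ because $X_N$ is a finite-dimensional space of step functions (cf. the identity (\ref{condition_on_f})) and $f\in C^2(\mathbb{R})$. Local existence and uniqueness for a $C^1([0,T_M],X_N)$ solution with $T_M>0$ then follow from \cite[Proposition 4.3.3]{C-H} applied in $X_N$, using the Lipschitz bound on $f$ on the bounded set $\{\|\varphi\|_\infty\leq 1+\epsilon\}$.

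To upgrade local to global existence I would invoke the blow-up alternative \cite[Theorem 4.3.4]{C-H}: either $T_M=\infty$ or $\|u(\cdot,t)\|_\infty\to\infty$ as $t\to T_M^-$. The second alternative is ruled out by Corollary \ref{Cor2A}, which (for $N$ sufficiently large so that Remark \ref{nota2} applies) gives $\underline{u}(x,t)\leq u(x,t)\leq\overline{u}(x,t)$ on $B_N^n\times[0,T_M)$ and hence $\|u(\cdot,t)\|_\infty\leq 1+\epsilon$. Thus $T_M=\infty$ and the same two-sided bound extends to all $t\geq0$.

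Finally, for the convergence $\lim_{t\to\infty}\|u(\cdot,t)-\widetilde{u}\|_\infty=0$, I would use the explicit form of the sub- and super-solutions in Lemma \ref{Lemma1}: $\overline{u}(x,t)-\underline{u}(x,t)=2\epsilon e^{-\beta t}\to 0$, which sandwiches $u(x,t)$ around $\widetilde{u}(x)$. The main potential obstacle is checking that Remark \ref{nota2} actually gives the required sub/super-solution inequalities uniformly in $x\in B_N^n$ for all $t\geq 0$; the hypothesis ``$N$ sufficiently large'' in the statement is precisely what is needed to control the tail terms in the proof of Lemma \ref{Lemma1} when $J$ is replaced by $J_N$ and $1$ by $j_N$, so that the Taylor expansion argument there goes through with the same choice of $\epsilon,\beta$. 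Once this verification is in place, the rest of the argument is a direct translation of the proof of Theorem \ref{Theorem3}.
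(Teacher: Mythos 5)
Your proposal is correct and follows essentially the same route as the paper, which proves Theorem \ref{Theorem3A} simply by remarking that the argument of Theorem \ref{Theorem3} carries over verbatim with $A$, $X_{\infty}$, Corollary \ref{Cor1} replaced by $A_{N}$, $X_{N}$, Corollary \ref{Cor2A} (together with Remark \ref{nota2}). Your explicit verification of the $m$-dissipativity of $A_{N}$, the invariance of $X_{N}$ under $\varphi\mapsto f(\varphi)$ via (\ref{condition_on_f}), and the role of ``$N$ sufficiently large'' in Remark \ref{nota2} supplies exactly the details the paper leaves implicit.
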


\begin{remark}
\label{Nota_mild_sol}If in Theorems \ref{Theorem3}, the hypotheses on the
initial conditions are changed to \textquotedblleft with $u_{0}\left(
x\right)  \in X_{\infty}\left(  \mathbb{Q}_{p}^{n}\right)  $ satisfying
$-1\leq u_{0}\left(  x\right)  \leq1$,\textquotedblright\ then there exists a
unique solution satisfying $-1\leq u\left(  x,t\right)  \leq1$ for
$x\in\mathbb{Q}_{p}^{n}$ and $t\geq0$. A similar result is obtained if in
Theorem \ref{Theorem3A}, the hypotheses on the initial conditions are changed
to \textquotedblleft with $N$ sufficiently large, $u_{0}\left(  x\right)  \in
X_{N}$ satisfying $-1\leq u_{0}\left(  x\right)  \leq1$.\textquotedblright
\end{remark}

\section{Finite Approximations}

In this section we study finite approximations to the solutions of
\begin{equation}
\left\{
\begin{array}
[c]{lll}%
\frac{\partial u\left(  x,t\right)  }{\partial t}+Au\left(  x,t\right)
=-\lambda f\left(  u\left(  x,t\right)  \right)  , & x\in%
\mathbb{Q}
_{p}^{n}, & t\geq0\\
u\left(  x,0\right)  =u_{0}\left(  x\right)  , &  &
\end{array}
\right.  \label{Cauchy_problem_3}%
\end{equation}
where function $f(u)$ satisfies all the conditions given in Section
\ref{Sect_Staionary_Solutions}.

\begin{lemma}
[Condition B]\label{Lemma4}For $N\geq1$, $A_{N}\in\mathfrak{B}\left(
X_{N},X_{N}\right)  $, in addition,%
\[
\left\Vert e^{-A_{N}t}\right\Vert \leq1\text{ \ for }t\geq0\text{, }%
N\geq1\text{.}%
\]

\end{lemma}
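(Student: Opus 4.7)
The first claim, $A_{N}\in\mathfrak{B}(X_{N},X_{N})$, is already contained in Lemma \ref{Lemma_D}(ii); only the contractivity of the semigroup on $X_{N}$ requires argument, and my plan is to reduce it to a statement about row-stochastic matrices via the identification $X_{N}\cong(\mathbb{R}^{\#G_{N}^{n}},\left\Vert \cdot\right\Vert _{\mathbb{R}})$ set up in Section \ref{sect_matrix_repre}.

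Under that identification, the operator $A_{N}$ corresponds to the matrix $A^{(N)}$ of Lemma \ref{Lemma_C}, so $e^{-tA_{N}}$ corresponds to $P^{(N)}(t)=e^{-tA^{(N)}}$. Theorem \ref{Theorem2B}(i) already establishes that $P^{(N)}(t)\geq\boldsymbol{0}$ entrywise and $P^{(N)}(t)\boldsymbol{1}=\boldsymbol{1}$ for $t\geq0$, i.e.\ each row of $P^{(N)}(t)$ is a probability vector. The key (elementary) step is then the standard observation that for any row-stochastic matrix $M=[M_{\boldsymbol{ki}}]$ and any $v=[v_{\boldsymbol{i}}]\in\mathbb{R}^{\#G_{N}^{n}}$,
\[
|(Mv)_{\boldsymbol{k}}|=\Bigl|\sum_{\boldsymbol{i}}M_{\boldsymbol{ki}}v_{\boldsymbol{i}}\Bigr|\leq\sum_{\boldsymbol{i}}M_{\boldsymbol{ki}}|v_{\boldsymbol{i}}|\leq\Bigl(\sum_{\boldsymbol{i}}M_{\boldsymbol{ki}}\Bigr)\left\Vert v\right\Vert _{\mathbb{R}}=\left\Vert v\right\Vert _{\mathbb{R}},
\]
so that $\|M\|\leq1$ as an operator on $(\mathbb{R}^{\#G_{N}^{n}},\left\Vert \cdot\right\Vert _{\mathbb{R}})$. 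Applying this to $M=P^{(N)}(t)$ and transporting the inequality back through the Banach-space isomorphism $X_{N}\cong(\mathbb{R}^{\#G_{N}^{n}},\left\Vert \cdot\right\Vert _{\mathbb{R}})$ gives $\|e^{-A_{N}t}\|\leq1$ for all $t\geq0$ and all $N\geq1$.

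As a sanity check (and an alternative route independent of Theorem \ref{Theorem2B}), I would also sketch the direct semigroup computation. From (\ref{formula_A_N}), $-A_{N}=J_{N}\ast(\cdot)-j_{N}\mathrm{Id}$, and since convolution with $J_{N}$ commutes with scalar multiples of the identity,
\[
e^{-tA_{N}}\phi=e^{-tj_{N}}\sum_{k=0}^{\infty}\frac{t^{k}}{k!}\,J_{N}^{\ast k}\ast\phi,
\]
where $J_{N}^{\ast k}$ denotes the $k$-fold convolution. Young's inequality yields $\|J_{N}^{\ast k}\ast\phi\|_{\infty}\leq\|J_{N}\|_{L^{1}}^{k}\,\|\phi\|_{\infty}=j_{N}^{k}\,\|\phi\|_{\infty}$, and summing the series reproduces $\|e^{-tA_{N}}\phi\|_{\infty}\leq e^{-tj_{N}}e^{tj_{N}}\|\phi\|_{\infty}=\|\phi\|_{\infty}$. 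I do not anticipate any real obstacle; the only point that needs care is that the operator norm on $X_{N}$ with $\|\cdot\|_{\infty}$ genuinely coincides with the $\|\cdot\|_{\mathbb{R}}$-operator norm on $\mathbb{R}^{\#G_{N}^{n}}$, and this is immediate from the isometric identification recorded in Section \ref{sect_matrix_repre}.
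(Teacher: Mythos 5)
Your primary argument is exactly the paper's proof: identify $A_{N}$ with the matrix $A^{(N)}$, invoke Theorem \ref{Theorem2B}(i) for nonnegativity and $P^{(N)}(t)\boldsymbol{1}=\boldsymbol{1}$, and conclude $\left\Vert P^{(N)}(t)\varphi\right\Vert _{\infty}\leq\left\Vert \varphi\right\Vert _{\infty}$ by the standard row-stochastic estimate. The alternative series computation via $e^{-tA_{N}}=e^{-tj_{N}}\sum_{k}\frac{t^{k}}{k!}J_{N}^{\ast k}\ast(\cdot)$ and Young's inequality is also correct, but it is an optional extra; the proposal as a whole matches the paper's approach.
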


\begin{proof}
We recall that $e^{-tA_{N}}=e^{-tA^{\left(  N\right)  }}$ for $t\geq0$, since
$e^{-tA^{\left(  N\right)  }}=e^{-tSA^{\left(  N\right)  }S^{-1}}$ for any
invertible matrix $S$ and $t\geq0$. Hence
\[
\left\Vert e^{-tA_{N}}\right\Vert =\left\Vert e^{-tA^{\left(  N\right)  }%
}\right\Vert \leq1\text{ for }t\geq0
\]
because by Theorem \ref{Theorem2B} (i), $P^{\left(  N\right)  }\left(
t\right)  =e^{-tA^{\left(  N\right)  }}$ satisfies $P^{\left(  N\right)
}\left(  t\right)  \boldsymbol{1}=\boldsymbol{1}$ for $t\geq0$, where
$\boldsymbol{1}$ the unit vector which is \ a vector having all its entries
equal to one. With $P^{\left(  N\right)  }\left(  t\right)  =\left[
P_{\mathbf{i}\boldsymbol{j}}^{\left(  N\right)  }\left(  t\right)  \right]  ,$
with $P_{\mathbf{i}\boldsymbol{j}}^{\left(  N\right)  }\left(  t\right)
\geq0$, and taking $\varphi\in X_{N}$ satisfying $\left\Vert \varphi
\right\Vert _{\infty}=1$, we have
\[
\left\Vert P^{\left(  N\right)  }\left(  t\right)  \varphi\right\Vert
_{\infty}=\max_{\boldsymbol{i}}\left\vert \sum_{\boldsymbol{j}}%
P_{\boldsymbol{ij}}^{\left(  N\right)  }(t)\varphi\left(  \boldsymbol{j}%
\right)  \right\vert \leq\max_{\boldsymbol{i}}\sum_{\boldsymbol{j}%
}P_{\boldsymbol{ij}}^{\left(  N\right)  }(t)=1
\]
for $t\geq0$.
\end{proof}

\begin{remark}
$A$ is a linear bounded operator on $X_{\infty}$, and since $\mathcal{D}(%
\mathbb{Q}
_{p}^{n})$ is dense in $X_{\infty}$, $A$ is completely determined by its
restriction to $\mathcal{D}(%
\mathbb{Q}
_{p}^{n})$. For an easy cross-referencing with \cite{Milan}, we say `$A$ is
densely defined linear operator in $X_{\infty}$'.
\end{remark}

\begin{lemma}
[Condition C']\label{Lemma6}$A$ is densely defined linear operator in
$X_{\infty}$, there exists $\lambda_{0}\in\left(  -\infty,0\right)  \cap
\rho\left(  A\right)  $ and
\[
\lim_{N\rightarrow\infty}\left\Vert A_{N}P_{N}\varphi-P_{N}A\varphi\right\Vert
_{\infty}=\lim_{N\rightarrow\infty}\left\Vert E_{N}P_{N}\varphi-\varphi
\right\Vert _{\infty}=0\text{ for all }\varphi\in\mathcal{D}(%
\mathbb{Q}
_{p}^{n}).
\]

\end{lemma}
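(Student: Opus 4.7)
The proof of Lemma~\ref{Lemma6} should reduce almost entirely to previously established facts. My plan breaks the statement into four independent assertions.

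First, the density claim is essentially free: by Lemma~\ref{Lemma_E} the operator $A$ belongs to $\mathfrak{B}(X_\infty,X_\infty)$, so its domain is all of $X_\infty$ and in particular is dense. The existence of some $\lambda_0\in(-\infty,0)\cap\rho(A)$ is also immediate from Lemma~\ref{Lemma_E}, which asserts $\sigma(A)\subset[0,2]$; any negative real therefore lies in the resolvent set.

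Second, for the limit $\|E_N P_N\varphi-\varphi\|_\infty\to0$, I would observe that the embedding $E_N:X_N\to X_\infty$ simply reinterprets a function of $X_N$ as a function on $\mathbb{Q}_p^n$, so $E_N P_N\varphi$ and $P_N\varphi$ coincide pointwise and Lemma~\ref{Lemma_0} applies directly to give convergence in the sup-norm for every $\varphi\in X_\infty$, in particular for $\varphi\in\mathcal{D}(\mathbb{Q}_p^n)$.

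Third, for $\|A_N P_N\varphi-P_N A\varphi\|_\infty\to 0$ with $\varphi\in\mathcal{D}(\mathbb{Q}_p^n)$, the key reduction is that any test function belongs to some $\mathcal{D}_M^l$, and hence lies in $\mathcal{D}_N^{-N}=X_N$ once $N\geq\max\{M,-l\}$. For such $N$ one has $P_N\varphi=\varphi$ exactly, so the quantity to estimate becomes $\|A_N\varphi-P_N A\varphi\|_\infty$. Split this by the triangle inequality as $\|A_N\varphi-A\varphi\|_\infty+\|A\varphi-P_N A\varphi\|_\infty$. The second term tends to zero by Lemma~\ref{Lemma_0} applied to $A\varphi\in X_\infty$ (Lemma~\ref{Lemma_E}). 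For the first term I would use formula~(\ref{formula_A_N}) and the definition of $A$ to write
\[
A_N\varphi-A\varphi=-(J_N-J)\ast\varphi+(1-j_N)\varphi,
\]
then bound the convolution piece by Young's inequality as $\|\varphi\|_\infty\int_{\mathbb{Q}_p^n}|J(\|y\|_p)-J_N(\|y\|_p)|\,d^ny$, which tends to $0$ by the Dominated Convergence Theorem (with dominating function $J$, since $0\leq J_N\leq J$), and control the second piece using the limit~(\ref{limit}), i.e.\ $j_N\to 1$.

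I do not expect any real obstacle; the lemma is essentially a packaging of Lemmas~\ref{Lemma_0}, \ref{Lemma_D}, and \ref{Lemma_E} into the framework (Condition~C$'$) needed later for the Trotter--Kato type convergence theorem cited from \cite{Milan}. The only point requiring a modicum of care is the justification that $P_N\varphi=\varphi$ exactly once $\varphi\in\mathcal{D}(\mathbb{Q}_p^n)$ and $N$ is large enough, which follows from the inclusions $\mathcal{D}_M^l\subset\mathcal{D}_N^{-N}$ discussed in the proof of Lemma~\ref{Lemma_0}, together with the fact that the projection $P_N$ is the identity on $X_N$ (Lemma~\ref{Condition A}(iv)).
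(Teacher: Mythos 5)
Your proof is correct, but at the one non-trivial step it takes a genuinely different --- and in fact more robust --- route than the paper's. For the limit $\lim_{N\rightarrow\infty}\left\Vert A_{N}P_{N}\varphi-P_{N}A\varphi\right\Vert _{\infty}=0$ the paper does not estimate anything: it claims the quantity is \emph{exactly} zero for $N\geq N_{0}$, by asserting the identity $A\varphi=A_{N_{0}}\varphi$ for $\varphi\in\mathcal{D}_{N_{0}}^{-N_{0}}$ and then propagating it through $P_{N}$, $A_{N}$ and $E_{N}$. That identity only holds when $J$ is supported in $B_{N_{0}}^{n}$: in general $A\varphi=-J\ast\varphi+\varphi$ while $A_{N_{0}}\varphi=-J_{N_{0}}\ast\varphi+j_{N_{0}}\varphi$, so $A_{N_{0}}\varphi-A\varphi=\left(  J-J_{N_{0}}\right)  \ast\varphi-\left(  1-j_{N_{0}}\right)  \varphi$, which is nonzero whenever $j_{N_{0}}<1$; moreover $A\varphi$ need not lie in $X_{N_{0}}$ at all, since $J\ast\varphi$ is not compactly supported for general $J$. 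Your decomposition $\left\Vert A_{N}\varphi-A\varphi\right\Vert _{\infty}+\left\Vert A\varphi-P_{N}A\varphi\right\Vert _{\infty}$ --- with the first term controlled via (\ref{formula_A_N}) by $\left\Vert J-J_{N}\right\Vert _{L^{1}}\rightarrow0$ and $j_{N}\rightarrow1$ (cf. (\ref{limit})), and the second by Lemma \ref{Lemma_0} applied to $A\varphi\in X_{\infty}$ --- is precisely the argument that works for an arbitrary $J\geq0$ with $\int J=1$, and it is essentially the same computation the paper already carries out inside the proof of Lemma \ref{Lemma_E}. The remaining assertions (density, $\lambda_{0}\in\left(  -\infty,0\right)  \cap\rho\left(  A\right)  $, and $E_{N}P_{N}\varphi\rightarrow\varphi$) are handled the same way in both arguments. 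One cosmetic slip: the difference should read $A_{N}\varphi-A\varphi=\left(  J-J_{N}\right)  \ast\varphi+\left(  j_{N}-1\right)  \varphi$ rather than with $+\left(  1-j_{N}\right)  \varphi$; this does not affect your bounds.
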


\begin{proof}
The existence of $\lambda_{0}$ follow from Lemma \ref{Lemma_E}. Take
$\varphi\in\mathcal{D}(%
\mathbb{Q}
_{p}^{n})\subset X_{\infty}(%
\mathbb{Q}
_{p}^{n})$, then $\varphi\in\mathcal{D}_{N_{0}}^{-N_{0}}$ $\subset
\mathcal{D}_{N}^{-N}$, for some $N_{0}\geq1$ and for every $N\geq N_{0}$, in
addition, $A\varphi=A_{N_{0}}\varphi$. Indeed,
\begin{multline*}
A\varphi=-\int\limits_{%
\mathbb{Q}
_{p}^{n}}J\left(  \left\Vert x-y\right\Vert _{p}\right)  \left\{
\varphi\left(  y\right)  -\varphi\left(  x\right)  \right\}  d^{n}y\\
=-\int\limits_{%
\mathbb{Q}
_{p}^{n}}J\left(  \left\Vert x-y\right\Vert _{p}\right)  \left\{
\Omega\left(  p^{N_{0}}\left\Vert y\right\Vert _{p}\right)  \varphi\left(
y\right)  -\Omega\left(  p^{N_{0}}\left\Vert p\right\Vert _{p}\right)
\varphi\left(  x\right)  \right\}  d^{n}y\\
=-\int\limits_{B_{N_{0}}^{n}}\Omega\left(  p^{N_{0}}\left\Vert x-y\right\Vert
_{p}\right)  J\left(  \left\Vert x-y\right\Vert _{p}\right)  \left\{
\Omega\left(  p^{N_{0}}\left\Vert y\right\Vert _{p}\right)  \varphi\left(
y\right)  \right. \\
\left.  -\Omega\left(  p^{N_{0}}\left\Vert p\right\Vert _{p}\right)
\varphi\left(  x\right)  \right\}  d^{n}y=-\int\limits_{B_{N_{0}}^{n}}%
J_{N_{0}}\left(  \left\Vert x-y\right\Vert _{p}\right)  \left\{
\varphi\left(  y\right)  -\varphi\left(  x\right)  \right\}  d^{n}y=A_{N_{0}%
}\varphi.
\end{multline*}
Now, since $P_{N}\mid_{X_{N_{0}}}=P_{N_{0}}$ for $N\geq N_{0}$, $P_{N}%
A\varphi=P_{N}\left(  A_{N_{0}}\varphi\right)  =P_{N_{0}}\left(  A_{N_{0}%
}\varphi\right)  =A_{N_{0}}\varphi=A\varphi$, and since $A_{N}\mid_{X_{N_{0}}%
}=A_{N_{0}}$ for $N\geq N_{0}$, $A_{N}P_{N}\varphi=A_{N}\left(  P_{N_{0}%
}\varphi\right)  =A_{N_{0}}\left(  P_{N_{0}}\varphi\right)  =A_{N_{0}}%
\varphi=A\varphi$, therefore $\left\Vert A_{N}P_{N}\varphi-P_{N}%
A\varphi\right\Vert _{\infty}=0$ for $N\geq N_{0}$. On the other hand, since
$E_{N}\mid_{X_{N_{0}}}=E_{N_{0}}$ for $N\geq N_{0}$, $E_{N}P_{N}\varphi
=E_{N}\left(  P_{N_{0}}\varphi\right)  =E_{N_{0}}\left(  P_{N_{0}}%
\varphi\right)  =E_{N_{0}}\varphi=\varphi$, which implies that $\left\Vert
E_{N}P_{N}\varphi-\varphi\right\Vert _{\infty}=0$ for $N\geq N_{0}$.
\end{proof}

\begin{corollary}
[Condition C]\label{Cor2}Assume Conditions A and B, then $A$ is densely
defined linear operator in $X_{\infty}$, and there exists $\lambda_{0}%
\in\left(  -\infty,0\right)  \cap\rho\left(  A\right)  $ such that for all
$\varphi$ in a dense subset of $X_{\infty}$,
\[
\lim_{N\rightarrow\infty}\left\Vert E_{N}\left(  A_{N}-\lambda_{0}\right)
^{-1}P_{N}\varphi-\left(  A-\lambda_{0}\right)  ^{-1}\varphi\right\Vert
_{\infty}=0.
\]

\end{corollary}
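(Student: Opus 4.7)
The plan is to reduce the resolvent convergence to the already established convergence of $A_N P_N\varphi \to P_N A\varphi$ on test functions (Lemma \ref{Lemma6}), together with the elementary uniform bound $\|(A_N-\lambda_0)^{-1}\|\le 1/|\lambda_0|$. First I would choose any $\lambda_0\in(-\infty,0)$; by Lemma \ref{Lemma_E} the spectrum $\sigma(A)\subset[0,2]$, so $\lambda_0\in\rho(A)$ automatically, and the density/boundedness of $A$ are in hand. The key preliminary step is to observe that because $-A_N$ generates the contraction semigroup $e^{-tA_N}$ on $X_N$ (Condition B, Lemma \ref{Lemma4}), the Hille--Yosida inequality gives $\|(A_N+\mu)^{-1}\|\le 1/\mu$ for every $\mu>0$, hence
\[
\|(A_N-\lambda_0)^{-1}\|_{\mathfrak{B}(X_N,X_N)}\le \frac{1}{|\lambda_0|}\qquad\text{uniformly in }N.
\]

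For the dense subset I would take $\mathcal{V}:=(A-\lambda_0)\mathcal{D}(\mathbb{Q}_p^n)$. Density of $\mathcal{V}$ in $X_\infty$ follows because $A-\lambda_0:X_\infty\to X_\infty$ is a bounded bijection with bounded inverse (Lemma \ref{Lemma_E}), hence a topological isomorphism, and $\mathcal{D}(\mathbb{Q}_p^n)$ is dense in $X_\infty$.

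Now fix $\varphi\in\mathcal{V}$ and write $\varphi=(A-\lambda_0)\psi$ with $\psi\in\mathcal{D}(\mathbb{Q}_p^n)$, so that $(A-\lambda_0)^{-1}\varphi=\psi$. The natural split is
\[
\bigl\|E_N(A_N-\lambda_0)^{-1}P_N\varphi-\psi\bigr\|_\infty
\le \bigl\|(A_N-\lambda_0)^{-1}P_N\varphi-P_N\psi\bigr\|_\infty
+\bigl\|E_NP_N\psi-\psi\bigr\|_\infty,
\]
using $\|E_N\|\le 1$ (Condition A). The second summand tends to $0$ by Lemma \ref{Lemma_0}/Lemma \ref{Lemma6}. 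For the first, the algebraic identity
\[
(A_N-\lambda_0)^{-1}P_N\varphi-P_N\psi
=(A_N-\lambda_0)^{-1}\bigl[P_N(A-\lambda_0)\psi-(A_N-\lambda_0)P_N\psi\bigr]
=(A_N-\lambda_0)^{-1}\bigl[P_NA\psi-A_NP_N\psi\bigr]
\]
(the $\lambda_0 P_N\psi$ terms cancel) combined with the uniform resolvent bound yields
\[
\bigl\|(A_N-\lambda_0)^{-1}P_N\varphi-P_N\psi\bigr\|_\infty
\le \frac{1}{|\lambda_0|}\bigl\|P_NA\psi-A_NP_N\psi\bigr\|_\infty\to 0
\]
by Lemma \ref{Lemma6}, since $\psi\in\mathcal{D}(\mathbb{Q}_p^n)$.

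The main obstacle is really the uniform resolvent bound; everything else is bookkeeping. Once one identifies that Condition B (the contractivity of $e^{-tA_N}$, which ultimately comes from the $Q$-matrix property in Theorem \ref{Theorem2B} and $P^{(N)}(t)\mathbf{1}=\mathbf{1}$) yields $\|(A_N-\lambda_0)^{-1}\|\le 1/|\lambda_0|$ independently of $N$, the telescoping argument above is standard. No compactness or further regularity on $\psi$ is needed because Lemma \ref{Lemma6} already gives \emph{exact} equality $A_NP_N\psi=P_NA\psi$ once $N$ exceeds the parameter of local constancy and support index of $\psi$, which actually makes the first error term vanish for all sufficiently large $N$.
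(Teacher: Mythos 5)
Your proof is correct. The paper itself disposes of this corollary with a one-line citation (``See e.g.\ Lemma 5.4.1 in \cite{Milan}''), whereas you have reconstructed, in full, the standard Trotter--Kato-type argument that the cited lemma encapsulates: the uniform resolvent bound $\left\Vert \left(A_{N}-\lambda_{0}\right)^{-1}\right\Vert \leq 1/\left\vert \lambda_{0}\right\vert$ extracted from the contraction property of $e^{-tA_{N}}$ (Condition B) via the Laplace-transform representation of the resolvent, the choice of the dense set $\left(A-\lambda_{0}\right)\mathcal{D}(\mathbb{Q}_{p}^{n})$ (legitimate because $A-\lambda_{0}$ is a topological isomorphism of $X_{\infty}$ by Lemma \ref{Lemma_E}), and the telescoping identity reducing everything to $\left\Vert P_{N}A\psi-A_{N}P_{N}\psi\right\Vert _{\infty}\rightarrow 0$ from Lemma \ref{Lemma6}. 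All steps check out: the cancellation of the $\lambda_{0}P_{N}\psi$ terms is exact, $\left\Vert E_{N}\right\Vert=1$ by Condition A, and your closing observation that Lemma \ref{Lemma6} actually yields $A_{N}P_{N}\psi=P_{N}A\psi$ for all $N$ past the support and constancy indices of $\psi$ is right, so the first error term vanishes identically for large $N$. What your version buys is a self-contained proof that makes visible exactly which hypotheses (contractivity of the approximating semigroups, consistency of $A_{N}P_{N}$ with $P_{N}A$ on test functions) drive the resolvent convergence; what the paper's citation buys is brevity and alignment with the abstract framework of \cite{Milan} that is reused later in Theorem \ref{Theorem4}.
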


\begin{proof}
See e.g. Lemma 5.4.1 in \cite{Milan}.
\end{proof}

\subsection{Finite Approximations for $p$-adic reaction-ultradiffusion
equations}

Our goal is to approximate the solution $u\left(  x,t\right)  $ of the Cauchy
Problem (\ref{Cauchy_problem_3}) in $X_{\infty}$ using only that $u_{0}\left(
x\right)  \in X_{\infty}$ and $-1\leq u_{0}\left(  x\right)  \leq1$. The
techniques for constructing such approximations are well-known, here we use
reference \cite[Section 5.4]{Milan}. It is possible to approximate $u(x,t)$
without using any a priori information on the initial solution, however this
requires to impose to the nonlinearity $f$ to be globally Lipschitz, this last
condition reduces significantly the potentials $W$ to which we can apply our results.

The discretization of the Cauchy problem (\ref{Cauchy_problem_3}) in the
spaces $X_{N}$ takes the following form:%
\begin{equation}
\left\{
\begin{array}
[c]{l}%
\frac{d}{dt}u_{N}\left(  t\right)  +A_{N}u_{N}\left(  t\right)  =-\lambda
P_{N}f\left(  E_{N}u_{N}\left(  t\right)  \right) \\
u_{N}\left(  0\right)  =P_{N}u_{0}.
\end{array}
\right.  \label{Cauchy_problem_4}%
\end{equation}
By taking $P_{N}u_{0}\left(  x\right)  =\sum_{\boldsymbol{i}\in G_{N}^{n}%
}u_{0}\left(  \boldsymbol{i}\right)  \Omega\left(  p^{N}\left\Vert
x-\boldsymbol{i}\right\Vert _{p}\right)  $ and identifying $u_{N}\left(
t\right)  $ with the column vector $\left[  u_{N}\left(  \boldsymbol{i}%
,t\right)  \right]  _{\boldsymbol{i}\in G_{N}^{n}}$, we can rewrite the Cauchy
problem (\ref{Cauchy_problem_4}) as
\begin{equation}
\left\{
\begin{array}
[c]{l}%
\frac{d}{dt}\left[  u_{N}\left(  \boldsymbol{i},t\right)  \right]
_{\boldsymbol{i}\in G_{N}^{n}}+A^{\left(  N\right)  }\left[  u_{N}\left(
\boldsymbol{i},t\right)  \right]  _{I\in G_{N}^{n}}=-\lambda\left[  f\left(
u_{N}\left(  \boldsymbol{i},t\right)  \right)  \right]  _{\boldsymbol{i}\in
G_{N}^{n}}\\
\left[  u_{N}\left(  \boldsymbol{i},0\right)  \right]  _{\boldsymbol{i}\in
G_{N}^{n}}=\left[  u_{0}\left(  \boldsymbol{i}\right)  \right]
_{\boldsymbol{i}\in G_{N}^{n}},
\end{array}
\right.  \label{Cauchy_problem_5}%
\end{equation}
cf. Lemma \ref{Lemma_C}.

\begin{theorem}
\label{Theorem4}(i) $-A$ is the generator of a strongly continuous semigroup
$\left\{  e^{-tA}\right\}  _{t\geq0}$ on $X_{\infty}$. Moreover, $\left\Vert
e^{-tA}\right\Vert \leq1$ for $t\geq0$ and
\[
\lim_{N\rightarrow\infty}\sup_{t\geq0}e^{bt}\left\Vert E_{N}e^{-A_{N}t}%
P_{N}\varphi-e^{-tA}\varphi\right\Vert _{\infty}=0\text{ for all }\varphi\in
X_{\infty}\text{, }b\in\left(  0,\infty\right)  .
\]

(ii) Take $u_{0}\left(  x\right)  \in X$ with $-1\leq u_{0}\left(  x\right)
\leq1$. Let $u$ be the solution of (\ref{Cauchy_problem_3}) and let $u_{N}$ be
the solution of (\ref{Cauchy_problem_4}). Then
\[
\lim_{N\rightarrow\infty}\sup_{0\leq t\leq T}\left\Vert E_{N}u_{N}\left(
t\right)  -u\left(  t\right)  \right\Vert _{\infty}=0.
\]

\end{theorem}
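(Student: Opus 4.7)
The plan is to treat part (i) as an application of the general semigroup-approximation theorem invoked in Corollary~\ref{Cor2}, and then to push the convergence through the nonlinear term in part (ii) via Duhamel's formula combined with Gronwall's inequality.

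For part (i), the generation of a contraction semigroup $e^{-tA}$ on $X_{\infty}$ was essentially established in the proof of Theorem~\ref{Theorem3}: $A$ is bounded by Lemma~\ref{Lemma_E} and $m$-dissipative, so the Hille--Yosida--Phillips theorem applies and gives $\|e^{-tA}\|\le 1$. The approximation statement follows from the Trotter--Kato-type theorem from \cite{Milan}, whose hypotheses are precisely Conditions A, B, and C verified in Lemma~\ref{Condition A}, Lemma~\ref{Lemma4}, and Corollary~\ref{Cor2}.

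For part (ii), the variation-of-constants formulas
\begin{align*}
u(t) &= e^{-tA}u_{0} - \lambda\int_{0}^{t}e^{-(t-s)A}f(u(s))\,ds,\\
u_{N}(t) &= e^{-tA_{N}}P_{N}u_{0} - \lambda\int_{0}^{t}e^{-(t-s)A_{N}}P_{N}f(E_{N}u_{N}(s))\,ds,
\end{align*}
together with applying $E_{N}$ to the second identity, subtracting, and inserting $E_{N}e^{-(t-s)A_{N}}P_{N}f(u(s))$, yield the decomposition $E_{N}u_{N}(t)-u(t) = R_{N}^{0}(t) + \lambda R_{N}^{1}(t) + \lambda R_{N}^{2}(t)$ with
\begin{align*}
R_{N}^{0}(t) &:= E_{N}e^{-tA_{N}}P_{N}u_{0}-e^{-tA}u_{0},\\
R_{N}^{1}(t) &:= \int_{0}^{t}\bigl(e^{-(t-s)A}-E_{N}e^{-(t-s)A_{N}}P_{N}\bigr)f(u(s))\,ds,\\
R_{N}^{2}(t) &:= \int_{0}^{t}E_{N}e^{-(t-s)A_{N}}P_{N}\bigl(f(u(s))-f(E_{N}u_{N}(s))\bigr)\,ds.
\end{align*}
Crucially, by Remark~\ref{Nota_mild_sol} applied in both $X_{\infty}$ and $X_{N}$, the hypothesis $-1\le u_{0}\le 1$ forces both $u(s,\cdot)$ and $E_{N}u_{N}(s,\cdot)$ to take values in $[-1,1]$; since $f\in C^{2}(\mathbb{R})$, $f$ is then bounded on this range by $M:=\max_{|v|\le 1}|f(v)|$ and Lipschitz there with constant $L:=\max_{|v|\le 1}|f'(v)|$.

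The bound $\sup_{0\le t\le T}\|R_{N}^{0}(t)\|_{\infty}\to 0$ follows directly from part (i) applied to $u_{0}\in X_{\infty}$. For $R_{N}^{1}$, the integrand converges to zero pointwise in $s$ and uniformly in $t\in[0,T]$ by part (i) applied to $f(u(s))\in X_{\infty}$, and is uniformly dominated by $2M$; dominated convergence then yields $\sup_{0\le t\le T}\|R_{N}^{1}(t)\|_{\infty}\to 0$. For $R_{N}^{2}$, the contraction bound $\|E_{N}e^{-\tau A_{N}}P_{N}\|\le 1$ (from Lemma~\ref{Condition A} and Lemma~\ref{Lemma4}) combined with the Lipschitz estimate gives $\|R_{N}^{2}(t)\|_{\infty}\le L\int_{0}^{t}\|E_{N}u_{N}(s)-u(s)\|_{\infty}\,ds$. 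Setting $\varepsilon_{N}(T):=\sup_{0\le t\le T}(\|R_{N}^{0}(t)\|_{\infty}+\lambda\|R_{N}^{1}(t)\|_{\infty})$, Gronwall's inequality yields $\sup_{0\le t\le T}\|E_{N}u_{N}(t)-u(t)\|_{\infty}\le\varepsilon_{N}(T)\,e^{\lambda L T}\to 0$ as $N\to\infty$. The main technical obstacle is justifying the uniformity in $t$ and $s$ of the semigroup convergence inside $R_{N}^{1}$; once this is extracted from part (i) together with the continuity of $s\mapsto f(u(s))$ into $X_{\infty}$, the rest of the argument is a routine Gronwall/dominated-convergence combination.
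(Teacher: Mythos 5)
Your argument is correct, and in substance it follows the same route as the paper; the difference is that the paper's proof consists entirely of citations (part (i) to the Trotter--Kato-type Theorem 5.4.2 of \cite{Milan} via Conditions A, B, C, and part (ii) to Theorem 5.4.7 of \cite{Milan}), whereas you actually carry out the Duhamel-plus-Gronwall computation that those theorems encapsulate. This unpacking has real value here, because the paper's proof of part (ii) invokes a ``Condition D'' that is never stated anywhere in the text; in \cite{Milan} that condition is precisely a local Lipschitz/boundedness hypothesis on the nonlinearity along the trajectories, and your verification that both $u$ and $E_{N}u_{N}$ remain in $[-1,1]$ (so that $f$ may be replaced by its restriction to a compact interval, where it is bounded and Lipschitz) is exactly the content of that missing condition. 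Two small points would tighten the write-up. First, Remark~\ref{Nota_mild_sol} literally concerns the equation of Theorem~\ref{Theorem3A}, whose linear part is $J_{N}\ast u-u$, while the discretized problem (\ref{Cauchy_problem_4}) has linear part $-A_{N}u=J_{N}\ast u-j_{N}u$; the invariance of $[-1,1]$ still holds for the latter (for instance because $e^{-tA^{(N)}}$ is a stochastic matrix by Theorem~\ref{Theorem2B} and $f(\pm1)=0$ with $f^{\prime}(\pm1)>0$), but this should be said explicitly rather than obtained by citing the remark verbatim. Second, for $R_{N}^{1}$ the passage to $\sup_{0\leq t\leq T}$ is cleanest if you bound $\left\Vert R_{N}^{1}(t)\right\Vert _{\infty}$ by the $t$-independent quantity $\int_{0}^{T}\sup_{\tau\geq0}\left\Vert \left(  e^{-\tau A}-E_{N}e^{-\tau A_{N}}P_{N}\right)  f\left(  u\left(  s\right)  \right)  \right\Vert _{\infty}ds$ and apply dominated convergence to that, or equivalently use that $\left\{  f\left(  u\left(  s\right)  \right)  :s\in\left[  0,T\right]  \right\}  $ is compact in $X_{\infty}$ while the operator differences are uniformly bounded by $2$; you gesture at this but it deserves one explicit line. (Incidentally, the weight $e^{bt}$ with $b>0$ in the statement of part (i) must be a misprint for $e^{-bt}$, since the error is merely bounded by $2\left\Vert \varphi\right\Vert _{\infty}$ uniformly in $t$; only the $\sup_{0\leq t\leq T}$ version is used in part (ii), so this does not affect your argument.)
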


\begin{proof}
The first part follows from Conditions A, B, C by using Theorem 5.4.2 in
\cite{Milan}. The proof of the second part is based on the estimation of the
$\left\Vert \cdot\right\Vert _{\infty}$-norm of the difference of a mild
solution of (\ref{Cauchy_problem_3}) and a mild solution of
(\ref{Cauchy_problem_4}). The required estimation follows from conditions A,
B, C, D by using Theorem 5.4.7 in \cite{Milan}.
\end{proof}

\textbf{Acknowledgement.} The author wishes to thank to Sergii Torba, the
editors of Nonlinearity, and the referees for many useful comments and
discussions, which led to an improvement of this work. In addition, the author
thanks Luis Gorostiza for telling him about reference \cite{D-Ma-SM}.

\bigskip


\begin{thebibliography}{99}                                                                                               %


\bibitem {Alberio et al}Albeverio S., Khrennikov A. Yu., Shelkovich V. M.,
Theory of $p$-adic distributions: linear and nonlinear models. London
Mathematical Society Lecture Note Series, 370. Cambridge University Press,
Cambridge, 2010.

\bibitem {Alberti-B-1}Alberti Giovanni, Bellettini Giovanni, A non-local
anisotropic model for phase transitions: asymptotic behaviour of rescaled
energies, European J. Appl. Math. 9 (1998), no. 3, 261--284.

\bibitem {Alberti-B-2}Alberti Giovanni, Bellettini Giovanni, A nonlocal
anisotropic model for phase transitions. I. The optimal profile problem, Math.
Ann. 310 (1998), no. 3, 527--560.

\bibitem {Andreu-Vaillo et al}Andreu-Vaillo Fuensanta, Maz\'{o}n Jos\'{e} M.,
Rossi Julio D., Toledo-Melero J. Juli\'{a}n, Nonlocal diffusion problems.
Mathematical Surveys and Monographs, 165. American Mathematical Society,
Providence, RI; Real Sociedad Matem\'{a}tica Espa\~{n}ola, Madrid, 2010.

\bibitem {Av-2}Avetisov V. A., Bikulov A. Kh., Zubarev, A. P., First passage
time distribution and the number of returns for ultrametric random walks, J.
Phys. A 42 (2009), no. 8, 085003, 18 pp.

\bibitem {Av-4}Avetisov V. A., Bikulov A. Kh., Osipov V. A., $p$-adic
description of characteristic relaxation in complex systems, J. Phys. A 36
(2003), no. 15, 4239--4246.

\bibitem {Av-5}Avetisov V. A., Bikulov A. H., Kozyrev S. V., Osipov V. A., $p
$-adic models of ultrametric diffusion constrained by hierarchical energy
landscapes, J. Phys. A 35 (2002), no. 2, 177--189.

\bibitem {Av-7}Avetisov V. A., Bikulov A. Kh., Kozyrev S. V., Description of
logarithmic relaxation by a model of a hierarchical random walk. (Russian)
Dokl. Akad. Nauk 368 (1999), no. 2, 164--167.

\bibitem {Av-8}Avetisov V. A., Bikulov A. H., Kozyrev S. V.m Application of
$p$-adic analysis to models of breaking of replica symmetry, J. Phys. A 32
(1999), no. 50, 8785--8791.

\bibitem {Bachas-Hu}Bachas Constantin P., Huberman B. A., Complexity and
ultradiffusion, J. Phys. A 20 (1987), no. 14, 4995--5014.

\bibitem {Bates-Fife}Bates Peter W., Fife, Paul C., Ren Xiaofeng, Wang
Xuefeng, Traveling waves in a convolution model for phase transitions, Arch.
Rational Mech. Anal. 138 (1997), no. 2, 105--136.

\bibitem {BatesChmdj}Bates Peter W., Chmaj Adam, An integrodifferential model
for phase transitions: stationary solutions in higher space dimensions, J.
Statist. Phys. 95 (1999), no. 5-6, 1119--1139.

\bibitem {Becker  et al}Becker O. M., Karplus M., The topology of
multidimensional protein energy surfaces: theory and application to peptide
structure and kinetics, J. Chem.Phys. 106, 1495--1517 (1997).

\bibitem {Bendikov}Bendikov A., Heat kernels for isotropic-like Markov
generators on ultrametric spaces: a survey, p-Adic Numbers Ultrametric Anal.
Appl. 10 (2018), no. 1, 1--11.

\bibitem {Bikulov}Bikulov Albert Kh., On solution properties of some types of
p-adic kinetic equations of the form reaction-diffusion, $p$-Adic Numbers
Ultrametric Anal. Appl. 2 (2010), no. 3, 187--206.

\bibitem {C-H}Cazenave Thierry, Haraux Alain, An introduction to semilinear
evolution equations. Oxford University Press, 1998.

\bibitem {D-Ma-SM}Dellacherie Claude, Martinez Servet, San Martin Jaime,
Inverse $M$-matrices and ultrametric matrices. Lecture Notes in Mathematics,
2118. Springer, 2014.

\bibitem {Dra-Kh-K-V}Dragovich B., Khrennikov A. Yu., Kozyrev S. V., Volovich,
I. V., On $p$-adic mathematical physics, p-Adic Numbers Ultrametric Anal.
Appl. 1 (2009), no. 1, 1--17.

\bibitem {Fife}Fife Paul C., Mathematical aspects of reacting and diffusing
systems. Lecture Notes in Biomathematics, 28. Springer-Verlag, Berlin-New
York, 1979.

\bibitem {Fraunfelder et al}Frauenfelder H, Chan S. S., Chan W. S. (eds), The
Physics of Proteins. Springer-Verlag, 2010.

\bibitem {Gind}Grindrod Peter, Patterns and waves. The theory and applications
of reaction-diffusion equations. Oxford Applied Mathematics and Computing
Science Series. The Clarendon Press, Oxford University Press, New York, 1991.

\bibitem {Koch}Kochubei Anatoly N., Pseudo-differential equations and
stochastics over non-Archimedean fields. Marcel Dekker, Inc., New York, 2001.

\bibitem {KKZuniga}Khrennikov Andrei, Kozyrev Sergei, Z{\'{u}}{\~{n}}%
iga-Galindo W. A., Ultrametric Equations and its Applications. Encyclopedia of
Mathematics and its Applications (168). Cambridge University Press, 2018.

\bibitem {Kozyrev  SV}Kozyrev S. V., Methods and Applications of Ultrametric
and $p$-Adic Analysis: From Wavelet Theory to Biophysics, Sovrem. Probl. Mat.,
12, Steklov Math. Inst., RAS, Moscow, 2008, 3--168.

\bibitem {Khr-Koz4}Khrennikov A.Yu., Kozyrev S.V., Replica symmetry breaking
related to a general ultrametric space I: replica matrices and functionals,
Physica A: Statistical Mechanics and its Applications, V.359 (2006), P.222-240.

\bibitem {Khr-Koz5}Khrennikov A.Yu., Kozyrev S.V., Replica symmetry breaking
related to a general ultrametric space II: RSB solutions and the \$n%
$\backslash$%
to0\$ limit, Physica A: Statistical Mechanics and its Applications, V.359
(2006), P.241-266.

\bibitem {Khr-Koz6}Khrennikov A.Yu., Kozyrev S.V., Replica symmetry breaking
related to a general ultrametric space III: The case of general measure.
Physica A: Statistical Mechanics and its Applications, V.378 (2007), N.2. P.283-298.

\bibitem {M-P-V}M\'{e}zard Marc, Parisi Giorgio, Virasoro Miguel Angel, Spin
glass theory and beyond. World Scientific, 1987.

\bibitem {Milan}Miklav\v{c}i\v{c} Milan, Applied functional analysis and
partial differential equations. World Scientific Publishing Co., Inc., River
Edge, NJ, 1998.

\bibitem {O-S}Ogielski Andrew T., Stein D. L., Dynamics on ultrametric spaces,
Phys. Rev. Lett. 55 (1985), no. 15, 1634--1637.

\bibitem {R-T-V}Rammal R., Toulouse G., Virasoro M. A., Ultrametricity for
physicists, Rev. Modern Phys. 58 (1986), no. 3, 765--788.

\bibitem {smaller}Smoller Joel, Shock waves and reaction-diffusion equations.
Springer-Verlag, New York, 1994.

\bibitem {Taibleson}Taibleson M. H., Fourier analysis on local fields.
Princeton University Press, 1975.

\bibitem {TB-zuniga}Torresblanca-Badillo Anselmo, Z\'{u}\~{n}iga-Galindo W.
A., Ultrametric Diffusion, Exponential Landscapes, and the First Passage Time
Problem, arXiv:1511.08757.

\bibitem {Ueyama}Ueyama Eizo, Hosoe Shigeyuki, Reaction-diffusion equation on
a graph and phase transition on bistable media in Proceedinds of SICE Annual
Conference, August 20-23, 2012, Akita University, Akita, Japan, pp. 1798-1801.

\bibitem {V-V-Z}Vladimirov V. S., Volovich I. V., Zelenov E. I., $p$-adic
analysis and mathematical physics. World Scientific, 1994.

\bibitem {Yin-Zhang}Yin G. George, Zhang Qing, Continuous-time Markov chains
and applications. A two-time-scale approach. 2nd ed. Springer 2013.

\bibitem {Zuniga-LNM-2016}Z\'{u}\~{n}iga-Galindo W. A. , Pseudodifferential
equations over non-Archimedean spaces. Lectures Notes in Mathematics 2174,
Springer, 2016.

\bibitem {Zwanzig}Zwanzig Robert, Simple model of protein folding kinetics,
Proc. Nat. Acad. Sci. U.S.A. 92 (1995), no. 21, 9801--9804.
\end{thebibliography}
\end{document}